\DeclareMathOperator*{\argmax}{arg\,max}
\newtheorem{theorem}{Theorem}[section]
\newtheorem{lemma}[theorem]{Lemma}
\newtheorem{corollary}[theorem]{Corollary}
\newtheorem{definition}[theorem]{Definition}
\newtheorem{observation}[theorem]{Observation}
\newcommand{\T}{{\mathcal T}}
\newcommand{\C}{{\mathcal C}}
\newcommand{\NN}{{\mathbb N}}
 \theoremstyle{definition}
 \theoremstyle{remark}
 \numberwithin{equation}{section}
\begin{document}

%-------------------------------------------------------------------------
% editorial commands: to be inserted by the editorial office
%
%\firstpage{1} \volume{228} \Copyrightyear{2004} \DOI{003-0001}
%
%
%\seriesextra{Just an add-on}
%\seriesextraline{This is the Concrete Title of this Book\br H.E. R and S.T.C. W, Eds.}
%
% for journals:
%
%\firstpage{1}
%\issuenumber{1}
%\Volumeandyear{1 (2004)}
%\Copyrightyear{2004}
%\DOI{003-xxxx-y}
%\Signet
%\commby{inhouse}
%\submitted{March 14, 2003}
%\received{March 16, 2000}
%\revised{June 1, 2000}
%\accepted{July 22, 2000}
%
%
%
%---------------------------------------------------------------------------
%Insert here the title, affiliations and abstract:
%

\title[The MP distance between phylogenetic trees]{On the Maximum Parsimony distance\\between phylogenetic trees}

%----------Author 1

\author{Mareike Fischer}

\address{Ernst-Moritz-Arndt University of Greifswald \\ Department for Mathematics and Computer Science \\ Walther-Rathenau-Str. 47 \\ 17487 Greifswald, Germany}
\email{email@mareikefischer.de}

%----------Author 2
\author{Steven Kelk}
\address{Department of Knowledge Engineering (DKE)\\ Maastricht University\\ P.O. Box 616, 6200
MD Maastricht\\ The Netherlands}
\email{steven.kelk@maastrichtuniversity.nl}

%----------classification, keywords, date

\subjclass{05C15; 05C35; 90C35; 92D15}

\keywords{Maximum Parsimony, Tree Metric, Subtree prune and regraft (SPR)}

\date{\today}
%----------additions

%%% ----------------------------------------------------------------------

\begin{abstract}
Within the field of phylogenetics there is great interest in distance measures to quantify the dissimilarity of two trees. Here, based on an idea of Bruen and Bryant,
we propose and analyze a new distance measure: the Maximum Parsimony (MP) distance.
This is based on the difference of the parsimony scores of a single character on both trees under consideration, and the goal is to find the character which maximizes this difference. In this article we show that this new distance is a metric and provides a lower bound to the well-known Subtree Prune and Regraft (SPR) distance. We also
show that to compute the MP distance it is sufficient to consider only characters that are convex on one of the trees,
and prove several additional structural properties of the distance. On the complexity side, we prove that calculating the MP distance is in general NP-hard, and identify an interesting island of tractability in which the distance can be calculated in polynomial time.
\end{abstract}
%%% ----------------------------------------------------------------------
\maketitle
%%% ----------------------------------------------------------------------
%\tableofcontents

\section{Introduction}\label{sec:intro}

Finding the optimal tree explaining a given dataset, e.g. a DNA alignment, is one of the biggest challenges in modern phylogenetics. One challenge is that for most optimization criteria finding the best tree is NP-hard (cf. \citep{foulds_graham_1982,roch_2006,chor_tuller_2006}). For this reason heuristics based
on local neighborhood search are often used. A second challenge concerns the fact that, for mathematical and/or biological reasons, many distinct tree solutions may be generated, and some way of determining their relative similarity is required. Both problems can be addressed using SPR (Subtree-Prune-and-Regraft). A single SPR move involves moving to a neighboring tree by detaching a branch and re-attaching it elsewhere. The SPR distance is the minimum number of SPR moves required to transform one tree into another.
Unfortunately, computing the SPR distance is hard {\citep{bordewich_semple_2004}, \citep{bonet_stjohn_2010}}. Moreover, for SPR it makes a big difference whether rooted or unrooted trees are considered \citep{bonet_stjohn_2010}. Other metrics, which can be calculated in polynomial time, have been proposed, like e.g. the so-called Robinson-Foulds metric \citep{robinson_foulds_1981}, but are also sometimes criticized for lack of biological plausibility {\citep{lin2012}}. 

In this paper, we propose a new metric, namely the Maximum Parsimony (MP) distance, which is biologically feasible in the parsimony sense. Our metric basically requires the search for a character which has a low parsimony score on one of the trees involved and a high score on the other one. As the parsimony score is independent of the root position, our metric applies both to rooted and unrooted trees. We analyze a number of structural properties of the metric, and explore the computational complexity of computing it. Indeed, this research was initially inspired by a question posed by Bruen and Bryant: is the MP distance efficiently computable, and could it act as a tractable approximation of the SPR distance?
We show that there is a link, at least in one direction, between the two measures: the MP distance provides a lower bound on the SPR metric.  Moreover, we show that finding the character maximizing the difference in parsimony performance on two trees can be achieved by considering only characters which are convex on one of the trees. Relatedly, we provide combinatorial bounds on the MP distance and observe that to compute this distance it is not sufficient to restrict our search to characters with a fixed number of states.

Despite the possibility of restricting the problem to convex characters, we prove that calculating the MP distance is, unfortunately, NP-hard. This hardness also holds when characters with at most two states are considered; interestingly neither hardness result is directly implied by the other. On the positive side we show, by exploiting a classical result from the tree partitioning literature, that the MP distance can be computed in polynomial time when one of the trees is a so-called star tree.

\section{Preliminaries and Notation}\label{sec:prelim}
We need to introduce some notation before presenting our results.

Recall that an {\it unrooted phylogenetic $X$-tree} is a tree $\T =(V(\T),E(\T))$ on a leaf set $X=\{1,\ldots,m\} \subset V(\T)$. Such a tree is named {\it binary} if it has only vertices of degree 1 (leaves) or 3 (internal vertices). A  {\it rooted phylogenetic $X$-tree} additionally has one vertex specified as the {\it root}, and such a rooted tree is named {\it binary} if the root has degree 2 and all other vertices are of degree 1 (leaves) or 3 (internal vertices). We often denote trees in the well-known Newick format \citep{felsenstein_2000}, which uses nested parentheses to group species together according to their degree of relatedness. For instance, the tree $((1,2),(3,4))$ is a tree with two so-called cherries $(1,2)$ and $(3,4)$ and a root between the two. Unrooted trees have more than two groups of parentheses at the uppermost level.

Furthermore, recall that a {\it character} $f$ is a function $f: X\rightarrow \C$ for some set $\C:=\{c_1, c_2, c_3, \ldots, c_k \}$ of $k$ {\em character states} ($k \in \NN$). Often, $k$ is assumed to equal 4 in order for $\C$ to represent the DNA alphabet $\{A,C,G,T\}$, but in the present paper $k$ is not restricted this way but can be any natural number. Note that in the special cases where $|f(X)|=2$, $|f(X)|=3$ or $|f(X)|=4$, we also refer to $f$ as a binary, ternary or quaternary character, respectively. In general, when $|f(X)|=r$, $f$ is called an \emph{$r$-state character}. In order to shorten the notation, it is customary to write for instance $f=AACC$ instead of $f(1)=A$, $f(2)=A$, $f(3)=C$ and $f(4)=C$. Note that each $r$-state character $f$ on taxon set $X$ partitions $X$ into $r$ non-empty and non-overlapping subsets $X_i$, $i=1,\ldots,r$, where $x_j,x_k \in X_i$ if and only if $f(x_j)=f(x_k)$. When $r=2$, the resulting bipartition of $X$ into the non-empty and disjoint sets $X_1$ and $X_2$ is called {\it $X$-split} and denoted by $X_1|X_2$. Note that the branches of a tree $\T$ induce a collection of $X$-splits: each edge separates some leaves from the others. Thus, labelling the leaves on one side of the branch with, say, $A$, and the others, say, with $C$, gives a binary character. The collection of all splits induced by $\T$ this way will be referred to as $\Sigma(\T)$. A tree $\T_1$ is called a {\it refinement} of another tree $\T_2$, if $\Sigma(\T_2)\subseteq \Sigma(\T_1)$. The {\em star tree} is a tree $\T$ which has only one internal node with which all leaves are directly connected via a pendant edge, i.e. all splits induced by $\T$ are of the kind $x| X \setminus \{x\}$. Note that all phylogenetic $X$-trees are refinements of the star tree.

A {\it refinement} of an $r$-state character $f$ on $X$ is an $\hat{r}$-state character $\hat{f}$ on $X$ with $r\leq\hat{r}$ such that the partitioning induced by $\hat{f}$ refines that given by $f$. This means that if $f$ induces the partitioning $X_1|X_2|\ldots|X_r$ and $\hat{f}$ induces the partitioning $Y_1|Y_2|\ldots|Y_{\hat{r}}$, where $X_i,Y_j$ are subsets of $X$ for all $i=1,\ldots,r$, $j=1,\ldots,\hat{r}$, then for all $j=1,\ldots,\hat{r}$ there exists an $i\in\{1,\ldots,r\}$ such that $Y_j \subseteq X_i$.

An {\it extension} of $f$ to $V(\T)$ is a map $g: V(\T)\rightarrow \C$ such that $g(i) = f(i)$ for all $i$ in $X$. For such an extension $g$ of $f$, we denote by $l_{g}(\T)$ the number of edges $e=\{u,v\}$ in $\T$ on which a {\em substitution} occurs, i.e. where $g(u) \neq g(v)$. Such substitutions are also often referred to as {\em mutations} or {\em changes}. 
The {\em parsimony score} or {\em parsimony length} of a character $f$ on $\T$, denoted by $l_{f}(\T)$, is obtained by minimizing $l_{g}(\T)$ over all possible extensions $g$ of $f$. The parsimony score of a character $f$ on a phylogenetic tree $\T$ can easily be calculated with the Fitch algorithm \citep{fitch_1971} if $\T$ is binary. Moreover, the Fitch algorithm was generalized by Hartigan \citep{hartigan_1973} to apply also to non-binary trees. In order to simplify the notation, we will refer to both algorithms as the Fitch algorithm rather than the Fitch-Hartigan or generalized Fitch algorithm. (The only significant difference between the two algorithms is that in the non-binary
algorithm, during the bottom-up phase, a parent is allocated all states that occur \emph{most frequently} amongst its children. Letting $m$ denote the number of times a most frequent state occurs amongst the children, the number of mutations incurred is equal to the
number of children minus $m$. This generalises the intersection/union operations used by
the binary algorithm.)

Note that the Fitch algorithm can be applied both to rooted an unrooted
%binary
trees -- in the latter case, the tree can be rooted by placing an extra root node on an arbitrary edge of the tree. This implies that the parsimony score does not depend on the root position and that for the parsimony concept it does not matter if we discuss rooted or unrooted trees. This is the reason why the MP distance, which we define shortly, is unaffected by the presence, or location, of a root.

A character $f$ is said to be {\it convex} or {\it homoplasy-free} on a tree $\T$ if $l_{f}(\T)=|f|-1=r-1$, where $|f|=r$ denotes the number of character states employed by $f$. Note that if a character is convex on a certain tree, this tree minimizes its parsimony score and is therefore most parsimonious for this character, respectively. Moreover, recall that two characters are {\it compatible} if there exists a phylogenetic $X$-tree on which both of them are convex, and two splits are said to be compatible if there is a phylogenetic $X$-tree which contains both branches corresponding to the splits.

Recall that a character $f$ on a leaf set $X$ is said to be {\it informative} (with respect to parsimony) if at least two distinct character states occur more than once on $X$. Otherwise $f$ is called {\it non-informative}. Note that for a non-informative character $f$, $l_{f}(\T_i)=l_{f}(\T_j)$ for
all trees $\T_i$, $\T_j$ on the same set $X$ of leaves. 

In this paper, we refer to a character always with its underlying taxon clustering pattern in mind, i.e. for instance we do not distinguish between $AACC$, $CCAA$ and $CCGG$, and so on. Moreover, when there is no ambiguity and when the stated result holds for both rooted and unrooted trees, we often just write `tree' or `phylogenetic tree' when referring to a phylogenetic $X$-tree.

Recall that a {\it subtree prune and regraft (SPR) move} on a phylogenetic tree $\T$ is defined for unrooted trees according to, e.g. \citep{allen_steel_2001,bruen_bryant_2008}, and for rooted trees according to, e.g. \citep{linz_semple_2011}, as cutting any edge and thereby pruning a subtree, $\tilde{\T}$, and then regrafting the subtree by the same cut edge to a new vertex obtained by subdividing a pre-existing edge in $\T\setminus \tilde{\T}$. If $\T$ is binary, one can suppress degree 2 vertices in order for the resulting tree to be binary, too. We define the {\it SPR distance} $d_{SPR}$ of two unrooted phylogenetic trees $\T_1$, $\T_2$ as in \citep{bruen_bryant_2008} as the minimum number of SPR moves needed to change $\T_1$ into $\T_2$. Note that for SPR, it does make a difference whether the trees under consideration are rooted or not. When two trees have a different root position but are otherwise identical, $d_{SPR}$ is 0, but their rooted SPR distance is greater than 0. If we refer to the rooted SPR distance as defined in \citep{linz_semple_2011}, we therefore explicitly write $d_{rSPR}$.
When discussing the relationship between MP distance and SPR we shall
restrict our analysis to binary trees. This is because $d_{rSPR}$ on non-binary trees is a relatively unknown measure (although see \citep{iersel2014,collins2009,Rodrigues2007}) and is usually defined such that refinements of the original trees are permitted. These refinements lead to major technicalities and can be shown to severely weaken the relationship between MP distance and $d_{rSPR}$. Moreover, in the unrooted context, there are no major results available on $d_{SPR}$ on non-binary trees. Thus, the present paper focusses on binary SPR.

\section{Structural properties of MP distance}\label{sec:results}

\subsection{The Maximum Parsimony distance between phylogenetic trees}\label{sec:MPdist}
\subsubsection{Definition and basic properties}\label{sec:def}

We are now in a position to introduce the concept of measuring the distance between phylogenetic trees as follows.

\begin{definition}\label{def:MPdist} Let $\T_1$, $\T_2$ be two (rooted or unrooted) phylogenetic trees on a set $X$ of taxa with $|X|=n$. Then, $$d_{MP}(\T_1,\T_2):=\max\limits_{f}|l_f(\T_1)-l_f(\T_2)|$$

describes the Maximum Parsimony distance or MP distance between $\T_1$ and $\T_2$, where the maximum is taken over all characters $f$ on taxon set $X$.
\end{definition}

Where it is unambiguous from the context we will say that a character $f$ is \emph{optimal} for $\T_1, \T_2$ if $|l_f(\T_1)-l_f(\T_2)| = d_{MP}(\T_1, \T_2)$.

Before we continue with some properties of $d_{MP}$, we note that the absolute value in the definition of $d_{MP}$ is necessary in order to achieve symmetry: Consider $\T_1=(((1,3),(2,5)),(((4,7),6),8))$ and $\T_2=((((2,6),(3,5)),(1,4)),$\\$(7,8))$. These two trees are depicted by Figure \ref{antisym:ex}. In this case, the character $f=ACAGCGGA$ gives a difference of $l_f(\T_2)-l_f(\T_1)=5-2=3$. An exhaustive search through all characters on 8 taxa reveals that this is maximum; however, it also shows that there is no character $\tilde{f}$ such that $l_{\tilde{f}}(\T_1)-l_{\tilde{f}}(\T_2)=3$, as all characters on 8 taxa give at most a difference of 2. Thus, the roles of $\T_1$ and $\T_2$ cannot simply be swapped, which is why the absolute value in the definition of $d_{MP}$ is required.

\begin{figure}[ht]      \centering\vspace{0.5cm} 
       \includegraphics[width=10cm]{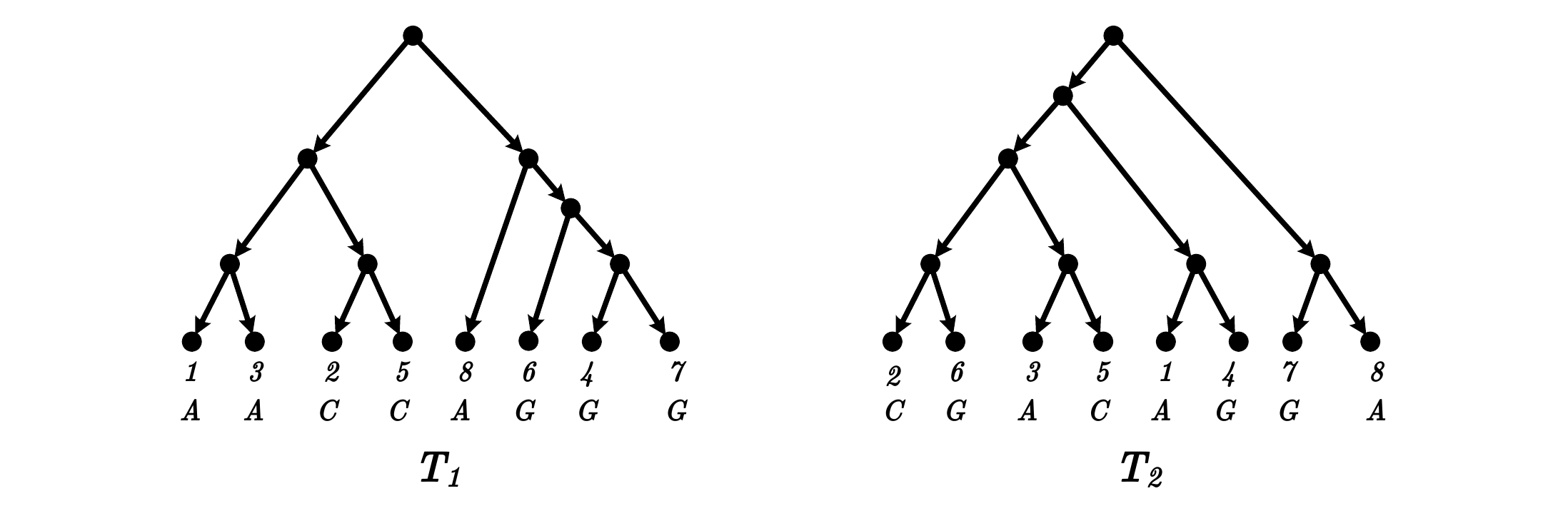}
   \caption{Two rooted binary phylogenetic $X$-trees on the same set of eight taxa, for which the character $f=ACAGCGGA$ gives a difference of $l_f(\T_2)-l_f(\T_1)=5-2=3$, but for which there is no character $\tilde{f}$ such that $l_{\tilde{f}}(\T_1)-l_{\tilde{f}}(\T_2)=3$.}\label{antisym:ex}
\end{figure}

We now formally state that the definition of $d_{MP}$ is indeed a metric.

\begin{theorem}\label{thm:metric}
The MP distance as defined in Definition \ref{def:MPdist} is a metric, i.e. it fulfills the following properties:
\begin{enumerate}
\item \label{prop:nonneg} $d_{MP}(\T_1,\T_2)\geq 0$ for all phylogenetic trees $\T_1,\T_2$,
\item \label{prop:equality} $d_{MP}(\T_1,\T_2) = 0$ if and only if $\T_1=\T_2$,
\item \label{prop:symmetry} $d_{MP}(\T_1,\T_2)=d_{MP}(\T_2,\T_1)$ for all phylogenetic trees $\T_1,\T_2$,
\item \label{prop:triangleineq} $d_{MP}(\T_1,\T_3)\leq d_{MP}(\T_1,\T_2)+d_{MP}(\T_2,\T_3)$ for all phylogenetic trees $\T_1,\T_2,\T_3$.
\end{enumerate}

\end{theorem}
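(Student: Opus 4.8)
The plan is to dispatch properties \ref{prop:nonneg}, \ref{prop:symmetry} and \ref{prop:triangleineq} by pushing the corresponding elementary facts about the absolute value on $\mathbb{R}$ through the maximum in Definition~\ref{def:MPdist}, and to concentrate the actual effort on property \ref{prop:equality}. For \ref{prop:nonneg}, observe that $|l_f(\T_1)-l_f(\T_2)|\ge 0$ for every character $f$, so its maximum over $f$ is non-negative. For \ref{prop:symmetry}, the quantity $|l_f(\T_1)-l_f(\T_2)|$ is unchanged when $\T_1$ and $\T_2$ are interchanged, and hence so is its maximum over $f$.

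For the triangle inequality \ref{prop:triangleineq}, I would first note that a character on $X$ takes at most $|X|$ values, so up to the underlying clustering pattern there are only finitely many characters on $X$ and the maximum in Definition~\ref{def:MPdist} is attained; let $f^\ast$ be optimal for $\T_1,\T_3$. Applying the ordinary triangle inequality on $\mathbb{R}$ to the integers $l_{f^\ast}(\T_1)$, $l_{f^\ast}(\T_2)$, $l_{f^\ast}(\T_3)$ gives
$$d_{MP}(\T_1,\T_3)=\bigl|l_{f^\ast}(\T_1)-l_{f^\ast}(\T_3)\bigr|\le\bigl|l_{f^\ast}(\T_1)-l_{f^\ast}(\T_2)\bigr|+\bigl|l_{f^\ast}(\T_2)-l_{f^\ast}(\T_3)\bigr|.$$
Since $f^\ast$ is just one particular character among those over which the maxima defining $d_{MP}(\T_1,\T_2)$ and $d_{MP}(\T_2,\T_3)$ range, the two summands on the right are bounded by $d_{MP}(\T_1,\T_2)$ and $d_{MP}(\T_2,\T_3)$ respectively, and \ref{prop:triangleineq} follows.

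It remains to prove \ref{prop:equality}. If $\T_1=\T_2$ then $l_f(\T_1)=l_f(\T_2)$ for every character $f$, so $d_{MP}(\T_1,\T_2)=0$. For the converse I would argue contrapositively: assuming $\T_1\ne\T_2$, I would exhibit a single character $f$ with $l_f(\T_1)\ne l_f(\T_2)$, which immediately forces $d_{MP}(\T_1,\T_2)\ge 1>0$. The key classical input is that a phylogenetic $X$-tree is determined by its split system $\Sigma(\cdot)$, so $\T_1\ne\T_2$ implies $\Sigma(\T_1)\ne\Sigma(\T_2)$; interchanging $\T_1$ and $\T_2$ if necessary (permissible by \ref{prop:symmetry}), fix a split $A|B\in\Sigma(\T_1)\setminus\Sigma(\T_2)$ and let $f$ be the binary character with $f^{-1}(c_1)=A$ and $f^{-1}(c_2)=B$. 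Colouring one side of the edge of $\T_1$ that induces $A|B$ with $c_1$ and the other side with $c_2$ gives an extension of $f$ with a single substitution, so $l_f(\T_1)=1=|f|-1$, i.e.\ $f$ is convex on $\T_1$. If we also had $l_f(\T_2)=1$, a most parsimonious extension of $f$ on $\T_2$ would possess exactly one substitution edge $e$; both components of $\T_2-e$ would then be monochromatic, so their leaf sets would be exactly $A$ and $B$, whence $A|B\in\Sigma(\T_2)$ --- a contradiction. Hence $l_f(\T_2)\ge 2>1=l_f(\T_1)$, as required.

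The only genuinely non-routine ingredient is this last step for \ref{prop:equality}, and it rests on the well-known fact that a phylogenetic tree is determined by its collection of splits, together with the elementary observation already recorded in Section~\ref{sec:prelim} that a binary character is convex on a tree $\T$ precisely when the corresponding split belongs to $\Sigma(\T)$; everything else is bookkeeping with the triangle inequality for real numbers.
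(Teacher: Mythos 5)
Your proposal is correct and follows essentially the same route as the paper: properties (1), (3) and (4) fall out of elementary facts about the absolute value (the paper's triangle-inequality argument likewise evaluates an optimal character for $\T_1,\T_3$ on the intermediate tree $\T_2$), and property (2) rests on split-induced binary characters together with the Splits Equivalence Theorem. The only cosmetic difference is that you prove the hard direction of (2) contrapositively --- picking a split in $\Sigma(\T_1)\setminus\Sigma(\T_2)$ and showing its character scores $1$ on one tree and at least $2$ on the other --- whereas the paper argues directly that $d_{MP}=0$ forces all splits of both trees to be pairwise compatible and then invokes uniqueness; both hinge on the same theorem and the same witnesses.
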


Before we can prove this theorem, we recall the following theorem from \citep[Theorem 3.1.4]{semple_steel_2003}, which goes back to \citep{buneman_1971}:

\begin{theorem}[Splits Equivalence Theorem]\label{thm:splitseq} Let $\Sigma$ be a collection of $X$-splits. Then, there is an $X$-tree $\T$ such that $\Sigma(\T)=\Sigma$ if and only if the splits in $\Sigma$ are pairwise compatible. Moreover, if such an $X$-tree exists, then, up to isomorphism, $\T$ is unique.
\end{theorem}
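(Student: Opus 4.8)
The plan is to prove the two implications of the biconditional separately and then the uniqueness clause, with essentially all of the work concentrated in the ``if'' direction. Throughout I write a split as $A|B$, where $A \cup B = X$, $A \cap B = \emptyset$, and both parts are nonempty; I recall from the definitions that two splits are compatible precisely when some single $X$-tree displays both corresponding branches, and that every pendant edge of a tree induces a \emph{trivial} split $\{x\}|X\setminus\{x\}$, so all trivial splits lie in $\Sigma(\T)$ for every $X$-tree $\T$. The necessity (``only if'') direction is then immediate: if $\Sigma = \Sigma(\T)$ for a single tree $\T$, then for any two splits $\sigma_1,\sigma_2 \in \Sigma$ that very tree $\T$ contains both corresponding branches, so $\sigma_1$ and $\sigma_2$ are compatible by definition, and hence $\Sigma$ is pairwise compatible.

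For the converse I would first recast compatibility combinatorially, proving the auxiliary fact that $A|B$ and $C|D$ are compatible if and only if at least one of the four intersections $A\cap C$, $A\cap D$, $B\cap C$, $B\cap D$ is empty. One direction is a direct construction: if, say, $A\cap C = \emptyset$, then $C \subseteq B$, and one can exhibit a tree with two appropriately nested edges realizing both splits. For the other direction I would argue by contradiction, using that deleting the two branches realizing $A|B$ and $C|D$ from a common tree leaves exactly three connected components, each of whose leaf sets lies in a single one of the four quadrants; four nonempty quadrants cannot be accommodated by three components, a contradiction.

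Armed with this characterization, the heart of the argument is a \emph{refinement lemma}: if $\T$ is an $X$-tree and $A|B$ is a split compatible with every split in $\Sigma(\T)$, then there is a \emph{unique} vertex $v$ of $\T$ such that each subtree hanging off $v$ has its leaf set entirely inside $A$ or entirely inside $B$; splitting $v$ into two adjacent vertices and distributing those subtrees accordingly yields a tree $\T'$ with $\Sigma(\T') = \Sigma(\T) \cup \{A|B\}$ and no other new split. Granting this, existence follows by starting from the star tree, whose split set is exactly the trivial splits (which may be assumed present in $\Sigma$, since a realizing tree necessarily contains them and they are compatible with everything), and then adding the nontrivial splits of $\Sigma$ one at a time: at each stage the next split is compatible with every split realized so far, by pairwise compatibility of $\Sigma$, so the lemma applies and the final tree realizes exactly $\Sigma$. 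Uniqueness piggybacks on the uniqueness of $v$: given any tree realizing $\Sigma$, contracting all internal edges except those corresponding to the first $i$ nontrivial splits produces a forced refinement sequence starting from the star tree, so the realizing tree is pinned down step by step and must coincide, up to isomorphism, with the constructed one.

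I expect the main obstacle to be the refinement lemma itself, specifically the existence and uniqueness of the insertion vertex $v$; this is exactly the Helly-type step where pairwise (local) compatibility must be upgraded to simultaneous (global) realizability. The combinatorial characterization is the tool here: compatibility of $A|B$ with each split $C|D \in \Sigma(\T)$ forces one side of $A|B$ to be nested inside one side of $C|D$, and tracking these nestings across all edges of $\T$ must pin down a single vertex whose incident directions each lie wholly on one side of $A|B$. Verifying that these nesting constraints are jointly consistent, so that they yield exactly one such vertex rather than none or several, is the crux of the whole proof; once it is in place, both existence and uniqueness in the theorem follow quickly.
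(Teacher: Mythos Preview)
The paper does not actually prove this theorem: it is quoted verbatim as a known result from Semple and Steel (their Theorem~3.1.4), tracing back to Buneman, and is used only as a black box in the proof of Theorem~\ref{thm:metric}. So there is no ``paper's own proof'' to compare against.

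That said, your outline is the standard Buneman argument and is correct in spirit. The four-intersection characterisation of compatibility is right, and your three-component contradiction for the forward direction works (each of the three components obtained by deleting the two realising edges has its leaf set contained in a single quadrant, so one quadrant must be empty). The refinement lemma and the star-tree induction are exactly how the classical proof proceeds. Two small points worth tightening if you flesh this out: first, in the refinement lemma you should check that after splitting $v$ both new vertices still have degree at least two (equivalently, that $A|B$ is not already in $\Sigma(\T)$ and is nontrivial, or handle those cases separately), otherwise you may create a degree-one internal vertex. Second, your uniqueness argument via forced contraction sequences is morally right but a cleaner route is to observe directly that the edge set of any realising tree is in bijection with $\Sigma$, and that the adjacency relation among edges is determined by the nesting relation among splits; this pins down the tree without reference to the construction order.
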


We are now in a position to prove Theorem \ref{thm:metric}.

\begin{proof}
Properties \ref{prop:nonneg} and \ref{prop:symmetry} are clear by the usage of the absolute value in the definition of $d_{MP}$. 

Now consider Property \ref{prop:equality}. Let $\T_1$, $\T_2$ be two phylogenetic $X$-trees with $|X|=n$. If $\T_1=\T_2$, then for all characters $f$ on $X$ we have $l_f(\T_1)=l_f(\T_2)$ and therefore $d_{MP}(\T_1,\T_2)=0$. This completes the first direction. If, on the other hand, $d_{MP}(\T_1,\T_2)=0$, this implies by the definition of $d_{MP}$ and by Property \ref{prop:nonneg} that $l_f(\T_1)=l_f(\T_2)$ for all characters $f$ on $X$. In particular, all splits induced by $\T_1$ and their corresponding characters give the same parsimony score, namely 1, on $\T_2$. Thus, all binary characters induced by the branches of $\T_1$ are convex on $\T_2$ and thus are compatible with the binary characters induced by the splits of $\T_2$ and vice versa. So the entire collection of splits induced by both $\T_1$ and $\T_2$ is convex on both $\T_1$ and $\T_2$ and the splits are therefore in particular pairwise compatible. However, by Theorem \ref{thm:splitseq}, this implies $\T_1=\T_2$. 

Next we prove the triangle inequality stated by Property \ref{prop:triangleineq}. Let $\T_1,\T_2,\T_3$ be phylogenetic $X$-trees. Let $\hat{f}:=\argmax\limits_{f}|l_f(\T_1)-l_f(\T_3)|$, i.e. $\hat{f}$ is a character which gives $d_{MP}(\T_1,\T_3)$. Without loss of generality, assume $l_{\hat{f}}(\T_1)\geq l_{\hat{f}}(\T_3)$, which implies $d_{MP}(\T_1,\T_3)=l_{\hat{f}}(\T_1)-l_{\hat{f}}(\T_3)$. Now consider $\T_2$:

$$d_{MP}(\T_1,\T_2)+d_{MP}(\T_2,\T_3) = \max\limits_{f} |l_f(\T_1)-l_f(\T_2)|  + \max\limits_{f} |l_f(\T_2)-l_f(\T_3)|   $$
$$ \geq (l_{\hat{f}}(\T_1)-l_{\hat{f}}(\T_2))+ (l_{\hat{f}}(\T_2)-l_{\hat{f}}(\T_3))=l_{\hat{f}}(\T_1)-l_{\hat{f}}(\T_3)=d_{MP}(\T_1,\T_3).$$

Here, the inequality is due to the definition of $d_{MP}$ with the absolute value and the maximum, i.e. the inequality holds for any character on $X$ and thus in particular for $\hat{f}$. This completes the proof.
\end{proof}

\subsubsection{The relationship of the MP distance and the SPR distance}\label{sec:spr}

The idea of introducing a new distance measure on the tree space was motivated by the search for bounds on the SPR distance. The fact that our MP distance indeed provides a lower bound for the (unrooted) SPR distance can be concluded by building on \citep[Theorem 1]{bruen_bryant_2008}: 

\begin{theorem}[Bruen and Bryant, Theorem 1]\label{thm:bruenbryant} Let $f$ be a character with $r$ states on a taxon set $X$ and let $\T$ be an unrooted binary phylogenetic $X$-tree. It takes exactly $l_f(\T)-(r-1)$ SPR moves to transform $\T$ into a tree on which $f$ is convex.
\end{theorem}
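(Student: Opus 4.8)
The plan is to prove the two inequalities separately: that at least $l_f(\T) - (r-1)$ SPR moves are needed, and that this many suffice.

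For the lower bound I would first isolate the local fact that a single SPR move changes the parsimony score of a \emph{fixed} character by at most one; that is, if $\T'$ is obtained from $\T$ by one SPR move then $|l_f(\T) - l_f(\T')| \leq 1$. This follows by observing that $\T$ and $\T'$ share a common two-component forest $F$: deleting the pruned edge from $\T$, respectively the (recycled) regraft edge from $\T'$, and suppressing the resulting degree-2 vertices, yields the same $F$ in both cases. Writing $l_f(F)$ for the sum of the parsimony scores of the two components, restricting an optimal extension shows $l_f(F) \leq l_f(\T)$ (deleting an edge cannot increase the score, and suppressing a degree-2 vertex does not either, by the triangle inequality for the discrete metric on states), while gluing optimal extensions of the two components of $F$ and colouring the single new subdivision vertex to match one endpoint of the edge it subdivides shows $l_f(\T) \leq l_f(F) + 1$; the same bounds hold for $\T'$. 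Hence after $t$ SPR moves the score has dropped by at most $t$, and since a tree on which $f$ is convex has score exactly $r-1$, at least $l_f(\T) - (r-1)$ moves are required.

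For the upper bound I would induct on $l_f(\T) - (r-1)$, the base case (where $f$ is already convex) being vacuous, so it suffices to show that if $f$ is \emph{not} convex on $\T$ then some single SPR move produces a tree $\T'$ with $l_f(\T') = l_f(\T) - 1$; by the lower bound it cannot be smaller, so only the $\leq$ direction is needed. To locate the move, fix an optimal (Fitch) extension $g$ of $f$ on $\T$, and note that by optimality every internal vertex of the binary tree $\T$ is assigned the state of at least one of its three neighbours. Contract every monochromatic edge to obtain the block tree $\T''$, whose $l_f(\T)+1$ vertices are the maximal monochromatic subtrees ("blocks"), with adjacent blocks differently coloured and each of the $r$ states used by some block. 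Since $f$ is not convex, $l_f(\T) \geq r$, so $\T''$ has more than $r$ blocks and hence two blocks $B_1 \neq B_2$ carry the same state $\alpha$; choose such a pair at minimum distance in $\T''$, so no block on the $\T''$-path between them is coloured $\alpha$. Let $e = \{x,y\}$ be the unique edge of $\T$ joining $B_1$ to the next block on that path (so $x \in B_1$, $g(x)=\alpha$). The move is: prune the component of $\T - e$ containing $x$ and regraft it, via $e$, onto a new vertex $w$ subdividing an edge incident to some vertex $z$ of $B_2$, setting $g(w) = \alpha$.

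The remaining, and I expect main, obstacle is the bookkeeping that this move decreases the score by exactly one. The edge $e$ carried a mutation (it joined blocks of different colours) and is gone; the three edges created at $w$ introduce nothing new, since $g(w)=g(x)=g(z)=\alpha$ and the surviving edge $\{w,z'\}$ carries a mutation precisely when the subdivided edge $\{z,z'\}$ did; and suppressing the degree-2 vertex left where $e$ met $y$ leaves the count unchanged — here one uses that $g(y)\neq\alpha$ together with the fact that $g(y)$ agrees with one of $y$'s remaining neighbours, which is a two-case check. Thus the (modified) $g$ is an extension of $f$ on $\T'$ with $l_f(\T)-1$ mutations, so $l_f(\T') \leq l_f(\T)-1$, and the lower bound forces equality; feeding $\T'$ back into the induction finishes the proof. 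The points requiring care are the degenerate configurations — $B_1$ a single leaf, or $B_2$ a single vertex adjacent to $y$ so that the edge subdivided is incident to $y$ — where one must check the pruned and regrafted pieces and the suppressed vertex remain well defined; in each case the count is still unchanged, and the already-established lower bound rules out any accidental further saving.
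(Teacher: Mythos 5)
The paper does not prove this statement: it is imported verbatim as Theorem~1 of Bruen and Bryant (2008) and used as a black box, so there is no in-paper proof to compare yours against. Your argument is correct and follows the natural two-inequality route (which is also essentially how the cited source argues): the lower bound via the lemma that a single SPR move changes $l_f$ by at most one, proved through the shared two-component forest, and the upper bound via an explicit score-decreasing move that merges two like-coloured blocks of an optimal extension. Two points deserve to be made explicit in a polished write-up. First, the pigeonhole step (``more than $r$ blocks, hence two with the same state'') needs the optimal extension to use only the $r$ states in $f(X)$; taking a Fitch extension, as you do, guarantees this, but it should be said. Second, the vertex $y$ is necessarily internal: its block lies strictly between $B_1$ and $B_2$ on the path in $\T''$ and therefore has degree at least $2$ there, whereas a leaf whose pendant edge carries a mutation forms a degree-$1$ block; this is what licenses both the ``agrees with at least one of its three neighbours'' property of $y$ and the suppression bookkeeping. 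With those two remarks added, the degenerate configurations you list are handled exactly as you describe, and the already-established lower bound correctly absorbs any concern that the constructed move might save more than one mutation.
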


Before we can use Theorem \ref{thm:bruenbryant} to prove that our new metric is a lower bound for the SPR distance, we first introduce a general observation on refinements of characters, which subsequently helps us to simplify the search for the character that maximizes the MP distance between any two trees.

\begin{lemma} \label{lem:refine} Let $\T$ be a phylogenetic $X$-tree, let $f$ be a character on $\T$ and $\tilde{f}$ a refinement of $f$. Then, $l_f(\T) \leq l_{\tilde{f}}(\T)$.\end{lemma}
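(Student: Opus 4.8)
The plan is to show that any extension of $\tilde{f}$ can be ``collapsed'' into an extension of $f$ without increasing the number of substitutions, so that in particular an \emph{optimal} extension of $\tilde{f}$ yields an extension of $f$ that is at least as parsimonious.

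First I would record the combinatorial consequence of the refinement hypothesis that drives everything: there is a map $\pi:\C\rightarrow\C$ on the state set with $f(x)=\pi(\tilde{f}(x))$ for every $x\in X$. Indeed, if $f$ induces the partition $X_1|\cdots|X_r$ and $\tilde{f}$ induces $Y_1|\cdots|Y_{\hat r}$ with each $Y_j$ contained in some block $X_{i(j)}$, then for the (unique) state $s$ that $\tilde{f}$ assigns to the taxa of $Y_j$ we put $\pi(s)$ equal to the state that $f$ assigns to $X_{i(j)}$. This is well defined because two taxa with equal $\tilde{f}$-state lie in a common block $Y_j$, hence in a common block $X_i$, hence have equal $f$-state. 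On the remaining elements of $\C$ (states not used by $\tilde{f}$) we may define $\pi$ arbitrarily, say $\pi(c):=c_1$, so that $\pi$ is total on $\C$.

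Next, let $g:V(\T)\rightarrow\C$ be an optimal extension of $\tilde{f}$, so $l_g(\T)=l_{\tilde{f}}(\T)$, and set $h:=\pi\circ g:V(\T)\rightarrow\C$. For a leaf $x\in X$ we have $g(x)=\tilde{f}(x)$, so $h(x)=\pi(\tilde{f}(x))=f(x)$ by the choice of $\pi$; thus $h$ is an extension of $f$. Moreover, for any edge $e=\{u,v\}$ of $\T$, if $g(u)=g(v)$ then $h(u)=\pi(g(u))=\pi(g(v))=h(v)$, so every edge on which $h$ incurs a substitution also carries a substitution under $g$. Hence $l_h(\T)\leq l_g(\T)$, and combining with the fact that $l_f(\T)$ is the minimum of $l_{g'}(\T)$ over all extensions $g'$ of $f$ we obtain $l_f(\T)\leq l_h(\T)\leq l_g(\T)=l_{\tilde{f}}(\T)$, as required.

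I do not expect a serious obstacle; the only points needing a little care are the well-definedness of $\pi$ and the harmless technicality that an arbitrary extension $g$ of $\tilde{f}$ might use states outside $\tilde{f}(X)$ — which is precisely why $\pi$ is arranged to be defined on all of $\C$. An alternative, essentially equivalent route would be to induct on $\hat r-r$, reducing to the case where $\tilde{f}$ is obtained from $f$ by splitting a single block into two and invoking the argument there; but the direct state-collapsing argument above avoids that bookkeeping.
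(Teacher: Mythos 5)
Your proof is correct and follows essentially the same route as the paper: take an optimal extension of $\tilde{f}$ and collapse its states down to those of $f$ via the block-inclusion map, noting that no new substitutions can be created. Your write-up is in fact slightly more careful than the paper's (explicit well-definedness of $\pi$ and handling of states outside $\tilde{f}(X)$), but the underlying argument is identical.
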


\begin{proof} If $f=\tilde{f}$, there is nothing to show. Now let the partitioning induced by $f$ be $X_1|X_2|\ldots|X_r$ and the partitioning induced by $\tilde{f}$ be $Y_1|Y_2|\ldots|Y_{\tilde{r}}$ with $\tilde{r} > r$, i.e. $\tilde{f}$ is a strict refinement of $f$. By definition of a refinement, every $Y_j$ is contained in an $X_i$ for some $i$. Now as $\tilde{f}$ strictly refines $f$, we can assume without loss of generality that $X_1$ contains $Y_1$ and $Y_2$. Now let $\tilde{g_f}$ be a most parsimonious extension of $\tilde{f}$ on $\T$. Consider all edges of $\T$ which, according to $\tilde{g_f}$, require a change. 

We now construct an extension of $f$ on $\T$ as follows: All nodes which are labelled $Y_1$ or $Y_2$ by $\tilde{g_f}$ are instead labelled $X_1$, including the leaves. We do this analogously for all other $Y_j$ and $X_i$, i.e. we replace all $Y_j$ labels by the $X_i$ label such that $Y_j \subseteq X_i$. Thus, the leaves are now labelled by $f$ and the number of edges on which changes are required may be unchanged or smaller (in case that $\tilde{g_f}$ requires a change from, say, $Y_1$ to $Y_2$, as in this case this edge would now start and end both in $X_1$). As this procedure does not introduce any new changes to edges, the score of the resulting extension $g_f$ is at least as good as that of $\tilde{g_f}$. As every MP extension of $f$ will again be at least as good as $g_f$, we altogether have $l_f(\T) \leq l_{\tilde{f}}(\T)$. This completes the proof.
\end{proof}

Next, we simplify our metric by showing that the search for a character $\tilde{f}$ which maximizes $d_{MP}(\T_1,\T_2)=\max\limits_{f}|l_f(\T_1)-l_f(\T_2)|$ can be restricted to characters which are convex on either $\T_1$ or $\T_2$. Our proof provides an explicit algorithm which, for each character $f$ with a given value of $|l_f(\T_1)-l_f(\T_2)|$, returns a character $\hat{f}$ which is convex on one of the trees and for which we have $|l_{\hat{f}}(\T_1)-l_{\hat{f}}(\T_2)|$$\geq|l_f(\T_1)-l_f(\T_2)|$. Note that $\hat{f}$ as we construct it is a refinement of $f$. In particular, it is possible that $\hat{f}$ employs strictly more character states than $f$.

\begin{theorem}\label{thm:convex} Let $\T_1, \T_2$ be two phylogenetic trees on the same set $X$ of $n$ taxa and let $f$ be an $r$-state character with $l_f(\T_i) > r-1$, i.e. $f$ is not convex on $\T_i$ for $i \in \{1,2\}$. Then, there exists an $\hat{r}$-state character $\hat{f}$ with $|l_{\hat{f}}(\T_1)-l_{\hat{f}}(\T_2)|\geq |l_f(\T_1)-l_f(\T_2)|$ and $l_{\hat{f}}(\T_2)= \hat{r} -1$ for some $\hat{r} \geq r$, i.e. $\hat{f}$ is convex on $\T_2$ and its induced parsimony distance is at least as good as the one induced by $f$. In particular, there is a character $\tilde{f}$ such that $|l_{\tilde{f}}(\T_1)-l_{\tilde{f}}(\T_2)| = d_{MP}(\T_1,\T_2)$ and $\tilde{f}$ is convex on either $\T_1$ or $\T_2$.
\end{theorem}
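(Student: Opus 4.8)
The plan is to prove the main (first) assertion of Theorem~\ref{thm:convex} by an explicit refinement procedure, and then derive the ``in particular'' clause as an easy corollary. So suppose $f$ is an $r$-state character that is \emph{not} convex on $\T_2$, hence $l_f(\T_2) > r-1$. Fix a most parsimonious extension $g$ of $f$ on $\T_2$. Since $l_g(\T_2) = l_f(\T_2) > r-1$, the subgraph of $\T_2$ spanned by the leaves and internal vertices carrying any one fixed state $c_i$ is, in general, \emph{not} connected; equivalently, for at least one state the ``monochromatic region'' under $g$ breaks into several connected pieces. The idea is to walk through the states and, whenever the region of a state $c_i$ under $g$ splits into $k_i \geq 1$ connected components, assign each component its own brand-new state. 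This produces a character $\hat f$ on $X$ (read off the new labels on the leaves) which is by construction a refinement of $f$, and whose natural extension $\hat g$ (the new labels on \emph{all} vertices of $\T_2$) has every state-region connected --- so $\hat g$ realizes $l_{\hat g}(\T_2) = \hat r - 1$ where $\hat r = \sum_i k_i$ is the total number of new states. One then argues $\hat r - 1 \le l_{\hat g}(\T_2)$ trivially and $l_{\hat g}(\T_2) \le l_g(\T_2)$ because $\hat g$ changes state on an edge of $\T_2$ only where $g$ already did (merging or refining colours within a connected region creates no new mutations, and in fact $\hat g$ mutates on \emph{exactly} the edges where $g$ does, since distinct components of even the same original state are separated by $g$-mutation edges). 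Hence $l_{\hat f}(\T_2) \le l_{\hat g}(\T_2) = l_f(\T_2)$, and combined with the reverse inequality from the component count this pins down $l_{\hat f}(\T_2) = \hat r - 1$, i.e.\ $\hat f$ is convex on $\T_2$.

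It remains to check that passing from $f$ to $\hat f$ does not decrease the parsimony distance. For $\T_1$ we have no control over an extension, but Lemma~\ref{lem:refine} applies directly: since $\hat f$ refines $f$, $l_{\hat f}(\T_1) \ge l_f(\T_1)$. For $\T_2$ we showed $l_{\hat f}(\T_2) = l_f(\T_2)$. Therefore
\[
l_{\hat f}(\T_1) - l_{\hat f}(\T_2) \;\ge\; l_f(\T_1) - l_f(\T_2).
\]
Now we must be slightly careful about the absolute value. If $l_f(\T_1) \ge l_f(\T_2)$, the displayed inequality gives $l_{\hat f}(\T_1) - l_{\hat f}(\T_2) \ge |l_f(\T_1) - l_f(\T_2)| \ge 0$, so $|l_{\hat f}(\T_1) - l_{\hat f}(\T_2)| \ge |l_f(\T_1)-l_f(\T_2)|$, as desired. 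If instead $l_f(\T_1) < l_f(\T_2)$, then $f$ is already ``closer to convex'' on $\T_1$ than on $\T_2$, and the symmetric construction (refine $f$ using a most parsimonious extension on $\T_1$) yields a character convex on $\T_1$ with the distance preserved; alternatively one observes that in this case one may simply run the whole argument with the roles of $\T_1$ and $\T_2$ swapped. Either way we obtain a character, convex on one of the two trees, whose induced distance is at least $|l_f(\T_1)-l_f(\T_2)|$. (This is precisely the case analysis that forces the theorem statement to say ``convex on \emph{either} $\T_1$ or $\T_2$'' rather than on a prescribed one.)

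Finally, the ``in particular'' clause follows immediately: pick a character $f^{*}$ with $|l_{f^{*}}(\T_1) - l_{f^{*}}(\T_2)| = d_{MP}(\T_1,\T_2)$ (one exists, as the maximum in Definition~\ref{def:MPdist} is over the finite set of characters on $X$). If $f^{*}$ is already convex on $\T_1$ or $\T_2$ we are done; otherwise it is non-convex on whichever tree gives it the larger score, and applying the construction above produces $\tilde f$ convex on one of the trees with $|l_{\tilde f}(\T_1) - l_{\tilde f}(\T_2)| \ge d_{MP}(\T_1,\T_2)$, which by maximality of $d_{MP}$ must be an equality.

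The step I expect to be the main obstacle, and the one needing the most care in the write-up, is verifying that the refined extension $\hat g$ on $\T_2$ uses \emph{no more} mutation edges than $g$ --- i.e.\ that splitting a monochromatic region of $g$ into its connected components and recolouring each component does not create a single extra change. Intuitively this is clear (two vertices in different components of the same $g$-colour are already separated by an edge on which $g$ mutates, so recolouring them differently reuses an existing mutation edge rather than adding one), but making this precise requires spelling out what ``connected component of a state under $g$'' means as a subgraph of $\T_2$ and checking the edge-counting along every edge type (within a component, between two components of the same old state, between components of different old states). A clean way to phrase it is: define $\hat g(v)$ to be the index of the connected component of the subgraph $\T_2[\{u : g(u) = g(v)\}]$ containing $v$; then $\hat g(u) \ne \hat g(v)$ on an edge $\{u,v\}$ if and only if $g(u) \ne g(v)$, so $l_{\hat g}(\T_2) = l_g(\T_2)$ exactly, and the component count gives $l_{\hat f}(\T_2) = \hat r - 1$ on the nose.
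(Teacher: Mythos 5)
Your construction is the same as the paper's: take a most parsimonious extension $g$ of $f$ on $\T_2$, split $\T_2$ along the $k=l_f(\T_2)$ mutation edges into $k+1$ monochromatic connected components, give each component a fresh state to obtain a refinement $\hat{f}$ admitting an extension with exactly $k$ mutations, and invoke Lemma \ref{lem:refine} to get $l_{\hat{f}}(\T_1)\geq l_f(\T_1)$. Your treatment of the absolute value is in fact slightly more explicit than the paper's, which simply assumes $l_f(\T_1)\geq l_f(\T_2)$ without loss of generality; you correctly note that in the opposite case the construction must be run on $\T_1$ instead, which is precisely why the ``in particular'' clause reads ``convex on either $\T_1$ or $\T_2$''.

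There is, however, one step you assert without justification, and it is the genuinely delicate point rather than the edge-counting you single out at the end (which you resolve correctly). Your conclusion $l_{\hat{f}}(\T_2)=\hat{r}-1$ requires that $\hat{f}$ actually \emph{employs} all $\hat{r}=k+1$ states on the leaves, i.e.\ that every connected component of a monochromatic region of $g$ contains at least one taxon. If some component were leafless, its fresh state would never appear in $\hat{f}$, the lower bound $l_{\hat{f}}(\T_2)\geq \hat{r}-1$ (``the reverse inequality from the component count'') would not apply to the character as read off the leaves, and $\hat{f}$ would fail to be convex: one would still have $l_{\hat{f}}(\T_2)=k$ but strictly fewer than $k+1$ states in use. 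This is where minimality of $g$ must be used a second time: a leafless component is joined to the rest of $\T_2$ only by mutation edges, so recolouring all of its vertices with the state of one neighbouring component removes at least one mutation, creates none, and does not alter any leaf label, contradicting that $g$ is most parsimonious. The paper makes exactly this observation explicitly; with it inserted, your argument is complete.
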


\begin{proof} Without loss of generality we assume that $l_f(\T_1) \geq l_f(\T_2)$. We consider a most parsimonious extension $g_f$ of $f$ on $\T_2$. Let $k$ denote the number of changes required by $g_f$. We delete all $k$ edges which require a state change according to $g_f$. Now $\T_2$ is split into $k+1$ connected components. Assume that there is a component which does not contain a leaf. This means that all edges leading to this component were edges which need a change, but this implies that $g_f$ cannot be most parsimonious: modifying $g_f$ such that all nodes in this leafless component get the same state as one of the components connected to it would reduce $k$ by at least 1. This is a contradiction and thus all components contain at least one leaf. Thus, we can consider the components labelled by the state assigned to this leaf (and note that all leaves in one component are in the same state, because all edges which require a change have been deleted). But as $f$ is not convex on $\T_2$ by assumption, $k > r-1$. We now re-label all components, i.e. we introduce $k+1$ new states and assign each component its own unique state, i.e. we label each node in this component with this state (including the leaves). This leads to a $(k+1)$-state character $\hat{f}$ and an extension $g_{\hat{f}}$ requiring exactly $k$ changes (as changes still only occur only the edges between the components once they are re-introduced to $\T_2$). Now we set $\hat{r}:=k+1$. Then, $\hat{f}$ employs $\hat{r}$ states, and $l_{\hat{f}}(\T_2)=\hat{r}-1$, i.e. $\hat{f}$ is convex on $\T_2$. 

Thus, on the one hand the number of changes needed by $\hat{f}$ equals the number of changes needed by $f$ on $\T_2$. On the other hand, $\hat{f}$ is by construction a refinement of $f$ and thus by Lemma \ref{lem:refine}, this procedure cannot decrease the parsimony score on $\T_1$, i.e. $l_f(\T_1)\leq l_{\hat{f}}(\T_1)$. So altogether, we have: $|l_{\hat{f}}(\T_1)-l_{\hat{f}}(\T_2)|\geq |l_f(\T_1)-l_f(\T_2)|$. 

Last, assume we have a character $f$ which maximizes $|l_f(\T_1)-l_f(\T_2)|$, i.e. we have $d_{MP}(\T_1,\T_2)=l_f(\T_1)-l_f(\T_2)$. Then, if $f$ is not convex on $\T_2$, we can apply the explained procedure to obtain $\tilde{f}$ such that $\tilde{f}$ is convex on $\T_2$ and $|l_{\tilde{f}}(\T_1)-l_{\tilde{f}}(\T_2)| = l_f(\T_1)-l_f(\T_2)=d_{MP}(\T_1,\T_2)$. This completes the proof.\end{proof}

Note that in order to find a character $f$ that maximizes $|l_f(\T_1)-l_f(\T_2)|$ for two given trees $\T_1, \T_2$, by Theorem \ref{thm:convex}, we can restrict our search on convex characters, but this potentially requires that the number of character states employed is not fixed. On the other hand, if we fix the number of states the optimal character might not be convex on any of the two trees (and this is why Theorem \ref{thm:convex} does not imply that MP distance is polynomial-time solveable for a fixed number of states). We elaborate this in the following lemma.

\begin{lemma}\label{lem:fixedstatenumber}
There exist trees $\T_1, \T_2$ and a fixed number $r$ of character states, such that for all characters $f$ maximizing $|l_f(\T_1)-l_f(\T_2)|$ under the restriction of employing only $r$ character states, $f$ is not convex on either $\T_1$ or $\T_2$.
\end{lemma}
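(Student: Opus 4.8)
The plan is to exhibit a concrete small example. The most economical choice is to fix $r=2$, i.e. to consider only binary characters (splits), and to find two trees $\T_1,\T_2$ for which the binary character achieving the maximum value of $|l_f(\T_1)-l_f(\T_2)|$ is convex (length $1$) on neither tree. By Theorem~\ref{thm:convex} we already know that, without the restriction on the number of states, the optimum is attained by a character convex on one of the trees; the point of this lemma is that this convex optimizer is in general a strict refinement of the best $2$-state character, so the convexity genuinely breaks once we cap the number of states. First I would note that any binary character that is convex on $\T_i$ has $l_f(\T_i)=1$, so such a character contributes at most $\max_j l_f(\T_j)-1$; I would therefore look for trees where every split of $\T_1$ has small parsimony score on $\T_2$ and vice versa (say at most $2$ in each direction), while some \emph{non-split-like} binary character — one that is not convex on either tree — achieves a difference of $2$. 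Concretely, I expect a pair of caterpillar-type trees on $6$ or $8$ taxa to work; one can take the trees from the anti-symmetry example already in the paper, or a similar small pair, and simply run an exhaustive check over all binary characters.

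The key steps, in order, are: (i) write down explicit $\T_1,\T_2$ (in Newick form) on a small taxon set; (ii) observe that since every split of $\T_1$ is convex on $\T_1$ and, by direct computation, has parsimony score at most $2$ on $\T_2$ (and symmetrically), every binary character that is convex on \emph{either} tree gives $|l_f(\T_1)-l_f(\T_2)|\le 1$; (iii) exhibit one specific binary character $f^\ast$ (not a split of either tree) with $l_{f^\ast}(\T_1)=2$, $l_{f^\ast}(\T_2)=3$, hence $|l_{f^\ast}(\T_1)-l_{f^\ast}(\T_2)|=2$ — so $f^\ast$ is not convex on $\T_1$ (score $2>2-1$) nor on $\T_2$ (score $3>2-1$); (iv) conclude that, restricted to $r=2$, the maximum of $|l_f(\T_1)-l_f(\T_2)|$ is at least $2$, and is therefore attained only by characters that are non-convex on both trees (since the convex ones cap out at $1$). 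Setting $r=2$ then gives the statement. One should double-check that $2$ is in fact the exact maximum over all binary characters, which is a finite check ($2^{n-1}-1$ characters up to complementation), but for the lemma it suffices that the maximum exceeds $1$ and that no convex binary character reaches it.

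The main obstacle is purely the bookkeeping: one must verify, for the chosen pair of trees, the uniform bound "every split of $\T_1$ has parsimony length $\le 1$ on both trees iff it is a split of $\T_2$ as well" — equivalently that $\T_1$ and $\T_2$ share few or no nontrivial splits and that no split of one tree is convex on the other — and simultaneously pin down a binary character of difference $2$. This is where an incorrect choice of trees would silently fail (e.g. if the two trees happened to share a nontrivial split, a convex binary character of difference $1$ would exist but might not be the maximum). Using the Fitch algorithm one checks parsimony scores directly; since the trees are tiny this is routine, and the honest way to present it is to tabulate, for the chosen example, $l_f(\T_1)$ and $l_f(\T_2)$ for the relevant handful of binary characters and point to $f^\ast$. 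I would also remark afterwards that the same phenomenon persists for every fixed $r$ (one can pad the example with extra taxa and states), which is the form in which the lemma is stated, though for the proof a single value of $r$, namely $r=2$, is all that is required.
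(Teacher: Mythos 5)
Your strategy is exactly the paper's: take $r=2$, show that every binary character convex on either tree yields a difference of at most $1$, and then exhibit a binary character, convex on neither tree, achieving difference $2$. The only thing separating your proposal from a complete proof is that the lemma is an existence statement and you never actually commit to a witness: step (i) of your plan (``write down explicit $\T_1,\T_2$'') is left unexecuted, and the candidate you float --- the pair from the anti-symmetry example --- is not verified to have the required properties and is in fact not the pair the paper uses. The paper instead takes the $8$-taxon caterpillar as $\T_1$ and the balanced tree of Figure~\ref{fig:nonconvex} as $\T_2$, enumerates the five informative binary characters convex on each tree (each giving difference at most $1$, since the ``two-taxon'' splits of $\T_2$ score at most $2$ on $\T_1$ and the shared split $AAAACCCC$ scores $1$ on both), and then exhibits $f^\ast=ACACACAC$ with $l_{f^\ast}(\T_1)=4$ and $l_{f^\ast}(\T_2)=2$. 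Your numerical placeholder ($2$ versus $3$) is harmless, and your closing observation --- that it suffices for the maximum over binary characters to exceed $1$ while all convex binary characters cap out at $1$, so one need not compute the exact maximum --- is a correct and slightly cleaner way to finish than the paper's. But as written the argument is a recipe rather than a proof; supply the explicit trees and character and carry out the finite check, and it becomes the paper's proof verbatim.
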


\begin{proof}
We give an explicit example, taking $r=2$. Consider the two trees shown in Figure \ref{fig:nonconvex}. The set of informative binary characters that are convex on $\T_2$ is \\ $C_2:=\{ACACCCCC, CACACCCC, CCCCACAC, CCCCCACA,$\\$ AAAACCCC\}$. All characters in $C_2$ refer to splits induced by $\T_2$ and thus have a parsimony score of 1 on $\T_2$. All of the characters featuring only two taxa, i.e. characters in which two taxa are in one state and all other taxa are in another state, can at most have a parsimony score of 2 on $\T_1$. The character $AAAACCCC$, however, has a score of 1 on $\T_2$ as well as on $\T_1$. Thus, $|l_f(\T_1)-l_f(\T_2)| \leq 1$ for all convex binary characters $f \in C_2$. 

For $C_1$, which we define to be the set of all informative binary characters that are convex on $\T_1$, i.e. \\$C_1:=\{ AACCCCCC, AAACCCCC, AAAACCCC, AAAAACCC, AAAAAACC \}$, we get the same result: $|l_{\tilde{f}}(\T_1)-l_{\tilde{f}}(\T_2)| \leq 1$ for all convex binary characters ${\tilde{f}} \in C_1$. So the maximum value of the MP difference $|l_f(\T_1)-l_f(\T_2)|$ is 1 for any binary character $f$ which is convex on either $\T_1$ or $\T_2$. 

On the other hand, for the character $\hat{f}:=ACACACAC$, which is not convex on either of the trees $\T_1, \T_2$, we have $|l_{\hat{f}}(\T_1)-l_{\hat{f}}(\T_2)|=4-2=2.$ This completes the proof.

\end{proof}

\begin{figure}[ht]      \centering\vspace{0.5cm} 
       \includegraphics[width=10cm]{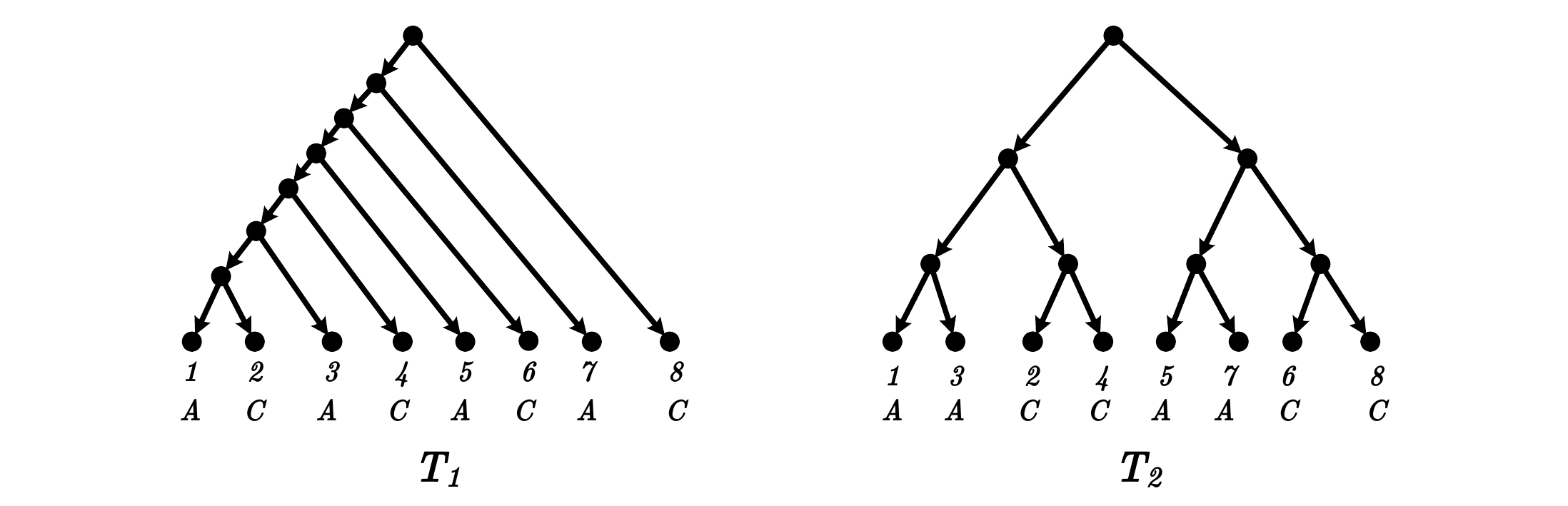}
     \caption{Two rooted binary phylogenetic $X$-trees on the same set of eight taxa and the performance of the character $ACACACAC$ on these trees. The corresponding parsimony scores are 4 and 2, respectively.}\label{fig:nonconvex}
\end{figure}

As Lemma \ref{lem:fixedstatenumber} shows, Theorem \ref{thm:convex} heavily depends on the possibility to increase the number of states employed.

We are now in a position to state and prove the following theorem, which shows that the SPR and MP distances are related.

\begin{theorem}\label{thm:lowerbound} Let $\T_1$, $\T_2$ be two binary phylogenetic $X$-trees. Then, $d_{MP}(\T_1,\T_2)\leq d_{SPR}(\T_1,\T_2)$.\end{theorem}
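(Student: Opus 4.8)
The plan is to show that a single SPR move changes the parsimony score of any character by at most one, and then to combine this with Theorem~\ref{thm:convex} to bound $d_{MP}$ from above by $d_{SPR}$. More precisely, let $f$ be a character optimal for $\T_1,\T_2$; by Theorem~\ref{thm:convex} we may assume $f$ has $\hat r$ states and is convex on one of the trees, say $\T_2$, so that $l_f(\T_2)=\hat r-1$. Then $d_{MP}(\T_1,\T_2)=l_f(\T_1)-l_f(\T_2)=l_f(\T_1)-(\hat r-1)$. By Theorem~\ref{thm:bruenbryant}, applied with the binary tree $\T_1$, it takes exactly $l_f(\T_1)-(\hat r-1)$ SPR moves to transform $\T_1$ into \emph{some} tree on which $f$ is convex.

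The key point to make rigorous is that this quantity is at most $d_{SPR}(\T_1,\T_2)$. Consider an optimal sequence of $d_{SPR}(\T_1,\T_2)$ SPR moves taking $\T_1$ to $\T_2$, producing intermediate binary trees $\T_1 = S_0, S_1, \ldots, S_{d} = \T_2$ where $d = d_{SPR}(\T_1,\T_2)$. I would first establish the elementary fact that for any character $g$ and any two trees $S,S'$ that differ by a single SPR move, $|l_g(S)-l_g(S')|\le 1$: pruning a subtree can only decrease the parsimony score (an optimal extension of $g$ on $S$ restricts to extensions of $g$ on the two pieces, and the cut edge accounts for at most one change), and regrafting it elsewhere can increase the score by at most one (reattach with the state that the regrafting edge's endpoint carried in an optimal extension of the pruned-and-regrafted configuration, costing at most one extra change). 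Hence along the sequence $S_0,\ldots,S_d$ the parsimony score $l_f(S_i)$ changes by at most $1$ per step, so $l_f(\T_1)=l_f(S_0) \le l_f(S_d) + d = l_f(\T_2) + d = (\hat r - 1) + d_{SPR}(\T_1,\T_2)$. Rearranging gives $d_{MP}(\T_1,\T_2) = l_f(\T_1) - (\hat r - 1) \le d_{SPR}(\T_1,\T_2)$, as desired.

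Alternatively, and perhaps more cleanly, one can bypass the explicit ``$\pm 1$ per move'' argument by invoking Theorem~\ref{thm:bruenbryant} directly: since $f$ is convex on $\T_2$, $\T_2$ is itself a tree on which $f$ is convex, so by Theorem~\ref{thm:bruenbryant} the \emph{minimum} number of SPR moves needed to turn $\T_1$ into a tree on which $f$ is convex is exactly $l_f(\T_1)-(\hat r-1)$; but transforming $\T_1$ into $\T_2$ is one particular way of achieving this (it reaches a tree on which $f$ is convex), so it needs at least $l_f(\T_1)-(\hat r-1)$ moves, i.e. $d_{SPR}(\T_1,\T_2)\ge l_f(\T_1)-(\hat r-1) = d_{MP}(\T_1,\T_2)$.

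I expect the main obstacle to be a careful justification of why Theorem~\ref{thm:convex} lets us assume convexity on $\T_2$ \emph{without loss of generality} with respect to the direction of the inequality $l_f(\T_1)\ge l_f(\T_2)$ — one must make sure the optimal character can be taken convex on the tree with the \emph{smaller} parsimony score, which is exactly what the statement of Theorem~\ref{thm:convex} provides (it produces $\hat f$ convex on $\T_2$ under the assumption $l_f(\T_1)\ge l_f(\T_2)$). A secondary technical point, only relevant if one takes the first route rather than invoking Theorem~\ref{thm:bruenbryant} as a black box, is the lemma that an SPR move alters the parsimony score by at most one; this is standard but should be stated explicitly, and one must be slightly careful that suppressing degree-2 vertices after the move does not affect the parsimony score (it does not, since a degree-2 vertex can always be assigned the state of a neighbour at no cost).
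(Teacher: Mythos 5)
Your proposal is correct, and it in fact contains two independent arguments. Your ``alternative'' route is essentially the paper's own proof: take an optimal character $\tilde f$, use Theorem~\ref{thm:convex} to assume it is convex on $\T_2$, note that $d_{MP}(\T_1,\T_2)=l_{\tilde f}(\T_1)-(r-1)$, and observe via Theorem~\ref{thm:bruenbryant} that this equals the minimum number of SPR moves from $\T_1$ to \emph{any} tree on which $\tilde f$ is convex, of which $\T_2$ is one; the paper phrases this by introducing the minimizer $\tilde{\T}$ and concluding $d_{MP}(\T_1,\T_2)=d_{SPR}(\T_1,\tilde{\T})\leq d_{SPR}(\T_1,\T_2)$. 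Your first route --- that a single SPR move changes $l_g$ by at most one, so telescoping along an optimal SPR sequence gives $l_f(\T_1)\leq l_f(\T_2)+d_{SPR}(\T_1,\T_2)$ --- is a genuinely different and self-contained argument that the paper does not use; note that it does not actually need Theorem~\ref{thm:convex} or convexity at all, since $d_{MP}(\T_1,\T_2)=l_f(\T_1)-l_f(\T_2)\leq d_{SPR}(\T_1,\T_2)$ follows directly from the telescoping for the optimal $f$ (and it immediately yields Corollary~\ref{cor:lowerbound} as well). What the paper's route buys is brevity, by leaning on the strong ``exactly'' statement of Bruen and Bryant; what your first route buys is independence from that result, at the cost of having to prove the (standard, and correctly sketched by you) lemma that an SPR move perturbs the parsimony score by at most one, including the harmless suppression of degree-$2$ vertices. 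Both arguments are sound.
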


\begin{proof} Let $\tilde{f}:=\argmax\limits_{f} |l_f(\T_1)-l_f(\T_2)|$, i.e. $\tilde{f}$ defines $d_{MP}(\T_1,\T_2)$. Let $r=|\tilde{f}|$. By Theorem \ref{thm:convex} we may assume that $\tilde{f}$ is convex on one of the trees. So without loss of generality, we assume that $\tilde{f}$ is convex on $\T_2$, i.e. $l_{\tilde{f}}(\T_2) = r-1$. If $\T_1=\T_2$, there is nothing to show as both the SPR and MP distances are then equal to 0. Now if $\T_1 \neq \T_2$ and $\tilde{f}$ is convex on $\T_2$, we have $d_{MP}(\T_1,\T_2) = l_{\tilde{f}}(\T_1)-(r-1)$. By Theorem \ref{thm:bruenbryant} this is equal to $d_{SPR}(\T_1,\tilde{\T})$, where $\tilde{\T}$ is the tree which minimizes $\min\limits_{\T}d_{SPR}(\T_1,\T)$ such that $\tilde{f}$ is convex on $\T$. So $d_{MP}(\T_1,\T_2)=d_{SPR}(T_1,\tilde{\T})\leq d_{SPR}(\T_1,\T_2)$, which completes the proof.
\end{proof}

Theorem \ref{thm:lowerbound} states that the MP distance provides a lower bound to the SPR distance. It does not, however, prove that the two distance measures are actually different. But this becomes apparent when considering the two unrooted trees $\T_1=((1,2),3,(4,5))$ and $\T_2=((1,4),5,(2,3))$ as depicted in Figure \ref{fig:MPnotSPR}. The SPR distance between these two trees is 2, as can be seen when considering all 7
%internal
edges of, say, $\T_1$: no matter which one we prune and regraft to another place, we cannot generate $\T_2$. If we, on the other hand, first cut leaf $4$ and attach it next to leaf $1$ and suppress all resulting vertices of degree 2, we get the tree $((1,4),2,(3,5))$. Pruning the edge leading to leaf 5 and regrafting it between the cherry $(1,4)$ and leaf $2$, we obtain tree $\T_2$. So the SPR distance from $\T_1$ to $\T_2$ is 2, but an exhaustive search through all possible informative characters on 5 taxa shows that the MP distance in this case is 1. It is, for instance, achieved by character $AACCC$, which has a score of 1 on $\T_1$ and 2 on $\T_2$. So in this case, the MP distance is strictly smaller than the SPR distance.

\begin{figure}[ht]      \centering\vspace{0.5cm} 
          \includegraphics[width=10cm]{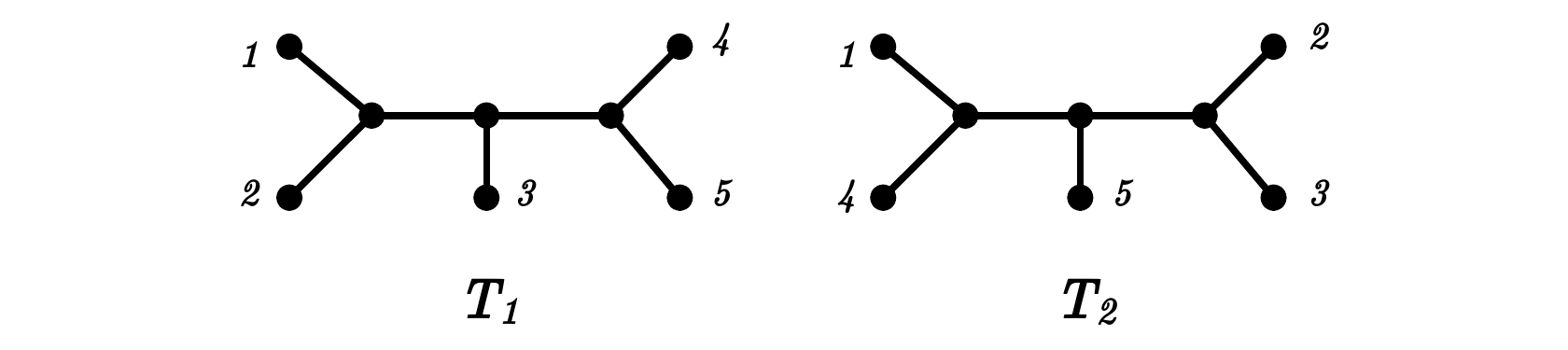}
        \caption{The MP distance of these two trees is 1, but their SPR distance is 2.}\label{fig:MPnotSPR}
\end{figure}

Note that by \citep[Proposition 5.1]{bordewich_semple_2004}, the rooted SPR distance $d_{rSPR}$ and the (unrooted) SPR distance $d_{SPR}$ are closely related. We present this proposition in the following lemmas and subsequently combine them in order to match our purposes. This way, we justify the fact that we only consider the (unrooted) SPR distance in the following, as any lower bound for the latter also provides a lower bound for the rooted SPR distance.

\begin{lemma}[Bordewich and Semple, 2004, Proposition 5.1]\label{lem:bordewich1} Let $\T_1$ and $\T_2$ be two rooted binary phylogenetic $X$-trees. Let $\T_1'$ and $\T_2'$ be the (unrooted) binary phylogenetic $X \cup \{r\}$-trees obtained by attaching a pendant leaf $r$ to the root of $\T_1$ and $\T_2$, respectively, and then regarding the resulting trees as unrooted. Then, $d_{SPR}(\T_1',\T_2') \leq d_{rSPR}(\T_1,\T_2)$.
\end{lemma}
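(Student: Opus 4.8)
The plan is to simulate an optimal sequence of rooted SPR moves taking $\T_1$ to $\T_2$ by a sequence, of exactly the same length, of unrooted SPR moves on the augmented trees $\T_1',\T_2'$. To set up the correspondence, for a rooted binary $X$-tree $S$ write $S^{+}$ for the unrooted binary $(X\cup\{r\})$-tree obtained by attaching a pendant leaf $r$ to the root of $S$; thus $\T_i^{+}=\T_i'$, the former root of $S$ becomes an ordinary degree-$3$ vertex of $S^{+}$, and ``the root of $S$'' is recorded in $S^{+}$ as ``the unique neighbour of $r$''. Let $k=d_{rSPR}(\T_1,\T_2)$ and fix a sequence $\T_1=S_0,S_1,\ldots,S_k=\T_2$ of rooted binary $X$-trees in which $S_i$ is obtained from $S_{i-1}$ by a single rooted SPR move.

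The heart of the argument is the claim: if $\hat S$ is obtained from a rooted binary $X$-tree $S$ by one rooted SPR move, then $\hat S^{+}$ is obtained from $S^{+}$ by at most one unrooted SPR move. Recall that a rooted SPR move on $S$ cuts an edge, thereby pruning the subtree $\tilde S$ not containing the root, suppresses the resulting vertex (a degree-$2$ vertex, or the root itself, in which case the root ``descends'' to its remaining child), and regrafts $\tilde S$ either onto a subdivided edge of the remaining tree or above the root, creating a new root. In every case $\tilde S$ lies strictly below the root and so does not contain the leaf $r$ in $S^{+}$. The plan is to perform in $S^{+}$ the unrooted SPR move that cuts the same edge, prunes the same subtree $\tilde S$, and regrafts it onto the corresponding edge of the remaining tree in the first case, or onto the pendant edge of $r$ in the ``new root'' case. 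One then checks, case by case --- (a) the move is disjoint from the root, (b) the move cuts an edge incident to the root, (c) the move creates a new root --- that suppressing the degree-$2$ vertex in $S^{+}$ exactly mirrors the suppression in $S$, that the leaf $r$ is never moved, and that the resulting unrooted tree is precisely $\hat S^{+}$, with $r$ again adjacent to the root of $\hat S$. Since all trees stay binary and degree-$2$ vertices are suppressed throughout, this is a legitimate single unrooted SPR move.

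Applying the claim to each step $S_{i-1}\to S_i$ produces a sequence $\T_1'=S_0^{+},S_1^{+},\ldots,S_k^{+}=\T_2'$ of unrooted binary $(X\cup\{r\})$-trees in which consecutive trees differ by at most one unrooted SPR move. By the definition of $d_{SPR}$ as the minimum number of SPR moves needed, this gives $d_{SPR}(\T_1',\T_2')\leq k=d_{rSPR}(\T_1,\T_2)$, which is the desired inequality.

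The main obstacle is the case analysis in the second paragraph: one must fix the precise convention for rooted SPR moves used in \citep{bordewich_semple_2004,linz_semple_2011} --- in particular how pruning an edge incident to the root and how regrafting to create a new root are handled --- and verify that under the correspondence $S\mapsto S^{+}$ each such move becomes \emph{one}, not two, unrooted SPR moves, with the pendant leaf $r$ never itself being pruned and always ending up adjacent to the (possibly relocated) root. The remaining steps, including the concatenation of moves, are routine bookkeeping.
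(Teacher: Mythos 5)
The paper does not actually prove this lemma: it is imported verbatim as Proposition 5.1 of Bordewich and Semple (2004), so there is no in-paper argument to compare against. Your proposal supplies a genuine proof, and it is the standard one (and, in spirit, the one in the cited source): fix an optimal rooted SPR sequence $\T_1=S_0,\ldots,S_k=\T_2$ and show that each rooted move lifts to at most one unrooted move on the augmented trees $S_i^{+}$, whence $d_{SPR}(\T_1',\T_2')\leq k$. The two invariants you isolate are exactly the ones that make the simulation work: the pruned subtree lies strictly below the root and hence never contains $r$, and after each move $r$ is again adjacent to the (possibly relocated) root, so the correspondence $S\mapsto S^{+}$ is preserved along the whole sequence; the ``new root'' regraft is correctly absorbed by regrafting onto the pendant edge of $r$. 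The only soft spot is the one you flag yourself: the case analysis in the second paragraph is asserted rather than carried out, and its correctness does depend on pinning down the convention for rooted SPR (in particular, how a cut incident to the degree-$2$ root is resolved --- the root is suppressed and descends to its remaining child, which in $S^{+}$ corresponds to suppressing a degree-$2$ vertex on the path to $r$, i.e.\ part of the same single unrooted move, not an extra one). With that convention fixed, each case goes through as you describe, so the outline is sound; for a self-contained write-up you would want to spell out cases (a)--(c) explicitly rather than defer them.
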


Note that Lemma \ref{lem:bordewich1} is not directly applicable to our setting, because trees $\T_1'$ and $\T_2'$ as mentioned in this lemma have $n+1$ leaves, where $n=|X|$, whereas what we want is to relate a rooted binary tree to the same tree regarded as unrooted by suppressing the root node, i.e. by deleting the degree 2 node and its incident edges and reconnecting the then unconnected components with a new edge. We achieve this in the following lemma.

\begin{lemma} \label{lem:SPRrSPR} Let $\T_1$ and $\T_2$ be two rooted binary phylogenetic $X$-trees. Let $\tilde{\T_1}$ and $\tilde{\T_2}$ be the unrooted binary phylogenetic $X$-trees derived from $\T_1$ and $\T_2$, respectively, by suppressing the root. Let $\T_1'$ and $\T_2'$ be the (unrooted) binary phylogenetic $X \cup \{r\}$-trees obtained by attaching a pendant leaf $r$ to the root of $\T_1$ and $\T_2$, respectively, and then regarding the resulting trees as unrooted. Then, $d_{SPR}(\tilde{\T_1},\tilde{\T_2})\leq d_{SPR}(\T_1',\T_2') \leq  d_{SPR}(\tilde{\T_1},\tilde{\T_2})+1$.
\end{lemma}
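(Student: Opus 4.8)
The plan is to relate the two unrooted trees $\T_1', \T_2'$ (with the extra leaf $r$ attached at the old root) to the rootless trees $\tilde{\T_1}, \tilde{\T_2}$ by thinking of $r$ as a ``movable marker'' on an edge. The key observation is that $\tilde{\T_i}$ is obtained from $\T_i'$ by deleting the pendant leaf $r$ and suppressing the resulting degree-2 vertex; conversely $\T_i'$ is obtained from $\tilde{\T_i}$ by subdividing a specific edge and attaching $r$ there. So $\T_1', \T_2'$ are, up to the placement of the single pendant leaf $r$, exactly $\tilde{\T_1}, \tilde{\T_2}$ with one extra taxon.

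For the upper bound $d_{SPR}(\T_1',\T_2') \leq d_{SPR}(\tilde{\T_1},\tilde{\T_2})+1$, I would take an optimal SPR sequence of length $d:=d_{SPR}(\tilde{\T_1},\tilde{\T_2})$ transforming $\tilde{\T_1}$ into $\tilde{\T_2}$. First, in $\T_1'$ prune the pendant leaf $r$ and regraft it onto the edge of $\tilde{\T_1}$ that, in $\T_2'$, carries $r$; this costs one SPR move and now leaves $r$ ``parked'' in its final position. One then has to check that the $d$ moves realizing the $\tilde{\T_1}\to\tilde{\T_2}$ transformation can be carried out in the presence of the parked leaf $r$ without ever needing to disturb $r$ — i.e. each SPR move in the rootless world can be simulated by an SPR move in the $(n+1)$-leaf world. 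This is the routine-but-careful part: a single SPR move prunes a subtree and regrafts it onto an edge; inserting a pendant leaf $r$ somewhere in the host tree only possibly subdivides one edge into two, and both halves are still valid regraft targets, so the move still goes through (possibly one needs to note that if the subtree being moved happens to contain $r$, one can instead choose the symmetric description of the same SPR move on the rootless tree that moves the complementary subtree — since the pruned subtrees of an unrooted SPR move come in complementary pairs). Hence $d_{SPR}(\T_1',\T_2') \leq 1 + d$.

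For the lower bound $d_{SPR}(\tilde{\T_1},\tilde{\T_2}) \leq d_{SPR}(\T_1',\T_2')$, I would run the argument in reverse: take an optimal SPR sequence from $\T_1'$ to $\T_2'$ and project it down by deleting leaf $r$ and suppressing degree-2 vertices at every step. Deleting a fixed leaf and suppressing is an operation that commutes with SPR moves in the sense that the image of an SPR move is either an SPR move or a ``no move'' (the latter happening precisely when the only effect of the move was to relocate $r$, or when pruning/regrafting $r$ itself). Since the start projects to $\tilde{\T_1}$ and the end to $\tilde{\T_2}$, the projected sequence has length at most $d_{SPR}(\T_1',\T_2')$ and transforms $\tilde{\T_1}$ into $\tilde{\T_2}$, giving the inequality. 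The main obstacle here, and the place requiring the most care, is making precise the claim that ``restricting an SPR move to $X$ by deleting the taxon $r$ yields an SPR move or the identity'' — one must handle the cases where $r$ sits on a pruned edge, on a regrafted edge, or where collapsing degree-2 vertices after deletion merges two edges that were the cut/paste edges of the move; in each case one verifies that the induced transformation on the $n$-leaf trees is still a single (possibly trivial) SPR move, and never two moves.

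Altogether, chaining the two inequalities with the hypothesis yields $d_{SPR}(\tilde{\T_1},\tilde{\T_2}) \leq d_{SPR}(\T_1',\T_2') \leq d_{SPR}(\tilde{\T_1},\tilde{\T_2})+1$, as claimed. Combined with Lemma \ref{lem:bordewich1}, this will let us conclude $d_{SPR}(\tilde{\T_1},\tilde{\T_2}) \le d_{rSPR}(\T_1,\T_2) + 1$, which is the practical consequence needed so that MP-distance lower bounds on $d_{SPR}$ transfer (up to an additive constant) to $d_{rSPR}$.
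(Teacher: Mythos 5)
Your overall strategy is the same as the paper's: simulate an optimal SPR sequence between $\tilde{\T_1}$ and $\tilde{\T_2}$ inside the $(n+1)$-leaf trees to get the upper bound, and project an optimal sequence between $\T_1'$ and $\T_2'$ down to the $n$-leaf trees to get the lower bound. Your lower-bound direction (deleting $r$ and suppressing the degree-2 vertex sends each SPR move to an SPR move or to the identity, so the projected sequence still joins $\tilde{\T_1}$ to $\tilde{\T_2}$ in at most as many steps) is in fact spelled out more carefully than in the paper, which essentially just asserts that each move on the small trees ``enforces'' a corresponding move on the large ones.

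There is, however, a concrete flaw in the order of operations in your upper-bound argument. You propose to \emph{first} prune $r$ and regraft it onto ``the edge of $\tilde{\T_1}$ that, in $\T_2'$, carries $r$.'' That edge corresponds to a split of $X$ displayed by $\tilde{\T_2}$, and this split need not be displayed by $\tilde{\T_1}$ at all, so the regraft target of your opening move may simply not exist. Even when it does exist, the subsequent $d$ simulated moves carry $r$ along passively and give no guarantee that it still sits on the correct edge once the sequence has rearranged the tree into $\tilde{\T_2}$. The repair is to reverse the order, as the paper does: perform the $d$ simulated moves first --- your analysis of why each move survives the insertion of the pendant leaf $r$ (both halves of a subdivided edge remain valid cut or regraft targets, and the complementary description of the move handles the case where $r$ lies in the pruned part) is exactly what is needed there --- and only then, if necessary, spend one final move relocating $r$ onto the correct edge of the now-completed $\tilde{\T_2}$, giving $d_{SPR}(\T_1',\T_2') \leq d_{SPR}(\tilde{\T_1},\tilde{\T_2})+1$. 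One further small point: the consequence you draw at the end, $d_{SPR}(\tilde{\T_1},\tilde{\T_2}) \le d_{rSPR}(\T_1,\T_2)+1$, is weaker than necessary; chaining only the \emph{left} inequality of this lemma with Lemma \ref{lem:bordewich1} already yields $d_{SPR}(\tilde{\T_1},\tilde{\T_2}) \le d_{rSPR}(\T_1,\T_2)$ with no additive constant, which is what Theorem \ref{thm:boundrSPR} actually uses.
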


\begin{proof} First, note that $\tilde{\T_1}$ is a subtree of $\T_1'$ and $\tilde{\T_2}$ is a subtree of $\T_2'$.  Assume $d_{SPR}(\tilde{\T_1},\tilde{\T_2})=1$, i.e. one SPR move is needed to get from $\tilde{\T_1}$ to $\tilde{\T_2}$. Then also at least one SPR move is needed to get from $\T_1'$ to $\T_2'$ to get the correct arrangement of the subtree $\tilde{\T_1}$ or $\tilde{\T_2}$, respectively (and possibly one more if the position of $r$ also has to be modified). So in this case, $d_{SPR}(\tilde{\T_1},\tilde{\T_2})\leq d_{SPR}(\T_1',\T_2') $. However, if more SPR moves are needed to get from $\tilde{\T_1}$ to $\tilde{\T_2}$, we consider a shortest path of single SPR moves $\tilde{\T_1}, \T_a,T_b,\ldots,\tilde{\T_2}$ from $\tilde{\T_1}$ to $\tilde{\T_2}$. Each such required move enforces the same move from $\T_1'$ to $\T_2'$. In the end, possibly an adjustment of the position of leaf $r$ has to be made, but this can be made in a single move cutting leaf $r$ and attaching it at the appropriate position. So iteratively, we get $d_{SPR}(\tilde{\T_1},\tilde{\T_2})\leq d_{SPR}(\T_1',\T_2') \leq  d_{SPR}(\tilde{\T_1},\tilde{\T_2})+1$.  \end{proof} 

Now we are finally in a position to show the required relation between $d_{MP}$, $d_{SPR}$ and $d_{rSPR}$.

\begin{theorem}\label{thm:boundrSPR} Let $\T_1$ and $\T_2$ be two rooted binary phylogenetic $X$-trees. Let $\tilde{\T_1}$ and $\tilde{\T_2}$ be the unrooted binary phylogenetic $X$-trees derived from $\T_1$ and $\T_2$, respectively, by suppressing the root. Then, $d_{MP}(\T_1,\T_2)=d_{MP}(\tilde{\T_1},\tilde{\T_2})\leq d_{SPR}(\tilde{\T_1},\tilde{\T_2})$\\$ \leq d_{rSPR}(\T_1,\T_2)$.
\end{theorem}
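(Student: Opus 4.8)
The plan is to chain together three facts, two of which are already established in the excerpt and one of which is essentially immediate. The target inequality chain
$$d_{MP}(\T_1,\T_2)=d_{MP}(\tilde{\T_1},\tilde{\T_2})\leq d_{SPR}(\tilde{\T_1},\tilde{\T_2})\leq d_{rSPR}(\T_1,\T_2)$$
breaks naturally into the equality on the left, the first inequality in the middle, and the second inequality on the right.

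For the leftmost equality, I would appeal to the fact — emphasised repeatedly in Section \ref{sec:prelim} — that the parsimony score $l_f(\T)$ of any character $f$ is independent of the root position, since the Fitch algorithm may be run by placing a root on an arbitrary edge. Suppressing the root of a rooted binary $X$-tree $\T_i$ does not change the underlying set of splits $\Sigma(\T_i)$ on $X$, and hence does not change $l_f$ for any character $f$ on $X$. Therefore $l_f(\T_i)=l_f(\tilde{\T_i})$ for every $f$, and taking the maximum over $f$ of $|l_f(\T_1)-l_f(\T_2)|$ versus $|l_f(\tilde{\T_1})-l_f(\tilde{\T_2})|$ gives the same value; that is exactly $d_{MP}(\T_1,\T_2)=d_{MP}(\tilde{\T_1},\tilde{\T_2})$. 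This is the cleanest step and should take only a sentence or two.

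For the middle inequality, $d_{MP}(\tilde{\T_1},\tilde{\T_2})\leq d_{SPR}(\tilde{\T_1},\tilde{\T_2})$, I would simply invoke Theorem \ref{thm:lowerbound}, which states precisely this for unrooted binary phylogenetic $X$-trees; $\tilde{\T_1}$ and $\tilde{\T_2}$ are binary because $\T_1,\T_2$ are binary and suppressing a degree-2 root leaves a binary unrooted tree. For the rightmost inequality, $d_{SPR}(\tilde{\T_1},\tilde{\T_2})\leq d_{rSPR}(\T_1,\T_2)$, I would combine Lemma \ref{lem:bordewich1} with the left-hand inequality of Lemma \ref{lem:SPRrSPR}: Lemma \ref{lem:SPRrSPR} gives $d_{SPR}(\tilde{\T_1},\tilde{\T_2})\leq d_{SPR}(\T_1',\T_2')$, and Lemma \ref{lem:bordewich1} gives $d_{SPR}(\T_1',\T_2')\leq d_{rSPR}(\T_1,\T_2)$; transitivity finishes it. Stitching the three pieces together yields the full chain, and the theorem follows.

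Honestly, there is no serious obstacle here — the theorem is a bookkeeping corollary whose entire content has been front-loaded into Theorem \ref{thm:lowerbound} and Lemmas \ref{lem:bordewich1} and \ref{lem:SPRrSPR}. The only place requiring a moment's care is making the root-suppression argument for the $d_{MP}$ equality fully rigorous: one should note that $X$ itself is unchanged (no taxon is added or removed), that the root being degree 2 means suppressing it merges two edges into one without altering which leaves lie on which side of any remaining edge, and hence $\Sigma(\tilde{\T_i})=\Sigma(\T_i)$ as collections of $X$-splits, from which root-independence of parsimony gives $l_f(\T_i)=l_f(\tilde{\T_i})$ for all $f$. Everything else is citation and transitivity.
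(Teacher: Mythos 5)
Your proposal is correct and follows exactly the same route as the paper's own proof: the equality from root-independence of parsimony, the middle inequality from Theorem \ref{thm:lowerbound}, and the final inequality by chaining Lemma \ref{lem:SPRrSPR} with Lemma \ref{lem:bordewich1}. The extra care you take in justifying the $d_{MP}$ equality via unchanged splits is slightly more detailed than the paper's one-line appeal to root-independence, but the substance is identical.
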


\begin{proof} The equality follows from the fact that for parsimony there is no difference between rooted and unrooted trees. The first inequality follows from Theorem \ref{thm:lowerbound}. It remains to show that $d_{SPR}(\tilde{\T_1},\tilde{\T_2}) \leq d_{rSPR}(\T_1,\T_2)$. Let $\T_1'$ and $\T_2'$ be the two $X\cup \{r\}$-trees as defined in Lemma \ref{lem:bordewich1}. Then Lemma \ref{lem:SPRrSPR} gives $d_{SPR}(\tilde{\T_1},\tilde{\T_2})\leq d_{SPR}(\T_1',\T_2')$ and Lemma \ref{lem:bordewich1} leads to $d_{SPR}(\T_1',\T_2') \leq d_{rSPR}(\T_1,\T_2)$. So altogether we have $d_{SPR}(\tilde{\T_1},\tilde{\T_2})\leq d_{rSPR}(\T_1,\T_2)$. This completes the proof. 
\end{proof}

So Theorem \ref{thm:boundrSPR} shows that the Maximum Parsimony distance is a lower bound both for the rooted and the unrooted SPR distance, but also that the unrooted SPR distance itself is a lower bound for the rooted SPR distance, which is why analyzing the relationship of $d_{MP}$ to $d_{SPR}$ is sufficient. 

We finish this section by noting that each character provides a lower bound for the SPR distance. 

\begin{corollary}\label{cor:lowerbound} Let $\T_1$, $\T_2$ be two binary phylogenetic $X$-trees and let $f$ be any character on $X$. Then, $|l_f(\T_1)-l_f(\T_2)|\leq d_{SPR}(\T_1,\T_2)$.\end{corollary}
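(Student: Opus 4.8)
The plan is to derive this directly from Theorem \ref{thm:lowerbound}, since Corollary \ref{cor:lowerbound} is simply the per-character version of that theorem with the maximization stripped away. The key observation is that $d_{MP}(\T_1,\T_2) = \max_f |l_f(\T_1) - l_f(\T_2)|$, so for the particular character $f$ given in the statement we immediately have $|l_f(\T_1) - l_f(\T_2)| \leq d_{MP}(\T_1,\T_2)$ by definition of the maximum. Chaining this with the inequality $d_{MP}(\T_1,\T_2) \leq d_{SPR}(\T_1,\T_2)$ from Theorem \ref{thm:lowerbound} yields $|l_f(\T_1) - l_f(\T_2)| \leq d_{SPR}(\T_1,\T_2)$, which is exactly what is claimed. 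So the entire proof is a two-line chain of inequalities, and there is no real obstacle.

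Alternatively, if one prefers a self-contained argument that does not route through $d_{MP}$, I would argue as follows. Without loss of generality assume $l_f(\T_1) \geq l_f(\T_2)$, and let $r = |f|$ be the number of states of $f$. By Theorem \ref{thm:bruenbryant}, it takes exactly $l_f(\T_1) - (r-1)$ SPR moves to transform $\T_1$ into some tree $\T^*$ on which $f$ is convex, and likewise exactly $l_f(\T_2) - (r-1)$ SPR moves to transform $\T_2$ into a tree on which $f$ is convex. Now consider a shortest sequence of $d_{SPR}(\T_1,\T_2)$ SPR moves transforming $\T_1$ into $\T_2$, followed by the $l_f(\T_2)-(r-1)$ moves making $f$ convex; this is a sequence of $d_{SPR}(\T_1,\T_2) + l_f(\T_2) - (r-1)$ moves that turns $\T_1$ into a tree on which $f$ is convex. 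Since $l_f(\T_1) - (r-1)$ is the \emph{minimum} number of moves achieving this (again by Theorem \ref{thm:bruenbryant}, which gives the exact count and hence in particular a lower bound on any such transformation), we get $l_f(\T_1) - (r-1) \leq d_{SPR}(\T_1,\T_2) + l_f(\T_2) - (r-1)$, i.e. $l_f(\T_1) - l_f(\T_2) \leq d_{SPR}(\T_1,\T_2)$, and by the sign assumption this is $|l_f(\T_1) - l_f(\T_2)| \leq d_{SPR}(\T_1,\T_2)$.

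I would present the first (one-line) version in the paper, since it is cleanest and Theorem \ref{thm:lowerbound} is already available; the second version is essentially just unfolding the proof of Theorem \ref{thm:lowerbound} and offers no new content. The only point worth a sentence of care is noting that the character $f$ in the corollary is arbitrary and need not be optimal or convex on either tree — but that is handled automatically because $d_{MP}$ is defined as a maximum over \emph{all} characters, so $|l_f(\T_1)-l_f(\T_2)|$ is bounded by it regardless. There is genuinely no hard step here; the corollary is a trivial consequence of what precedes it and is included mainly for emphasis.
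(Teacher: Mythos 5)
Your primary argument is exactly the paper's proof: $|l_f(\T_1)-l_f(\T_2)|\leq d_{MP}(\T_1,\T_2)$ by definition of the maximum, combined with Theorem \ref{thm:lowerbound}. The alternative you sketch via Theorem \ref{thm:bruenbryant} is also sound (and has the minor virtue of not routing through Theorem \ref{thm:convex}), but the paper takes the one-line route you correctly prefer.
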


\begin{proof} This follows from Theorem \ref{thm:lowerbound} and the fact that  $d_{MP}(\T_1,\T_2) \geq |l_f(\T_1)-l_f(\T_2)|$ by definition of $ d_{MP}(\T_1,\T_2)$. \end{proof}

\subsubsection{Bounds on the MP distance}\label{sec:bounds}

In this section, we want to provide an upper bound for the MP distance. We start with the following lemma.

\begin{lemma} \label{lem:badchar}The MP score of an $r$-state character on taxon set $X$ on any phylogenetic $X$-tree is at most $\lfloor(r-1)\cdot \frac{n}{r} \rfloor$, where $|X|=n$. 
\end{lemma}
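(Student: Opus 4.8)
The plan is to use the standard ``constant extension'' trick to bound $l_f(\T)$ from above by $n$ minus the size of the largest block of $f$, and then to observe by pigeonhole that the largest block has at least $n/r$ elements.

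First I would fix notation: write $X_1|X_2|\ldots|X_r$ for the partition of $X$ induced by the $r$-state character $f$, put $n_i := |X_i|$, and let $c$ be a state whose block is largest, so that $n_{\max} := \max_i n_i = |f^{-1}(c)|$. Given an arbitrary phylogenetic $X$-tree $\T$, I would then exhibit one explicit (not necessarily most parsimonious) extension $g$ of $f$ to $V(\T)$, namely $g(v) = c$ for every internal vertex $v$ of $\T$ and $g(x) = f(x)$ for every $x \in X$ (so $g$ genuinely is an extension of $f$). Counting substitutions under $g$: every edge joining two internal vertices carries both endpoints labelled $c$ and hence no substitution; a pendant edge to a leaf $x$ carries a substitution only when $f(x) \neq c$, and exactly $n - n_{\max}$ leaves have this property. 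Therefore $l_g(\T) \le n - n_{\max}$, and since $l_f(\T) \le l_g(\T)$ by definition of the parsimony score, we get $l_f(\T) \le n - n_{\max}$ for every phylogenetic $X$-tree $\T$.

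Second, since $X_1,\ldots,X_r$ are non-empty and partition a set of size $n$, the pigeonhole principle gives $n_{\max} \ge \lceil n/r \rceil \ge n/r$. Combining the two bounds, $l_f(\T) \le n - n/r = (r-1)\cdot \tfrac{n}{r}$; and because $l_f(\T)$ is a non-negative integer it is in fact at most $\lfloor (r-1)\cdot \tfrac{n}{r} \rfloor$, which is the claim.

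I do not expect a real obstacle here; the only points needing (minor) care are checking that the constant map above is a legitimate extension (it agrees with $f$ on $X$ by construction) and that the substitution count is valid for arbitrary, possibly non-binary, $X$-trees, including the star tree — which it is, since the degrees of the internal vertices were never used. As a remark, equality holds when $\T$ is a star tree and $f$ partitions $X$ into blocks that are as equal as possible (so that $n_{\max} = \lceil n/r \rceil$), which shows the bound in Lemma~\ref{lem:badchar} is best possible.
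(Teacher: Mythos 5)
Your proof is correct and follows essentially the same route as the paper's: both arguments come down to the bound $l_f(\T) \le n - n_{\max}$ (the star-tree score, where $n_{\max}$ is the size of the largest block of $f$) followed by the pigeonhole observation that $n_{\max} \ge n/r$. The only difference is that you justify this bound on an arbitrary tree by exhibiting the constant extension explicitly, whereas the paper asserts as ``easy to see'' that the star tree has the worst parsimony score among all trees and then computes only the star-tree case; your version is the more self-contained of the two.
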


\begin{proof}
It is easy to see that the parsimony score of the star tree is never better than that of any refinement, as in the star tree one state will be the root state and all leaves which are not in this state require exactly one change (note that one can construct binary trees with this property, too, but for simplicity we now consider the star tree). Now let us consider a character $\tilde{f}$ using the states $c_1,\ldots,c_r$ and let $n_i$ denote the number of leaves assigned state $c_i$, for $i=1,\ldots,r$. Note that $\sum_{i=1}^r n_i = n$. Without loss of generality assume $n_1\geq n_i$ for all $i=2,\ldots,r$. As parsimony seeks to minimize the number of changes needed, it will choose the state which occurs most often as root state, so the root will be in state $c_1$. In the extreme case of a star tree, the number of changes needed by any most parsimonious extension is $\sum_{i=2}^r n_i$, which is maximized when $n_1=n_i$ for all $i=2,\ldots,r$. Note that this choice is only possible if $n=r\cdot n_1$, in which case the MP score will be exactly $(r-1) \cdot \frac{n}{r}$. In the case that $n$ is not a multiple of $r$, taking $n_1 = \lceil \frac{n}{r} \rceil$ yields a maximum MP score
of $n -  \lceil \frac{n}{r} \rceil$, which is  $\lfloor(r-1)\cdot \frac{n}{r} \rfloor$, completing the proof.

\end{proof}

We now use Lemma \ref{lem:badchar} to state the following lemma, which in turn will then provide the desired bound on the MP distance.

\begin{lemma} \label{lem:theobound}Let $\T_1$, $\T_2$ be two (rooted or unrooted) phylogenetic $X$-trees with $|X|=n$. Let $f$ be any $r$-state character on $X$ for some $r \leq n$. Then, $|l_f(\T_1)-l_f(\T_2)|\leq $ $\lfloor(r-1)(\frac{n}{r}-1)\rfloor$. \end{lemma}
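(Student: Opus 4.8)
The goal is to bound $|l_f(\T_1) - l_f(\T_2)|$ for an $r$-state character $f$. The key observation is that the two quantities $l_f(\T_1)$ and $l_f(\T_2)$ are not independent: both are at least $r-1$ (since $f$ employs $r$ states, any tree needs at least $r-1$ changes to realize them), and both are at most the bound from Lemma~\ref{lem:badchar}. So the plan is simply to combine these two facts. Without loss of generality assume $l_f(\T_1) \geq l_f(\T_2)$, so that $|l_f(\T_1) - l_f(\T_2)| = l_f(\T_1) - l_f(\T_2)$. Then bound $l_f(\T_1)$ from above by $\lfloor (r-1)\cdot \frac{n}{r}\rfloor$ using Lemma~\ref{lem:badchar}, and bound $l_f(\T_2)$ from below by $r-1$, using the elementary fact that a character with $r$ states has parsimony score at least $r-1$ on any tree (this is immediate: any extension must assign all $r$ states somewhere in the tree, and a connected graph in which $r$ distinct values appear must have at least $r-1$ edges with a change along them — alternatively it is just the definition of convexity being the minimum possible).

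Putting the two bounds together gives
$$|l_f(\T_1) - l_f(\T_2)| \leq \left\lfloor (r-1)\cdot \tfrac{n}{r}\right\rfloor - (r-1).$$
The only remaining step is to check that $\lfloor (r-1)\cdot \frac{n}{r}\rfloor - (r-1) = \lfloor (r-1)(\frac{n}{r} - 1)\rfloor$. Since $r-1$ is an integer, it can be pulled in and out of the floor function freely: $\lfloor a + k \rfloor = \lfloor a \rfloor + k$ for integer $k$, so $\lfloor (r-1)\frac{n}{r}\rfloor - (r-1) = \lfloor (r-1)\frac{n}{r} - (r-1)\rfloor = \lfloor (r-1)(\frac{n}{r}-1)\rfloor$. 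This completes the argument.

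**Main obstacle.** There is essentially no hard step here; the proof is a two-line combination of Lemma~\ref{lem:badchar} with the trivial lower bound $l_f(\T) \geq r-1$. The only place to be slightly careful is the floor manipulation, and the fact that it works hinges precisely on $r-1$ being an integer — worth stating explicitly so the reader is not left wondering whether one can distribute the floor. One might also remark that this bound is tight in the sense that the extremal character for Lemma~\ref{lem:badchar} on $\T_1$ can simultaneously be convex on $\T_2$ (e.g. when $\T_2$ is a caterpillar arranged to make $f$ homoplasy-free), so no slack is lost in the combination, though this is not needed for the statement itself.
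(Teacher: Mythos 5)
Your proof is correct and follows essentially the same route as the paper: lower-bound $l_f(\T_2)$ by $r-1$, upper-bound $l_f(\T_1)$ via Lemma~\ref{lem:badchar}, and pull the integer $r-1$ inside the floor. The only cosmetic difference is that you observe the floor manipulation is an exact equality for integer $r-1$, whereas the paper states it as an inequality justified by $r\in\NN$; both are fine.
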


\begin{proof} Without loss of generality we assume $l_f(\T_2)\leq l_f(\T_1)$. First we note that the parsimony score of $f$ on any tree, and thus particularly on $\T_2$, is at least $r-1$, because only one of the states employed by $f$ can be the root state, and to all other states there has to be at least one change (and in the optimal case, when the score on some tree equals exactly $r-1$, $f$ is convex on this tree). Then by Lemma \ref{lem:badchar}, the score of $f$ on $\T_1$ cannot exceed $\lfloor(r-1)\frac{n}{r}\rfloor$. So altogether, $|l_f(\T_1)-l_f(\T_2)|\leq \lfloor(r-1)\frac{n}{r}\rfloor - (r-1) =\lfloor(r-1)\frac{n}{r}\rfloor - \lfloor r-1\rfloor$\\$ \leq \lfloor (r-1)\frac{n}{r}-(r-1) \rfloor = \lfloor(r-1)(\frac{n}{r}-1)\rfloor$. The latter inequality is due to $ r \in \NN$. This completes the proof. 
\end{proof}

We now derive an upper bound on the MP distance between two phylogenetic trees using Lemma \ref{lem:theobound}. 

\begin{theorem}\label{thm:theobound} Let $\T_1$, $\T_2$ be two phylogenetic $X$-trees with $|X|=n$. Then, \\ $d_{MP}(\T_1,\T_2)\leq n - 2\sqrt{n} + 1$.
\end{theorem}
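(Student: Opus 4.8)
The plan is to reduce the statement to a one-variable optimization problem by invoking Lemma~\ref{lem:theobound}. Fix two phylogenetic $X$-trees $\T_1,\T_2$ with $|X|=n$. Every character $f$ on $X$ is an $r$-state character for some $r$ with $1\le r\le n$; if $r=1$ then $f$ is constant and $|l_f(\T_1)-l_f(\T_2)|=0$, so we may assume $2\le r\le n$. By Lemma~\ref{lem:theobound}, each such $f$ satisfies
$$|l_f(\T_1)-l_f(\T_2)|\le \left\lfloor (r-1)\left(\tfrac{n}{r}-1\right)\right\rfloor \le (r-1)\left(\tfrac{n}{r}-1\right).$$
Taking the maximum over all characters $f$, and then relaxing the integrality constraint on $r$ (which can only enlarge the bound), we obtain
$$d_{MP}(\T_1,\T_2)\le \max_{r\in[2,n]}(r-1)\left(\tfrac{n}{r}-1\right)\le \max_{r\in(0,\infty)}h(r), \qquad h(r):=(r-1)\left(\tfrac{n}{r}-1\right).$$

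Next I would analyze $h$ by elementary calculus. Expanding gives $h(r)=n+1-r-\tfrac{n}{r}$, so $h'(r)=-1+n/r^2$ and $h''(r)=-2n/r^3<0$ for $r>0$; hence $h$ is strictly concave on $(0,\infty)$ and its unique stationary point $r=\sqrt{n}$ is its global maximum. Since $1\le\sqrt{n}\le n$ for every $n\ge 1$, this maximizer lies in the relevant range. Substituting $r=\sqrt{n}$ yields
$$h(\sqrt{n})=n+1-\sqrt{n}-\frac{n}{\sqrt{n}}=n+1-2\sqrt{n},$$
and combining this with the displayed chain of inequalities gives $d_{MP}(\T_1,\T_2)\le n-2\sqrt{n}+1$, as claimed.

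I do not expect a genuine obstacle here: the argument is a direct application of Lemma~\ref{lem:theobound} followed by a one-line optimization. The only point deserving a word of care is the relaxation from integer $r$ to real $r$ --- one should state explicitly that passing to the continuous maximum only weakens the bound, so the conclusion remains valid --- together with the (harmless) observation that the right-hand side $n-2\sqrt{n}+1$ need not be an integer, which is acceptable for an upper bound on the integer-valued quantity $d_{MP}$. If a sharper integer bound were wanted, one could instead evaluate $\lfloor h(r)\rfloor$ at the integers nearest $\sqrt{n}$, but this refinement is not needed for the theorem as stated.
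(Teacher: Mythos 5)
Your proposal is correct and follows essentially the same route as the paper's own proof: apply Lemma~\ref{lem:theobound}, drop the floor, and maximize $(r-1)(\tfrac{n}{r}-1)$ over real $r$ by calculus, with the maximum $(\sqrt{n}-1)^2$ attained at $r=\sqrt{n}$. Your added remarks on the $r=1$ case and the integer-to-real relaxation are harmless refinements of the same argument.
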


\begin{proof} We define the function $f(r,n):= (r-1)\left( \frac{n}{r} -1\right)$. By considering the first and second derivative with respect to $r$, one can see that $f(r,n)$ is maximized at $r=\sqrt{n}$. Let $\tilde{f}$ be the character which maximizes $|l_f(\T_1)-l_f(\T_2)|$ and thus provides $d_{MP}(\T_1,\T_2)=|l_{\tilde{f}}(\T_1)-l_{\tilde{f}}(\T_2)|$. Let $\tilde{r}$ be the number of states employed by $\tilde{f}$. Then by Lemma \ref{lem:theobound}, we conclude $d_{MP}(\T_1,\T_2)=|l_{\tilde{f}}(\T_1)-l_{\tilde{f}}(\T_2)|\leq \lfloor(\tilde{r}-1)( \frac{n}{\tilde{r}} -1)\rfloor \leq (\tilde{r}-1)\left( \frac{n}{\tilde{r}} -1\right) = f(\tilde{r},n)\leq f(\sqrt{n},n) = (\sqrt{n}-1)^2=n - 2\sqrt{n} + 1.$ This completes the proof.
\end{proof}

Theorem \ref{thm:theobound} is useful in the sense that it provides an upper bound on $d_{MP}$, which is in fact a tight bound as can be seen in Figure \ref{fig:boundreached}: There, the trees $\T_1=((((1,2),3),(4,(5,6))),(7,(8,9)))$ and $\T_2=((((1,4),7),(2,(5,8))),$\\ $(3,(6,9)))$ are depicted together with the character $f=ACGACGACG$. The parsimony score of $f$ on $\T_1$ and $\T_2$ can be easily calculated with the Fitch algorithm \citep{fitch_1971} to be $l_f(\T_1)=6$ and $\l_f(\T_2)=2$. In particular, $f$ is convex on $\T_2$. By Theorem \ref{thm:theobound}, the upper bound of the MP distance two 9-taxon trees is $n - 2\sqrt{n} + 1=9-2\cdot 3 + 1= 4$. As $l_f(\T_1)-\l_f(\T_2)=4$, this implies that $f$ provides the MP distance and that the optimum is indeed achieved. Note that this theoretical bound is achieved here even though both trees under consideration are binary -- but examples for multifurcating trees can be constructed in a similar way. In fact, if we chose $\T_1$ to be the star tree, we would get the same result. 

\begin{figure}[ht]      \centering\vspace{0.5cm} 
  \includegraphics[width=10cm]{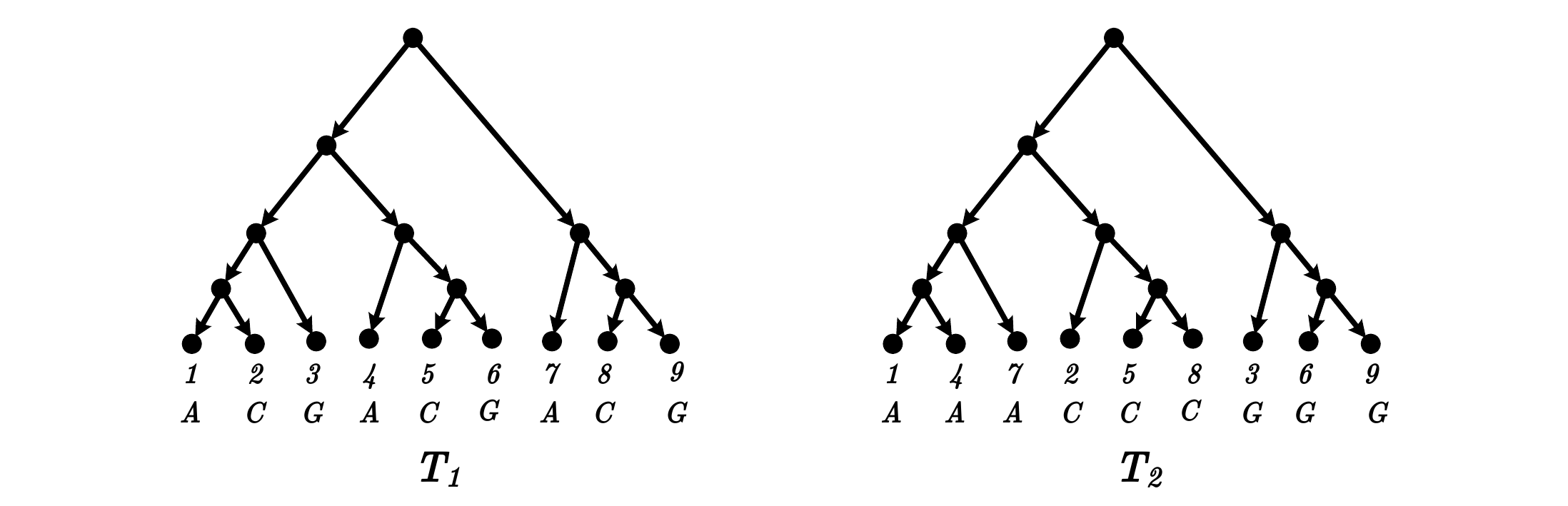}
    \caption{Two rooted binary phylogenetic $X$-trees $\T_1$ and $\T_2$ for $X=\{1,\ldots,9\}$ and the performance of the character $f=ACGACGACG$ on these trees. The corresponding parsimony scores are $l_f(\T_1)=6$ and $l_f(\T_2)=2$, respectively. The MP distance $d_{MP}(\T_1,\T_2)$ can be shown to equal 4. It is achieved by $f$.}\label{fig:boundreached}

\end{figure}

\subsubsection{The number of character states needed to maximize the parsimony difference between two trees}\label{sec:statenumber}

We continue by considering the obvious question whether or not the MP distance can be maximized by considering only characters with a fixed number of states. For instance, one could wonder if the optimum performance difference can always be reached by a binary character. The following theorem leads to the conclusion that this is unfortunately not the case.

\begin{theorem} \label{thm:noconstantstatenumber}
Let $r >1$. Then there exist two binary phylogenetic trees $\T_1, \T_2$ on a set of $n=(r+1)^2$ taxa and an $(r+1)$-state character $f$ such that $|l_f(\T_1)-l_f(\T_2)| > |l_{\hat{f}}(\T_1)-l_{\hat{f}}(\T_2)|$ for all $r$-state characters $\hat{f}$.
\end{theorem}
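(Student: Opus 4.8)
The plan is to build an explicit family of examples, generalising the $r=3$ instance of Figure~\ref{fig:boundreached} (which already realises $d_{MP}=n-2\sqrt n+1$ with a $3$-state character on $n=9$ taxa and shows no binary character can do as well). For general $r$, I would take $n=(r+1)^2$ and choose $\T_1,\T_2$ together with an $(r+1)$-state character $f$ whose state classes all have size exactly $r+1$, arranged so that $f$ is convex on $\T_2$ (score $(r+1)-1=r$) while its score on $\T_1$ equals the maximum permitted by Lemma~\ref{lem:badchar}, namely $(r+1-1)\cdot\frac{n}{r+1}=r(r+1)$. This gives $|l_f(\T_1)-l_f(\T_2)| = r(r+1)-r = r^2 = f(r+1,n)$ in the notation of the proof of Theorem~\ref{thm:theobound}, i.e. exactly the upper bound $n-2\sqrt n+1$, since $\sqrt n = r+1$. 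A clean way to get such trees is the ``grid/caterpillar'' construction hinted at in the last lines of that proof: arrange the $(r+1)^2$ taxa in an $(r+1)\times(r+1)$ array, let $\T_2$ group taxa by rows (so each row is one state class of $f$, making $f$ convex on $\T_2$) and let $\T_1$ group them by columns in a caterpillar-like fashion so that every column meets every state class, forcing the Fitch score on $\T_1$ up to $r(r+1)$. One can take both trees to be caterpillars to keep the Fitch computation transparent, and they can be made binary by refining arbitrarily without decreasing $l_f(\T_1)$ (Lemma~\ref{lem:refine}) or affecting convexity on $\T_2$.

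The core of the argument is then the comparison step: I must show that no $r$-state character $\hat f$ achieves a difference as large as $r^2$ on these particular $\T_1,\T_2$. Here Lemma~\ref{lem:theobound} does most of the work: any $r$-state character satisfies $|l_{\hat f}(\T_1)-l_{\hat f}(\T_2)| \le \lfloor (r-1)(\tfrac{n}{r}-1)\rfloor$, and with $n=(r+1)^2$ this bound equals $\lfloor (r-1)\cdot\tfrac{(r+1)^2-r}{r}\rfloor = \lfloor (r-1)(r+2+\tfrac1r)\rfloor = (r-1)(r+2)+\lfloor\tfrac{r-1}{r}\rfloor = r^2+r-2$. Unfortunately $r^2+r-2 > r^2$ for $r>1$, so Lemma~\ref{lem:theobound} alone is \emph{not} sharp enough — and this is the main obstacle. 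I expect to need a finer, instance-specific argument showing that on \emph{these} trees an $r$-state character cannot exceed $r^2-1$ (or at least cannot reach $r^2$). Concretely: if $\hat f$ is an $r$-state character that is (WLOG, via Theorem~\ref{thm:convex}) convex on $\T_2$, then $l_{\hat f}(\T_2)=r-1$ and we would need $l_{\hat f}(\T_1)\ge r^2+r-1$; but $l_{\hat f}(\T_1)\le \lfloor(r-1)\tfrac nr\rfloor = r^2+r-1$ exactly, so equality would have to hold, which by the tightness analysis in Lemma~\ref{lem:badchar} forces all $r$ state-classes of $\hat f$ to have size $\lceil n/r\rceil$ or $\lfloor n/r\rfloor$ and forces $\T_1$ to behave like a star relative to $\hat f$ — and I would derive a contradiction from the fact that $\T_1$, being a caterpillar with the specific column-structure, cannot simultaneously be star-like for $\hat f$ and admit $\hat f$ convex on $\T_2$ (the row/column incidence pattern is too rigid). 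Pinning down this contradiction — essentially a combinatorial statement that an $(r+1)\times(r+1)$ grid cannot be $r$-coloured with near-equal classes so as to be both convex on the row-tree and maximally bad on the column-tree — is where the real work lies.

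An alternative, possibly cleaner route for the comparison step is to compute $\max_{\hat f}|l_{\hat f}(\T_1)-l_{\hat f}(\T_2)|$ over $r$-state $\hat f$ \emph{directly} for the chosen caterpillars, exploiting that the Fitch score of any character on a caterpillar has a simple closed form (it counts the number of ``alternations'' along the backbone). One then argues that with only $r$ states available, either some state class must span at least two rows of $\T_2$ — killing convexity there and costing at least one extra mutation on $\T_2$ — or the induced partition is too coarse to create $r^2$ alternations along $\T_1$'s backbone; either way the difference drops below $r^2$. I would also double-check the boundary case $r=2$ ($n=9$, ternary versus binary), which should reduce to an explicit $9$-taxon check of the type already carried out in Lemma~\ref{lem:fixedstatenumber}. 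Finally, I would remark that, combined with Theorem~\ref{thm:theobound}, this shows the bound $n-2\sqrt n+1$ is attained precisely by a $\sqrt n$-state character and genuinely cannot be attained with fewer states, so restricting to any fixed state number strictly weakens the achievable MP distance as $n$ grows.
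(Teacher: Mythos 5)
Your construction is exactly the paper's (an $(r+1)\times(r+1)$ grid of taxa, $\T_2$ grouping by rows so that $f$ is convex there with score $r$, $\T_1$ grouping by consecutive blocks/columns so that every pendant $(r+1)$-taxon subtree contains all $r+1$ states and $l_f(\T_1)=r(r+1)$), and the lower bound $|l_f(\T_1)-l_f(\T_2)|=r(r+1)-r=r^2$ is correct. The gap is the comparison step, which you explicitly leave open (``this is where the real work lies''). But the only reason you believe hard instance-specific work remains is an arithmetic slip: with $n=(r+1)^2$ you have $\frac{n}{r}-1=\frac{r^2+r+1}{r}=r+1+\frac{1}{r}$, \emph{not} $r+2+\frac{1}{r}$, so Lemma \ref{lem:theobound} gives, for every $r$-state character $\hat f$, $|l_{\hat f}(\T_1)-l_{\hat f}(\T_2)|\leq\lfloor(r-1)(r+1+\tfrac{1}{r})\rfloor=\lfloor r^2-\tfrac{1}{r}\rfloor=r^2-1<r^2$. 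The generic bound is therefore already sharp enough, and the entire speculative second half of your proposal (the equality/rigidity analysis, the alternation-counting alternative) is unnecessary: the theorem follows from your construction plus one correct application of Lemma \ref{lem:theobound}.

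Two further remarks. First, your ``route 1'' begins with an invalid step: you cannot assume, via Theorem \ref{thm:convex}, that an optimal \emph{$r$-state} character is without loss of generality convex on $\T_2$, because the convexification constructed there is a refinement that may use strictly more than $r$ states --- this is precisely the phenomenon recorded in Lemma \ref{lem:fixedstatenumber}, which you yourself cite. Had the generic bound really been insufficient, this route would not have repaired it. Second, for comparison, the paper closes the gap by a direct count on the constructed trees rather than by Lemma \ref{lem:theobound}: an $r$-state character restricted to a pendant subtree with $r+1$ leaves must repeat some state, so it incurs at most $r-1$ mutations per subtree (this is Lemma \ref{lem:badchar} applied locally), and at most $r-1$ further mutations arise on the spine edges of $\T_1$; hence $l_{\hat f}(\T_1)\leq(r+1)(r-1)+(r-2)$ while $l_{\hat f}(\T_2)\geq r-1$, giving a difference of at most $r^2-2$. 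Either route is fine once written down; yours, with the arithmetic corrected, is arguably the shorter of the two.
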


\begin{proof} We give an explicit construction of $\T_1$ and $\T_2$ as rooted binary trees, but for unrooted trees the root can later on be ignored. We draw a root node and continue by adding a subtree of size $r+1$ taxa to the right hand side of the root. This subtree may have any binary topology, for instance it can be chosen to be a so-called caterpillar. On the left hand side of the root, we draw an edge leading to a new node, which we regard the root of a new subtree. With this new root we continue just as before by adding another subtree of size $r+1$ to the right. We continue this procedure until there are $r$ such subtrees to the right of their corresponding parental node. For the last such node we create another subtree of size $r+1$ taxa and add it to the left hand side of this node in a way that the entire tree remains binary. Altogether, there are now $r+1$ subtrees of $r+1$ taxa each. 

Now for $\T_1$, label the leaves $1,2,3,...,(r+1)^2$ from the left to the right. Thus, the leftmost subtree will have the labels $1,...,r+1$, the second subtree $r+2, r+3,..., 2(r+1)$, and so on. For $\T_2$, keep the same tree topology of $r+1$ subtrees, but use a different leaf labelling: Label the leftmost subtree $1, (r+1)+1, 2(r+1)+1,..., k (r+1)+1$, the second subtree $2, (r+1)+2, 2(r+1)+2,...,k(r+1)+2$, and so on. The last subtree will be labelled $r+1, (r+1)+(r+1), 2(r+1)+(r+1),...,r(r+1)+(r+1)$. For $r=2$, this will lead to the trees shown in Figure \ref{fig:boundreached}. 

Now introduce character $f:= \underbrace{c_1,c_2,c_3,...,c_{r+1}}_{(r+1) times}$ \\$= c_1,c_2,c_3,...,c_{r+1},c_1,c_2,c_3,...,c_{r+1},\ldots$ for some distinct $r+1$ character states \\ $c_1,\ldots,c_r,c_{r+1}$. Since in every pending $(r+1)$-taxon subtree of $\T_1$ all $r+1$ character states appear, the score of this character on each of these subtrees is $r$ and the total MP score on $\T_1$ is $l_f(\T_1)= (r+1)r$. According to the labelling of $\T_2$, $f$ is convex there as the first pending $(r+1)$-taxon subtree only contains taxa that are in state $c_1$, the second one only taxa of state $c_2$, and so on. Thus, by construction, the MP score on $\T_2$ is $l_f(\T_2)=(r+1)-1=r$. Altogether, the difference equals $|l_f(\T_1)- l_f(\T_2)| = (r+1)r-r=r^2$.

Next we show that for every $r$-state character $\hat{f}$, the difference $|l_{\hat{f}}(\T_1)- l_{\hat{f}}(\T_2)|$ is smaller than $r^2$. Any $r$-state character has a parsimony score of at least $r-1$ on any phylogenetic tree. Furthermore, on the two given trees the maximum MP score, say on $\T_1$, will be obtained by a character that maximizes the score on each of the pending $(r+1)$-size subtrees. This can be achieved by using all $r$ character states on each of these subtrees. Since they all have $r+1$ leaves, this means that on every subtree one character state will appear twice and the score obtained in any of these subtrees is $r-1$. If a different root state is suggested by MP for $r$ of the $r+1$ pending subtrees, on all but two edges leading to these subtrees a substitution will be suggested by MP. Then, the MP score is $(r+1)(r-1) + (r-2)$. So altogether, the maximum difference is at most $(r+1)(r-1)+(r-2)-(r-1)=(r-1)(r+1)=r^2-2 <r^2$. This completes the proof.
\end{proof}

Note that Figure \ref{fig:boundreached} illustrates the construction described in the proof of Theorem \ref{thm:noconstantstatenumber} for the case where $r=2$: The two trees depicted here have an MP distance of at least 4, which can be seen by considering the depicted 3-state character. However, by Lemma \ref{lem:theobound}, for all binary characters $\hat{f}$ on nine taxa we obtain a difference $|l_{\hat{f}}(\T_1)-l_{\hat{f}}(\T_2)|\leq \lfloor(2-1)(\frac{9}{2}-1)\rfloor=3$. This means that no binary character reaches a difference of $4$ for those two trees. 

So there exist two trees where more than two states can give a higher difference, and in general, by Theorem \ref{thm:noconstantstatenumber}, no constant number of character states is sufficient as the described construction can be extended to more taxa. Thus, the optimal number of states to employ depends on the tree shapes of the trees under consideration. This can be seen when considering the theoretical bound of 3 for binary characters on nine taxa: For the two trees displayed in Figure \ref{fig:boundreached}, not even the bound of 3 can be reached, as an exhaustive search reveals that there is no binary character $f$ such that $l_f(\T_1)=4$ and at the same time $\l_f(\T_2)=1$ or vice versa.

All this already gives a hint to the complexity of the underlying problem, which we analyze further in the subsequent section.

\section{On the complexity of calculating the MP distance}\label{sec:complexity}

In this section we show that computation of MP distance is NP-hard on non-binary trees, and that
a fixed-state variant of the problem is also NP-hard. As we shall see the NP-hardness reductions do
not, in their present form, work for binary trees. We address this issue further in Section \ref{sec:discussion}.

 We begin, however,
with a positive result which introduces several key concepts used by the more involved hardness results.

\begin{lemma}
Let $\T_1$ and $\T_2$ be two (not necessarily binary) phylogenetic $X$-trees with $|X|=n$, where at least
one of the trees is a star tree. Then  $d_{MP}(\T_1,\T_2)$ can be computed in polynomial time.
\end{lemma}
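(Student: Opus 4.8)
The plan is to reduce the computation to a classical tree-partitioning problem. Assume without loss of generality that $\T_2$ is the star tree (if both trees are stars then $\T_1=\T_2$ and the distance is $0$). Two elementary observations get things started. First, the star tree maximises the parsimony score: for every character $f$ we have $l_f(\T_1)\le l_f(\T_2)$, since a most parsimonious extension on $\T_2$ yields an extension on $\T_1$ of no greater length by contracting all internal edges of $\T_1$ and colouring the resulting internal nodes with the state assigned to the centre of the star (the same fact underlying the proof of Lemma~\ref{lem:badchar}). Second, if $f$ induces the partition $X_1|\dots|X_r$ with $n_i:=|X_i|$, then $l_f(\T_2)=n-\max_i n_i$, because an optimal extension on the star labels the centre with a most frequent state. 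Hence the absolute value in Definition~\ref{def:MPdist} disappears and $d_{MP}(\T_1,\T_2)=\max_f\bigl(n-\max_i n_i-l_f(\T_1)\bigr)$.

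Next I would restrict the maximisation to characters that are convex on $\T_1$. By Theorem~\ref{thm:convex} some optimal character is convex on $\T_1$ or on $\T_2$; but a character convex on the star satisfies $l_f(\T_2)=r-1\le l_f(\T_1)$, so it contributes at most $0$, and one checks that $d_{MP}(\T_1,\T_2)=\max\{\,n-(r-1)-\max_i n_i : f \text{ is } r\text{-state and convex on }\T_1\,\}$. The structural heart of the argument is the correspondence between characters convex on $\T_1$ and edge subsets of $\T_1$: a partition of $X$ into $r$ blocks is realised by a character convex on $\T_1$ if and only if there is a set $F$ of $r-1$ edges of $\T_1$ such that the components of $\T_1\setminus F$ are exactly the subtrees spanning the blocks (one direction colours each component by its block, incurring one mutation per deleted edge; the other takes the $r-1$ mutation edges of a most parsimonious extension, whose deletion leaves $r$ monochromatic components that must be in bijection with the $r$ states). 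Writing $\operatorname{comp}(F)$ for the components of $\T_1\setminus F$ and $\ell(C)$ for the number of leaves in $C$, and observing that a deletion set producing a leafless component can always be shrunk (delete one fewer edge, merging that component into a neighbour, which only increases the objective), I obtain
$$d_{MP}(\T_1,\T_2)\;=\;\max_{F\subseteq E(\T_1)}\Bigl(n-|F|-\max_{C\in\operatorname{comp}(F)}\ell(C)\Bigr).$$

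Finally I would evaluate this maximum by thresholding on the size of the largest component. For $1\le M\le n$ let $\kappa(M)$ be the minimum number of edges whose deletion from $\T_1$ leaves every component with at most $M$ leaves. A routine comparison of the two sides gives $d_{MP}(\T_1,\T_2)=\max_{1\le M\le n}\bigl(n-M-\kappa(M)\bigr)$: "$\ge$" uses the deletion set attaining $\kappa(M)$, and "$\le$" applies the definition of $\kappa$ to the largest component of an optimal $F$. The values $\kappa(M)$ are precisely the outputs of the classical tree-partitioning problem "delete as few edges as possible so that every part has bounded weight" with weight $1$ on leaves and $0$ on internal vertices, computable in linear time by a bottom-up dynamic program on $\T_1$ (root arbitrarily; at each vertex store the minimum number of deletions made so far in its subtree together with the number of leaves currently attached to it, cutting a child edge greedily whenever that count would exceed $M$). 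Running this for each of the $n$ candidate values of $M$ computes $d_{MP}(\T_1,\T_2)$ in polynomial time.

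I expect the main obstacle to be the second paragraph: pinning down the exact correspondence between characters convex on $\T_1$ and edge-deletion sets — in particular that the mutation edges of an optimal extension biject with the states, and that leafless components may be removed at no cost — and then verifying that the resulting combinatorial objective is faithfully captured by the thresholded quantities $n-M-\kappa(M)$. Once that is in place, the reduction to and correctness of the classical tree-partitioning dynamic program is essentially off the shelf.
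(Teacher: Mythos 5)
Your proposal is correct and follows essentially the same route as the paper's proof: both restrict to characters convex on the non-star tree, use that the star's score is $n$ minus the size of the largest block, identify such convex characters with edge-deletion sets of the non-star tree, and then solve, for each threshold on the largest component, the classical tree-partitioning problem. Your formula $\max_{M}\bigl(n-M-\kappa(M)\bigr)$ is exactly the paper's minimization of $m(f)+r(f)$ over valid pairs, written out explicitly.
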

\begin{proof}
If $\T_1$ and $\T_2$ are identical then $d_{MP}(\T_1,\T_2)=0$ and we are done, so let us assume that
$\T_1$ is a star, $\T_2$ is not a star and  $d_{MP}(\T_1, T_2) > 0$. Without loss of generality we also assume that
$\T_1$ and $\T_2$ are rooted. Clearly, $\T_2$ is a refinement of $\T_1$, so for any character $f$ we have by Lemma \ref{lem:refine} $l_f(\T_1) \geq l_f(\T_2)$. If we combine this with Theorem \ref{thm:convex} then we know that
there exists a character $\tilde{f}$ such that $l_{\tilde{f}}(\T_1) -  l_{\tilde{f}}(\T_2) = d_{MP}(\T_1, T_2)$ and
$\tilde{f}$ is convex for $\T_2$.  Let $r(\tilde{f})$ be the number of states in $\tilde{f}$.
The value $l_{\tilde{f}}(\T_1)$ is equal to $n-m(\tilde{f})$, where $m(\tilde{f})$ is the frequency of the most frequently used state in $\tilde{f}$. Hence $d_{MP}(\T_1, \T_2) =  (n-m(\tilde{f})) - (r(\tilde{f})-1) = (n+1)-(m(\tilde{f})+r(\tilde{f}))$. To construct such an
$\tilde{f}$ it is sufficient to construct a character $f$ that is convex for $\T_2$ and such that $m(f)+r(f)$ is
minimized. We can do this by exhaustively trying all possible pairs $(m,r)$, where $1 \leq m, r \leq n$, and returning the minimum value of $m+r$ ranging over all ``valid'' pairs $(m,r)$. A pair $(m,r)$ is valid if there exists a character
$f$ that is convex for $\T_2$ such that $m(f) \leq m$ and $r(f) \leq r$. Due to minimality it is sufficient to consider
$(m,r)$ pairs in order of increasing $m+r$ and to stop as soon as a valid pair is encountered. The only task that remains is
to determine validity in polynomial time. To do this we first observe that, if a character $f$ has exactly $r$ states
and is convex for $\T_2$, then any optimal extension of $f$ to the interior nodes of $\T_2$ naturally induces
$r-1$ edges upon which mutations between character states occur. If these $r-1$ edges are removed, $\T_2$ is partitioned into exactly $r$ components, and $m(f)$ is then equal to the maximum number of taxa in any
of these components. Hence, for a given value $m$, determining the smallest value $r$ such that $(m,r)$ is
valid, is equivalent to the question: what is the smallest number of edge cuts I need to make to $\T_2$ to ensure
that every resulting component contains at most $m$ taxa? Fortunately this problem can be solved in polynomial time by giving each taxon weight 1, each inner node weight 0, and using the polynomial-time ``tree partitioning'' algorithm in \cite{treepartitioning} to compute an optimal $m$-partition. $\Box$ 
\end{proof}

Let $d_{MP}^{i}(\T_1,\T_2)$ denote the MP distance of two trees when restricted to characters with at most $i$ states. Note that, for constant $i$,
there is no obvious relationship between the complexity of computing $d_{MP}^{i}(\T_1,\T_2)$
and $d_{MP}(\T_1,\T_2)$. However, as we will now show, both problems are NP-hard.
We start with the hardness proof for $d_{MP}^{2}(\T_1,\T_2)$. The proof that
$d_{MP}(\T_1,\T_2)$ is NP-hard will use similar, but somewhat more complex, techniques. First, we present some auxiliary results.

\begin{observation}
\label{obs:mostone} 
Let $f$ be a character on $X$ and $\T$ a tree on $X$. Let $f'$ be any character obtained from $f$ by changing the state of exactly one taxon. Then
$l_f(\T) - 1 \leq l_{f'}(\T) \leq l_f(\T) + 1$ i.e. the parsimony score can change by at most one.
\end{observation}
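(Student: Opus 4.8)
The plan is to prove Observation~\ref{obs:mostone} by exhibiting, from a most parsimonious extension of one character, a valid (not necessarily optimal) extension of the other whose number of mutating edges differs by at most one; since the parsimony score is the minimum over all extensions, this bounds the score of the second character in terms of the first. By symmetry between $f$ and $f'$ (note that $f$ is obtainable from $f'$ by changing the state of exactly one taxon, namely back to its original state), it suffices to prove just the inequality $l_{f'}(\T) \leq l_f(\T) + 1$; applying this with the roles of $f$ and $f'$ reversed then yields $l_f(\T) \leq l_{f'}(\T) + 1$, i.e. $l_{f'}(\T) \geq l_f(\T) - 1$, and the two together give the claim.

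First I would fix the taxon $x \in X$ whose state is changed, so $f(y) = f'(y)$ for all $y \neq x$ and $f(x) \neq f'(x)$. Let $g$ be a most parsimonious extension of $f$ to $V(\T)$, so that $l_g(\T) = l_f(\T)$. Define an extension $g'$ of $f'$ by setting $g'(x) := f'(x)$ and $g'(v) := g(v)$ for every other vertex $v \in V(\T) \setminus \{x\}$. This is a legitimate extension of $f'$, since $g'$ agrees with $f'$ on all of $X$: on $x$ by construction, and on every other leaf because $g$ agrees with $f$ which agrees with $f'$ there.

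Next I would compare $l_{g'}(\T)$ with $l_g(\T)$ edge by edge. The only vertex where $g$ and $g'$ differ is $x$, so the only edges whose ``mutation status'' can possibly change are the edges incident to $x$. Here the key point is that $x$ is a leaf (a taxon), hence has degree $1$ in $\T$, so there is exactly one such edge, say $e = \{x, u\}$. On every edge not equal to $e$, $g$ and $g'$ assign identical labels to both endpoints, so it mutates under $g'$ if and only if it mutates under $g$. On the single edge $e$, the status may flip either way. Therefore $l_{g'}(\T)$ and $l_g(\T)$ differ by at most one, giving $l_{f'}(\T) \leq l_{g'}(\T) \leq l_g(\T) + 1 = l_f(\T) + 1$. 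The argument contains no real obstacle; the only point requiring a word of care is the use of the fact that a taxon is a leaf and thus has a unique incident edge — if $x$ were an internal vertex the same bound would fail in general. This completes the proof.
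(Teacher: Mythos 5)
Your proof is correct and uses essentially the same idea as the paper's: take a most parsimonious extension of one character, reuse it for the other, and observe that only the single pendant edge at the altered taxon can change status, so the score moves by at most one. The paper phrases one direction as a proof by contradiction rather than invoking the symmetry between $f$ and $f'$, but the underlying argument is the same.
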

\begin{proof}
Suppose $l_{f'}(\T) \leq l_{f}(\T) - 2$. Consider any extension of $f'$ to the interior
nodes of $\T$ that achieves $l_{f'}(\T)$ mutations. Using the same extension but on $f$
gives at most $l_{f'}(\T)+1$ mutations, because only one taxon changed state. So
$l_{f}(\T) \leq l_{f'}(\T)+1 \leq l_{f}(\T)-1$, which is a contradiction. In the other direction,
take any optimal extension of $f$ and apply it to $f$'. At most one extra mutation will be created, so $l_{f'}(\T) \leq l_{f}(\T) + 1$.
\end{proof}

\begin{lemma}
\label{lem:monochrome}
Let $f$ be an optimal character for two trees $\T_1$ and $\T_2$ i.e. $d_{MP}(\T_1, \T_2) = |l_f(\T_2) - l_f(\T_1)|.$
Without loss of generality assume $l_f(\T_1) < l_f(\T_2)$. Then we can construct in polynomial time an optimal character $f'$
%\textcolor{green}{I don't think we defined the term "optimal character" yet -- should we do this %in the notation section just where we introduce $d_{MP}$?}
with the
following property: $l_{f'}(\T_1) < l_{f'}(\T_2)$ and for each vertex $u$ of $\T_1$ such that all $u$'s children are leaves, $f'$ assigns all the children of $u$ the same state. 
\end{lemma}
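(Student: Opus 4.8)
The plan is to transform $f$ into $f'$ by repeatedly ``cleaning up'' cherries (and more generally, vertices all of whose children are leaves) of $\T_1$, one at a time, showing that each cleanup step never decreases $l_f(\T_2) - l_f(\T_1)$. Fix a vertex $u$ of $\T_1$ whose children $x_1, \dots, x_d$ are all leaves, and suppose they do not all have the same state under $f$. Consider an optimal extension $g$ of $f$ to $\T_1$; let $s = g(u)$ be the state assigned to $u$. The key observation is that in $\T_1$, the edges from $u$ to its children contribute exactly one mutation for each $x_i$ with $f(x_i) \ne s$, and the rest of the tree ``sees'' only the state $s$ at $u$; so if we recolour \emph{all} of $x_1, \dots, x_d$ to $s$, obtaining a character $f^\ast$, then $l_{f^\ast}(\T_1) \le l_f(\T_1)$ — in fact the contribution of this local region drops to $0$ while nothing else is forced to change. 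So the parsimony score on $\T_1$ does not go up.

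The harder direction is controlling what happens on $\T_2$. Here I would argue by the triangle-inequality / one-taxon-at-a-time bound of Observation \ref{obs:mostone}: passing from $f$ to $f^\ast$ changes the states of at most $d-1$ taxa (all the children of $u$ that disagreed with $s$), so $l_{f^\ast}(\T_2) \ge l_f(\T_2) - (d-1)$. That alone is too weak. The fix is to not recolour to an arbitrary state $s$ but to choose the target state more cleverly, and — more importantly — to do the recolouring in a way that \emph{simultaneously} accounts for $\T_2$. Concretely: among the children $x_1, \dots, x_d$ of $u$, pick the state $c$ that occurs most frequently, and recolour all children of $u$ to $c$. Since $c$ is a majority (or plurality) state among the children, the Fitch/Hartigan bottom-up rule at $u$ in $\T_1$ already would have wanted $u$ to carry $c$ (or at least $c$ is among the optimal choices), so again $l_{f^\ast}(\T_1) \le l_f(\T_1)$; and the number of taxa whose state changes is $d - (\text{frequency of } c) \le d - 1$, but now I would pair this with the fact that on $\T_2$ these $d$ leaves $x_1,\dots,x_d$ are spread out, and each individual recolouring of a single leaf can decrease $l_{\cdot}(\T_2)$ by at most one — but also, crucially, \emph{increasing} the multiplicity of $c$ among these taxa cannot decrease the parsimony contribution on $\T_2$ in a way that hurts us once we combine with the strict inequality $l_f(\T_1) < l_f(\T_2)$.

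Let me restate the combining argument, since that is the crux. After the cleanup, $l_{f^\ast}(\T_1) \le l_f(\T_1)$ and $l_{f^\ast}(\T_2) \ge l_f(\T_2) - (d-1)$, which gives $l_{f^\ast}(\T_2) - l_{f^\ast}(\T_1) \ge \big(l_f(\T_2) - l_f(\T_1)\big) - (d-1)$ — not good enough on its own. To get the genuine inequality I would instead recolour one child at a time: replace a child $x_i$ with $f(x_i) \ne c$ by one with state $c$. Each such single step changes $l(\T_1)$ by at most one and $l(\T_2)$ by at most one (Observation \ref{obs:mostone}). I claim each such step is ``safe'': the $\T_1$-score is nonincreasing (because we are moving a leaf label toward the plurality state of a cherry-like vertex — this can be checked directly from the Hartigan rule, or by exhibiting an explicit extension), and among the safe steps at least one does not decrease the $\T_2$-score (if \emph{every} available step strictly decreased the $\T_2$-score, we would eventually reach a state where the whole cherry is monochromatic with $\T_2$-score far below $l_f(\T_2)$, which combined with $l(\T_1)$ staying below it contradicts optimality of $f$). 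Iterating this over all children of $u$, and then over all such vertices $u$ of $\T_1$, yields $f'$ with $l_{f'}(\T_1) \le l_f(\T_1) < l_f(\T_2) \le l_{f'}(\T_2)$, so $|l_{f'}(\T_2) - l_{f'}(\T_1)| \ge |l_f(\T_2) - l_f(\T_1)| = d_{MP}(\T_1,\T_2)$, forcing equality; optimality of $f'$ and the monochromatic-cherry property both hold, and the whole process is clearly polynomial since there are $O(n)$ relevant vertices and $O(n)$ leaves. The main obstacle, as the above makes clear, is precisely making the ``at least one safe step does not hurt $\T_2$'' argument airtight — i.e. ensuring the local surgery on $\T_1$ can always be done without eroding the gap on $\T_2$; everything else is bookkeeping with Observation \ref{obs:mostone} and the Fitch--Hartigan rule.
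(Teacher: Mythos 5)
Your overall plan (recolour the children of $u$ one taxon at a time towards a most frequent state, and control both trees via Observation \ref{obs:mostone}) is the right skeleton and matches the paper's, but the combining argument you yourself identify as the crux is aimed at the wrong inequality, and as stated it fails. You try to find a recolouring step for which $l(\T_2)$ does \emph{not} decrease, so as to conclude $l_{f'}(\T_2)\geq l_f(\T_2)$ at the end. Such a step generally does not exist: if a single-taxon change strictly decreases $l(\T_1)$ by one while leaving $l(\T_2)$ unchanged, the difference $l(\T_2)-l(\T_1)$ would \emph{increase}, contradicting the maximality of $f$. So every step that makes progress on $\T_1$ must strictly decrease $l(\T_2)$ as well; your attempted contradiction (ending with monochromatic children and a much smaller $\T_2$-score) is in fact the normal, harmless outcome rather than an absurdity, and the target chain $l_{f'}(\T_1)\le l_f(\T_1)<l_f(\T_2)\le l_{f'}(\T_2)$ is unattainable in general.

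The repair is to run the argument in the opposite direction, which is what the paper does. Recolour one disagreeing child of $u$ to the state $s$ that an optimal (Fitch) extension assigns to $u$ itself --- not merely to a plurality state of the children, since in case of ties a plurality state need not be realizable at $u$ by any optimal extension, and then you only get $l_{f^*}(\T_1)\le l_f(\T_1)$ rather than a strict drop. Reusing that same extension shows $l_{f^*}(\T_1)<l_f(\T_1)$, hence $l_{f^*}(\T_1)=l_f(\T_1)-1$ by Observation \ref{obs:mostone}. The same observation gives $l_{f^*}(\T_2)\ge l_f(\T_2)-1$, while optimality of $f$ forces $l_{f^*}(\T_2)\le l_f(\T_2)-1$ (otherwise the difference would exceed $d_{MP}(\T_1,\T_2)$). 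So both scores drop by exactly one, the difference is preserved, $f^*$ is again optimal with $l_{f^*}(\T_1)<l_{f^*}(\T_2)$, and the process iterates, with each taxon changing state at most once, giving polynomial time.
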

\begin{proof}
Consider a vertex $u$ of $\T_1$ such that all its children are taxa, but such that $f$ assigns
the children two or more states. We calculate an optimal extension of $f$ to the interior
nodes of $\T_1$ by applying Fitch's algorithm. Fitch will allocate a most frequently
occuring state amongst the children of $u$, to $u$. (If there is a unique such state then
Fitch will choose it, otherwise it will break ties in the top-down phase of the algorithm). Let
$s$ be the state allocated to $u$. Choose a child of $u$ that does not have state $s$ and
change its state to $s$. This yields a new character $f^{*}$. Clearly, $l_{f^{*}}(\T_1) < l_{f}(\T_1)$,
simply by using the same extension that Fitch gave. Combining this with Observation \ref{obs:mostone} gives
$l_{f^{*}}(\T_1) = l_{f}(\T_1) - 1$ and thus $l_{f^{*}}(\T_2) = l_{f}(\T_2)-1$ (otherwise $f$
could not have been optimal). Hence, $f^{*}$ is also an optimal character, and
$l_{f^{*}}(\T_1) < l_{f^{*}}(\T_2)$. This process can be iterated as long as necessary
until all the children of $u$ have the same state. (Note that in subsequent iterations
Fitch will definitely assign state $s$ to $u$, because $s$ will have become the unique
most frequently occurring state amongst the children of $u$). Then we can iterate
the process on other vertices $u'$ whose children do not all have the same state,
for as long as necessary. Polynomial time is guaranteed since the state of each taxon is changed at most once.
\end{proof}

\begin{observation}
\label{obs:boundedmonochrome}
Lemma \ref{lem:monochrome} also holds for optimal characters under the $d_{MP}^{i}(\T_1, \T_2)$ model.
\end{observation}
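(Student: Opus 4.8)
The plan is to observe that every modification made in the proof of Lemma \ref{lem:monochrome} replaces the state of a single taxon by a state that \emph{already} appears in the character, so it can never increase the number of states used. First I would recall the single step of that construction: one picks a vertex $u$ of $\T_1$ all of whose children are taxa, lets $s$ be the state Fitch assigns to $u$ (a most frequent state among $u$'s children, hence a state currently present in the character), and recolours one non-$s$ child of $u$ with $s$. The resulting character $f^{*}$ therefore uses a subset of the states of $f$; in particular, if $f$ has at most $i$ states then so does $f^{*}$, and the same holds after iterating the step over all offending vertices.

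Next I would check that the optimality bookkeeping still closes inside the restricted model. Observation \ref{obs:mostone} is stated for arbitrary characters and so applies unchanged, yielding $l_{f^{*}}(\T_1) = l_f(\T_1) - 1$. The only place the original proof invoked optimality of $f$ was to deduce $l_{f^{*}}(\T_2) = l_f(\T_2) - 1$; here $f$ is optimal merely among characters with at most $i$ states, but by the previous paragraph $f^{*}$ is itself such a character, hence a legitimate competitor, so the identical contradiction argument applies --- were $l_{f^{*}}(\T_2) \geq l_f(\T_2)$ we would obtain $l_{f^{*}}(\T_2) - l_{f^{*}}(\T_1) > l_f(\T_2) - l_f(\T_1) = d_{MP}^{i}(\T_1,\T_2)$. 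The termination and polynomial-time claims (each taxon recoloured at most once; after the first recolouring at $u$ the state $s$ is the unique Fitch choice there) are purely combinatorial and make no reference to the number of states, so they transfer verbatim.

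The hard part, such as it is, is exactly the first paragraph: confirming that recolouring to an already-present state cannot push the character over the $i$-state budget; everything else is a line-by-line re-reading of the proof of Lemma \ref{lem:monochrome} with ``optimal character'' consistently read as ``optimal character with at most $i$ states''. I expect no genuine obstacle, which is presumably why the authors state it as an observation rather than a lemma.
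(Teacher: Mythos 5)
Your proposal is correct and rests on exactly the same single observation as the paper's (one-line) proof: the recolouring step in Lemma \ref{lem:monochrome} only reuses a state already present, so the number of states never increases and $f^{*}$ remains a legitimate competitor in the $d_{MP}^{i}$ model. The additional bookkeeping you spell out (that Observation \ref{obs:mostone} and the optimality contradiction go through verbatim) is implicit in the paper and checks out.
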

\begin{proof}
The transformation in the proof of Lemma \ref{lem:monochrome} does not increase the number of states in the
character.
\end{proof}

% for some reason the fig:tv guy is slightly smaller than fig:te, that's why I've made
% him bigger

\begin{figure}[ht]      \centering\vspace{0.5cm} 
       \includegraphics[width=13cm]{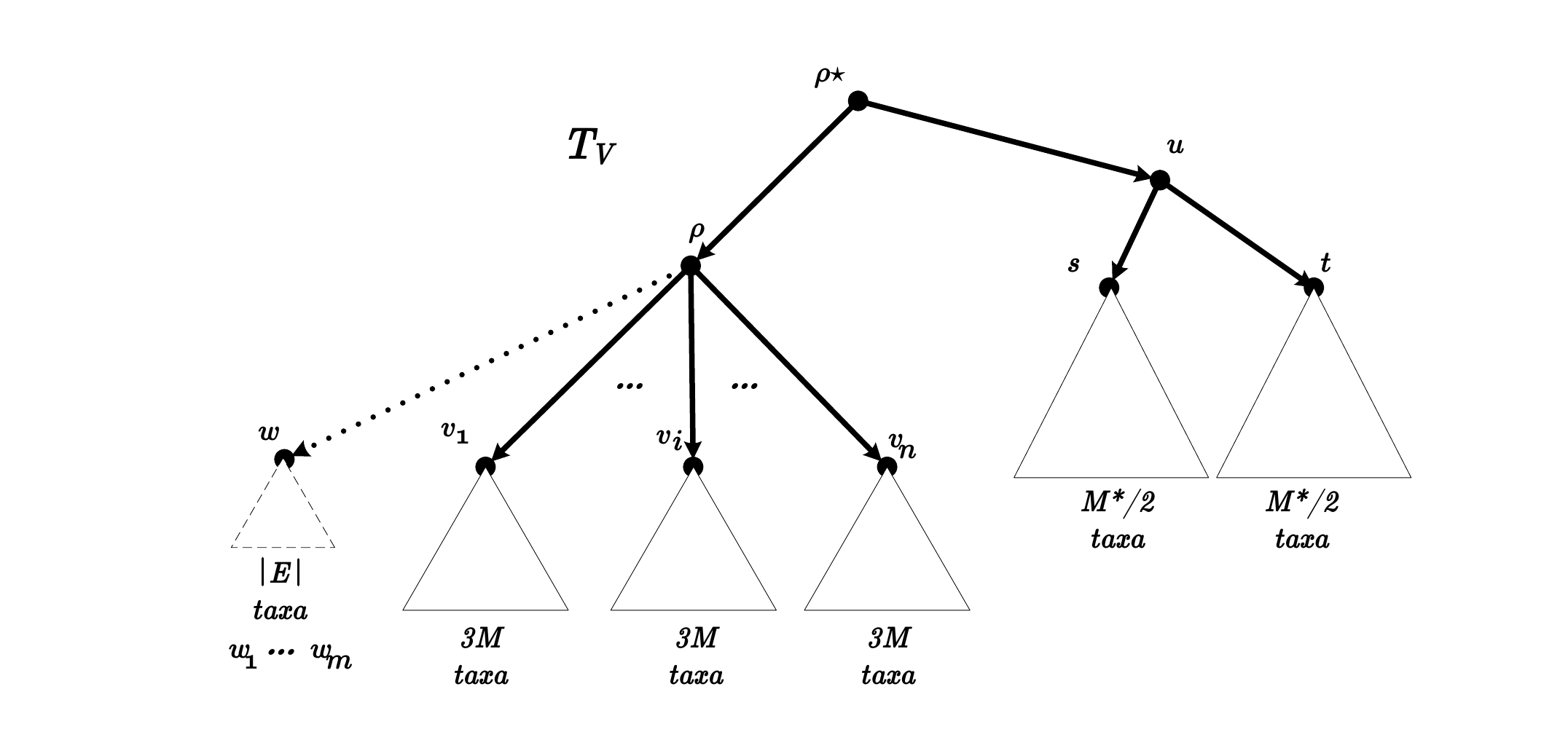}
\caption{The tree $\T_V$ used in the proofs of Lemma \ref{lem:boundedhard} and Theorem \ref{thm:mainhard}. Note that the dotted parts are only used in the proof of Theorem \ref{thm:mainhard}.}
\label{fig:tv}
\end{figure}

%\vspace{1cm}

%\clearpage

\begin{lemma}
\label{lem:boundedhard}
Computing $d_{MP}^{2}(\T_1,\T_2)$ is NP-hard.
\end{lemma}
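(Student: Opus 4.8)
The plan is to give a polynomial-time reduction from a suitable NP-hard graph problem; the natural choice, and the one I would use, is \textsc{Max Cut}. From a graph $G=(V,E)$ I would build two (non-binary) phylogenetic trees on a taxon set $X$ that is partitioned into one block $B_v$ of many identical ``vertex taxa'' for each $v\in V$, together with one small gadget of ``edge taxa'' for each $e\in E$. The first tree is the tree $\T_V$ of Figure \ref{fig:tv}, read \emph{without} the dotted parts (which are reserved for Theorem \ref{thm:mainhard}): the blocks $B_v$ hang off a fixed backbone, and the gadget for $e=uv$ is attached between (the attachment points of) $B_u$ and $B_v$ so that, for a character that is constant on blocks, its local parsimony contribution is $1$ exactly when $u$ and $v$ receive different states. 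The second tree $\T_2$ keeps the same blocks-on-a-backbone skeleton but attaches each edge gadget as a pendant cherry next to one of its endpoints, so that its local contribution is always $0$ for block-constant characters; $\T_2$ is structurally trivial but is \emph{not} a star, so the preceding positive lemma does not apply. All block and gadget sizes are chosen polynomially large, so that they dominate the $\pm 1$ fluctuations of Observation \ref{obs:mostone}.

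The argument then proceeds in four steps. (1) \emph{Normal form.} Using Observation \ref{obs:boundedmonochrome} (the bounded-state analogue of Lemma \ref{lem:monochrome}) together with the large block sizes, show that some optimal binary character $f$ for $(\T_V,\T_2)$ is constant on each $B_v$ and uniform on each edge gadget, so that $f$ encodes a $2$-colouring $c:V\to\{0,1\}$. (2)--(3) \emph{Score evaluation.} Run the Fitch--Hartigan recursion on both trees to verify that, for such an $f$, the backbone contributions to $l_f(\T_V)$ and $l_f(\T_2)$ coincide, while the edge gadgets contribute the number of bichromatic edges of $c$ on $\T_V$ and nothing on $\T_2$; hence $l_f(\T_V)-l_f(\T_2)$ equals the size of the cut of $G$ determined by $c$ (or an affine function of it, depending on the exact gadget), and in particular $l_f(\T_V)\geq l_f(\T_2)$. (4) \emph{Identity.} Combining the forward direction --- plug an optimal cut into steps (1)--(3) --- with the backward direction --- apply the normal form of step (1) to an arbitrary optimal binary character and read off a $2$-colouring of at least the same cut value --- gives $d^{2}_{MP}(\T_V,\T_2)=\mathrm{maxcut}(G)$ up to a fixed additive constant depending only on $|V|,|E|$ and the chosen multiplicities, so that an algorithm for $d^{2}_{MP}$ solves \textsc{Max Cut}.

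The main obstacle I expect is the backward direction, namely ruling out ``cheating'' binary characters. A character that splits a vertex block, or that uses an edge gadget in an unintended way, could a priori achieve a larger value of $|l_f(\T_V)-l_f(\T_2)|$, and one must show it cannot; this is precisely where the polynomially large multiplicities (which make splitting a block expensive, and expensive on whichever tree is supposed to be ``cheap'') have to be calibrated against the sharp ``changing one taxon changes the score by at most $1$'' bound of Observation \ref{obs:mostone}, so that every deviation from the normal form is strictly suboptimal \emph{for the difference of the two scores}, not merely for one of them. A secondary point to verify is that the reduction genuinely exploits non-binary structure (the high-degree backbone and gadget nodes, and the near-star $\T_2$), so that it does not inadvertently establish hardness for binary trees --- which is not claimed here and, as remarked before the lemma, requires the further ideas discussed in Section \ref{sec:discussion}.
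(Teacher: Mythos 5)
Your overall strategy --- reduce from \textsc{Max Cut}, encode vertices as large monochromatic blocks, normalize optimal characters via Observation \ref{obs:boundedmonochrome}, and read the cut off the score difference --- is the same as the paper's, but the gadgets you describe do not implement it, and two essential devices are missing. First, the sign-forcing device: Observation \ref{obs:boundedmonochrome} only lets you make the children of a near-leaf vertex of $\T_1$ monochromatic once you know that $\T_1$ is the tree with the \emph{smaller} score under every optimal character; nothing in your construction guarantees this, and an optimal binary character could a priori have $l_f(\T_2) > l_f(\T_V)$, in which case your normal form in step (1) collapses. The paper forces the orientation with the $s,t,u$ gadget: a set of $M^{*}$ taxa split into two clades in $\T_V$ but left unresolved under $u$ in $\T_E$, with $M^{*}/2 - 1 > 3|V|M$, so that the optimum is at least $M^{*}/2-1$ and can only be attained with $l_f(\T_V) < l_f(\T_E)$. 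Second, your edge gadget cannot detect bichromaticity as described: separate ``edge taxa'' that are uniformly coloured contribute a number of mutations determined by their own states and position, not by whether $c(u)\neq c(v)$. The paper instead reuses the \emph{vertex} taxa: in $\T_E$ the gadget for $e_i=\{v_a,v_b\}$ consists of $M$ cherries, each pairing one taxon from the $v_a$ block with one from the $v_b$ block, so a cut edge costs at least $M$ mutations in $\T_E$ and an uncut edge costs $0$ in that subtree. Note also that the cut is therefore counted in the \emph{expensive} tree $\T_E$, not in $\T_V$ as in your sketch, and that the paper's $\T_V$ (Figure \ref{fig:tv} without the dotted parts) contains no edge gadgets at all.

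The third gap is quantitative. With each bichromatic edge contributing only $1$ to the difference, your claimed identity $d_{MP}^{2}=\mathrm{maxcut}(G)+\mathrm{const}$ requires the backbone contributions on the two trees to cancel \emph{exactly}, uniformly over all normal-form characters; they do not, since for instance the number of mutations on the edges leaving $\rho$ depends on the colouring, introducing slack of order $|V|+|E|$ that swamps a unit-weight cut. The paper avoids this exact bookkeeping by amplifying each edge by the factor $M$, bounding the slack crudely between $-|V|$ and $|E|+2$, choosing $M > 3(|V|+|E|+2)$, and recovering $MAXCUT(G)$ by rounding $\bigl(d_{MP}^{2}(\T_V,\T_E)-M^{*}/2\bigr)/M$. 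Without both the amplification and the sign-forcing clade, the reduction does not go through as stated.
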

\begin{proof}
We reduce from the NP-hard (and APX-hard) problem \textsc{CUBIC MAX CUT}, see \cite{alimonti2000}. In this problem we
are given an undirected 3-regular graph $G=(V,E)$. A \emph{cut} is a bipartition of $V$ and the size of the cut is the number of edges that cross the bipartition. The goal is
to compute a cut of maximum-size. The restriction to 3-regular graphs is not strictly
necessary but simplifies the proof somewhat. Clearly, $|E| = 3|V|/2$. Let $MAXCUT(G)$ be
the size of the maximum cut in $G$; it is well-known that $MAXCUT(G) \geq 2|E|/3 = |V|$ (by repeatedly moving nodes to the other side of the partition that have only one of their three incident edges in the cut). Let
$\{0,1\}$ be the two character states. During this proof we will write ``any character''
as shorthand for ``any character with at most two states''.

The high-level idea is to construct two trees $\T_1$ and $\T_2$, henceforth
referred to as $\T_V$ and $\T_E$, where $\T_V$ encodes the vertices and $\T_E$ the
edges of $G$. In $\T_V$ the character states $\{0,1\}$ will be used to indicate whether
a vertex is on the left or right side of the bipartition. The mutations induced in $\T_E$ will
be used to count the number of edges crossing the bipartition. Intuitively,
$d_{MP}^{2}(\T_V,\T_E)$ will be maximized by choosing a maximum-size cut.

Let $V = \{v_1, \ldots, v_n\}$ and $E = \{e_1, \ldots, e_m\}$. Throughout the reduction we will utilize two large numbers,
$M$ and an even number $M^{*}$, such that $n << M << M^{*}$ but such that
both are still at most $poly(n)$. In due course we will explain
how these numbers are calculated.  Both trees will have
$3|V|M + M^{*}$ taxa.

\begin{figure}[ht]      \centering\vspace{0.5cm} 

\includegraphics[width=12cm]{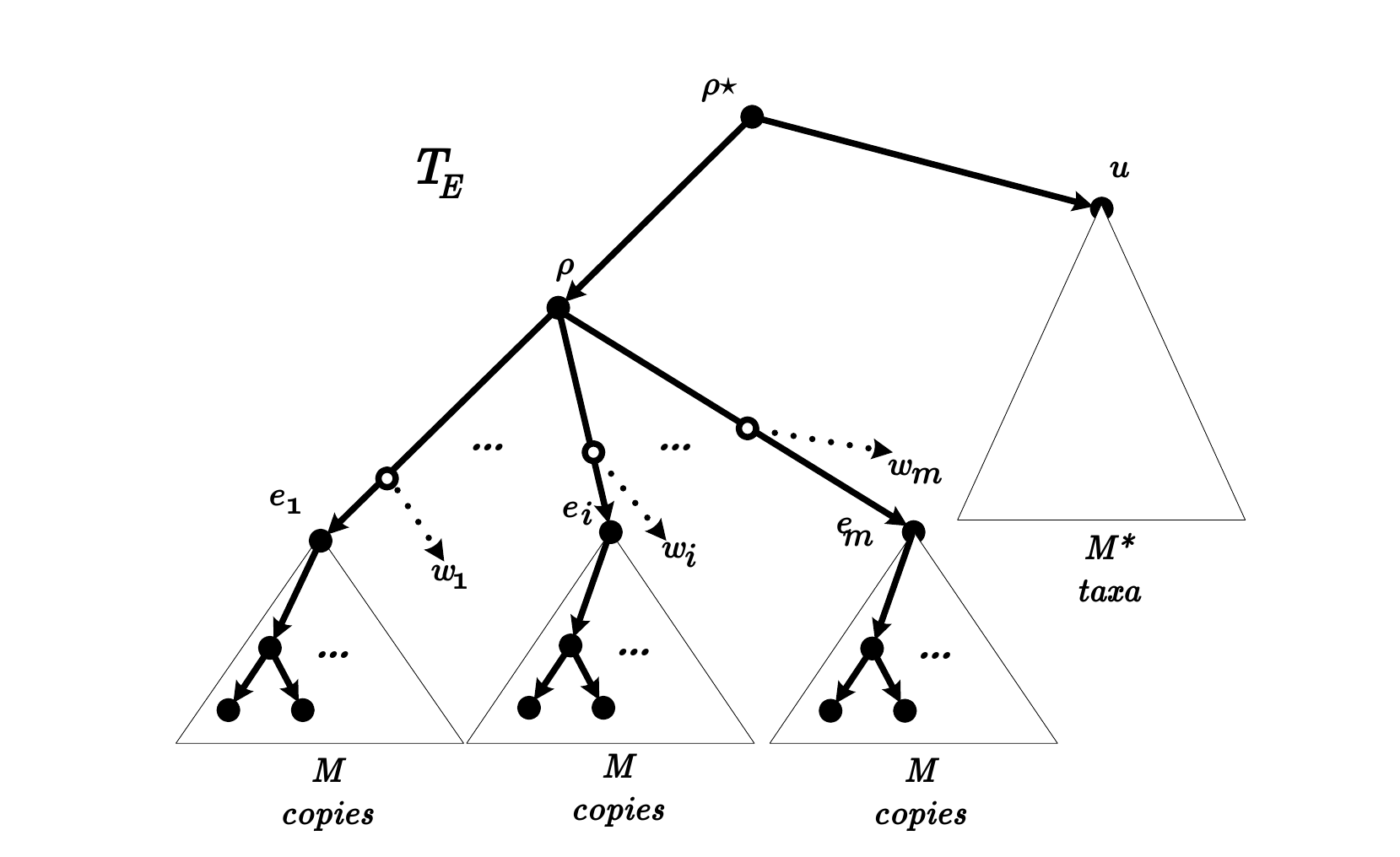}
       
   \caption{The tree $\T_E$ used in the proofs of Lemma \ref{lem:boundedhard} and Theorem \ref{thm:mainhard}. Note that the dotted parts and the unfilled vertices are only used in the proof of Theorem \ref{thm:mainhard}. Below each $e_i$ vertex there are in total $2M$ taxa, organized in cherries. If $e_i$ has endpoints $v_a$ and $v_b$ in $G$, then each
cherry will have one taxon from the $v_a$ clade in $\T_V$ and one taxon from the $v_b$ clade.}
\label{fig:te}
\end{figure}

To construct $\T_V$ we first introduce vertices $\rho^{*}, \rho, s, t, u$ and $\{v_1, \ldots, v_n\}$. We add edges $\{\rho^{*},\rho\}$, $\{\rho^{*},u\}$, $\{u,s\}$, $\{u,t\}$ and
$\{\rho, v_i\}$ for $1 \leq i \leq n$. We connect each $v_i$ to $3M$ taxa. We connect
$s$ to $M^{*}/2$ taxa and $t$ to $M^{*}/2$ taxa. Figure \ref{fig:tv} depicts the construction idea.

To construct $\T_E$ we introduce vertices $\rho^{*}, \rho, u$ and $\{e_1, \ldots, e_m\}$.
We add edges $\{\rho^{*},\rho\}$, $\{\rho^{*},u\}$ and
$\{\rho, e_i\}$ for $1 \leq i \leq m$. To $u$ we connect the $M^{*}$ taxa that were
connected to $s$ and $t$ in $\T_V$. Next we introduce $mM$ vertices $e_{i,j}$ for
$i \in \{1,\ldots,m\}$ and $j \in \{1,\ldots,M\}$. We connect each vertex $e_i$ to
all the $e_{i,j}$, for $j \in \{1,\ldots,M\}$. To each $e_{i,j}$ we connect two taxa, representing the endpoints (of the $j$th copy) of edge $e_i$. To determine
which two taxa these are, suppose in $G$ the edge $e_i$ is connected to vertices $v_a$ and
$v_b$. Then one of the two taxa is taken from the clade of taxa we connected to $v_a$ in $\T_V$, and the other from the clade of taxa beneath $v_b$ in $\T_V$. The exact mapping
chosen does not matter. The construction is depicted by Figure \ref{fig:te}. This completes the construction.

For $x \in \mathbb{R}$ let $round(x)$ be $x$ rounded to the nearest integer. (By
construction we will actually only use values of $x$ that are at most 1/3 above or below the nearest integer; this will become clearer later). We shall
prove the following:
\begin{equation}
MAXCUT(G) = round \bigg ( \frac{ d_{MP}^{2}(\T_V,\T_E) - M^{*}/2 }{M} \bigg )
\label{eq:round}
\end{equation}
Moreover, we will show how any optimal character $f$ can
be transformed in polynomial-time into a maximum-size cut of $G$. From this the NP-hardness
of computing $d_{MP}^{2}(\T_V,\T_E)$ will follow.

Note that in $\T_V$ the subtree rooted at $u$ is a refinement of the subtree rooted at $u$ in $\T_E$. This means that, for any character $f$, a (crude) upper bound on $l_f(\T_V) - l_f(\T_E)$
is $3|V|M$. To see this, observe that for any character the number of mutations incurred in $\T_V$ in the
subtree rooted at $u$, is less than or equal to the number of mutations incurred in the corresponding subtree of $\T_E$.
Therefore mutations in this subtree can never contribute to an increase in $l_f(\T_V) - l_f(\T_E)$. 
Hence, an upper bound on $l_f(\T_V) - l_f(\T_E)$ can be
achieved by maximizing the number of mutations that occur in $\T_V$ on edges that are not in this subtree, and minimizing the number of mutations that occur in $\T_E$ on edges that are not in this subtree. A (trivial) lower bound on the latter is 0, while a (trivial) upper bound on the former is $3|V|M$ (e.g. by assigning the same state to all internal nodes of $\T_V$).

On the other hand, consider a character $f$ such that the $M^{*}/2$ taxa
underneath $s$ in $\T_V$ are allocated state 0, and all other taxa are allocated state 1. Then
$l_f(\T_V)=1$ and $l_f(\T_E) \geq M^{*}/2.$ For this reason we choose $M^{*}$ such that
\[
M^{*}/2 - 1 > 3|V|M.
\]
After choosing $M^{*}$ this way we know that $d_{MP}^{2}(\T_V,\T_E) \geq  (M^{*}/2)- 1$ and, more importantly, that for every optimal character $f$, $l_f(\T_V) < l_f(\T_E)$.

From Observation \ref{obs:boundedmonochrome} we can therefore assume that, for each vertex of $\T_V$ in
the set $\{s,t,v_1, \ldots, v_n\}$, all the taxa beneath the vertex are allocated the same state by $f$. (Note also
that the state of the taxa in the $s$ clade, and the state of the taxa in the $t$ clade, must be different, otherwise
$f$ could not possibly be optimal.)

Such a character naturally induces a cut, with
a vertex $v_i$ being on the left (right) side of the bipartition if the taxa in its clade
are allocated state 0 (1). The core observation is that if an edge $e_i$ is in the induced cut
(i.e. crosses the bipartition) then $f$ will induce at least $M$ mutations in the subtree of
$\T_E$ rooted at $e_i$. On the other hand, an edge $e_i$ \emph{not} in the cut will induce 0 mutations in the subtree rooted at $e_i$.

More formally, suppose $f$ induces a cut of size $k$. The parameter $k$ does not, in itself,
give us enough information to \emph{exactly} determine $l_f(\T_V)$ and $l_f(\T_E)$, but we can
get close enough. Counting crudely,
\[
0 \leq l_f(\T_V) \leq |V|,
\]
where the upper bound of $|V|$ can be obtained by applying Fitch (and observing that, by the earlier assumption, the taxa in the $s$ clade have a different state to the taxa in the $t$ clade). Also,

\[
M \cdot k + M^{*}/2 \leq l_f(\T_E) \leq M \cdot k + |E| + 2 + M^{*}/2.
\]
(The $|E|+2$ on the right-hand side of the above expression is an upper bound on the number of mutations
incurred on the $|E|$ edges leaving $\rho$ and the 2 edges leaving $\rho^{*}$.) From this it follows that
\[
M \cdot k + M^{*}/2 - |V| \leq l_f(\T_E) - l_f(\T_V) \leq M \cdot k + |E| + 2 + M^{*}/2
\]
Assuming $M$ has been chosen such that $M > |V| + |E| + 2$, we observe
that $l_f(\T_E) - l_f(\T_V)$ - and thus also $d_{MP}^{2}(\T_V,\T_E)$ -
will be maximized by selecting $k$ as large as possible i.e. by selecting a maximum-size cut. So,
\[
MAXCUT(G) - \frac{|V|}{M} \leq \frac{ d_{MP}^{2}(\T_V,\T_E) - M^{*}/2 }{M} \leq  MAXCUT(G) + \frac{|E|+2}{M}
\]
Taking $M > 3(|V|+|E|+2)$ is therefore sufficient to yield Equation \ref{eq:round}. This completes the reduction.

\end{proof}

\noindent
In fact, we can show that computing $d_{MP}^{i}(\T_1,\T_2)$ is NP-hard for every fixed integer $i \geq 2$. This will be proven as a corollary of the following theorem, which is the main result of this section.

\begin{theorem}
\label{thm:mainhard}
Computing $d_{MP}(\T_1,\T_2)$ is NP-hard.
\end{theorem}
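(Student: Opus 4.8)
The plan is to adapt the reduction used for Lemma~\ref{lem:boundedhard}, again reducing from \textsc{CUBIC MAX CUT}, but now modified so that it forces \emph{every} optimal character to use only two states, thereby making the unrestricted problem $d_{MP}$ as hard as the two-state problem. The trees $\T_V$ and $\T_E$ will be essentially the same as before (this is why Figures~\ref{fig:tv} and~\ref{fig:te} already include ``dotted parts''), with additional gadgetry attached to $\rho^{*}$ in both trees. The purpose of this gadgetry is to impose a large penalty on any character that employs three or more states, so that in an optimal solution the character is forced to behave like a two-state character on the ``core'' part of the construction, at which point the analysis of Lemma~\ref{lem:boundedhard} carries over almost verbatim.

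First I would describe the extra gadget: attach to $\rho^{*}$ (in both $\T_V$ and $\T_E$) a collection of identical pendant structures --- e.g.\ a large number of cherries, or a block of taxa --- arranged so that in one tree they form a convex pattern under very few states while in the other tree any use of a third, fourth, $\ldots$ state on those taxa is either useless (contributes equally to both trees, by the refinement argument of Lemma~\ref{lem:refine}) or actively harmful. More precisely, I would use the same trick as with the $M^{*}$ taxa: make the gadget subtree in $\T_E$ a refinement of the gadget subtree in $\T_V$ (or vice versa) so that mutations there can never help, and choose the gadget large enough ($\gg 3|V|M$) that an optimal character cannot ``afford'' to waste states on it. Then I would argue, via Observation~\ref{obs:mostone} and a counting argument analogous to the $M^{*}/2 - 1 > 3|V|M$ inequality, that any optimal character must assign the taxa in the core ($s$-clade, $t$-clade, and the $v_i$-clades) at most two states total --- intuitively, introducing a third state anywhere in the core can be ``pushed out'' or merged without decreasing $|l_f(\T_V) - l_f(\T_E)|$, because a third state in a $v_i$-clade only costs mutations in $\T_V$ (which we are trying to make large) but the refinement lemma says it cannot decrease $l_f(\T_E)$ either, so we can merge it down. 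Once we know the optimal $f$ is effectively binary on the core, Equation~\ref{eq:round} and the maximum-cut correspondence follow exactly as in Lemma~\ref{lem:boundedhard}, and the polynomial-time extraction of a maximum cut from an optimal character goes through unchanged.

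The key steps, in order: (1)~define $\T_V$ and $\T_E$ as in Lemma~\ref{lem:boundedhard} together with the ``state-forcing'' gadget attached at $\rho^{*}$; (2)~fix the sizes of $M$, $M^{*}$, and the gadget parameter so that the hierarchy $n \ll M \ll M^{*} \ll (\text{gadget size})$ holds with all quantities still $poly(n)$; (3)~prove that for every optimal character $f$ we have $l_f(\T_V) < l_f(\T_E)$ (as before) \emph{and} that $f$ uses at most two states on the core; here I would invoke Lemma~\ref{lem:monochrome}/Observation~\ref{obs:mostone} to normalize $f$, and a merging argument based on Lemma~\ref{lem:refine} to eliminate any superfluous third state; (4)~with $f$ now binary on the core, reuse the inequalities bounding $l_f(\T_V)$ and $l_f(\T_E)$ in terms of the induced cut size $k$, conclude Equation~\ref{eq:round} (suitably re-derived), and read off $MAXCUT(G)$; (5)~note the reduction is polynomial and the cut is recoverable in polynomial time, completing the NP-hardness proof.

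The main obstacle I anticipate is Step~(3): cleanly proving that an optimal unrestricted character cannot benefit from a third state anywhere in the construction. In the two-state proof, Observation~\ref{obs:boundedmonochrome} did this work for free because the monochromatic-children transformation never increases the state count; here we instead have to show that \emph{going up} in state count is never profitable. The delicate point is that extra states \emph{could} in principle increase $l_f(\T_V)$ (good, since we want that large) --- so we cannot simply say ``more states are bad''. The resolution must be that any configuration using a third state can be matched by a two-state (or fewer) configuration achieving at least as large a difference: on the gadget this follows from the refinement-subtree construction (mutations there hurt $\T_E$ at least as much as $\T_V$, so they are never worth it), and on the core one shows that splitting a $v_i$-clade into two states is dominated by either putting the whole clade on one side or the other of the bipartition. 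Making this domination argument airtight --- in particular handling the interaction between the $v_i$-clades and the shared $e_{i,j}$-cherries in $\T_E$ --- is where the real care is needed, and is presumably why this theorem warrants a separate, more involved proof than Lemma~\ref{lem:boundedhard}.
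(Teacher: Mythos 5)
Your Step~(3) is the heart of the proposal, and it fails: for this construction an optimal unrestricted character is \emph{not} (reducible to) a two-state character on the core, and in fact using many states is strictly and massively profitable. Take the character that is constant on each $v_i$-clade but gives \emph{every} clade its own distinct color (and $s$, $t$ two further colors). This satisfies the Lemma~\ref{lem:monochrome} normal form, yet it makes every edge of $G$ ``bichromatic'', so every $e_i$-subtree of $\T_E$ incurs roughly $M$ mutations and $l_f(\T_E)\approx |E|M + M^{*}/2$, while $l_f(\T_V)$ grows only by $O(|V|)$. The resulting difference exceeds $M\cdot MAXCUT(G)+M^{*}/2$ whenever $MAXCUT(G)<|E|$, i.e.\ for every non-bipartite cubic graph. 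Your proposed merging argument gets the direction of the inequality wrong: coarsening a character (merging a third state into one of the two core states) can only \emph{decrease} both parsimony scores by Lemma~\ref{lem:refine}, and here it decreases $l_f(\T_E)$ by $\Theta(M)$ per edge that becomes monochromatic while decreasing $l_f(\T_V)$ by only $O(1)$ --- so merging destroys optimality rather than preserving it. Nor can a pendant gadget at $\rho^{*}$ repair this: parsimony has no mechanism by which taxa hanging off $\rho^{*}$ can penalize the \emph{global} number of states used elsewhere in the tree, since the score is determined locally by where each state actually appears. So the quantity $d_{MP}(\T_V,\T_E)$ for your instances simply does not encode $MAXCUT(G)$, and Equation~\ref{eq:round} does not hold for the unrestricted distance. (This is exactly why the paper stresses that neither of the two hardness results implies the other.)

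The paper's actual proof embraces, rather than suppresses, the many-state optimum. It reduces from \textsc{CUBIC MAXIMUM INDEPENDENT SET}: after adding the dotted $w$/$w_i$ gadgetry to $\T_V$ and $\T_E$, it shows that for suitable $M, M^{*}$ every optimal character induces a \emph{proper} coloring of $G$ (all edges bichromatic), which pins $l_f(\T_E)$ down to essentially $|E|M + M^{*}/2 + |E| + 1$; the only remaining freedom is then to \emph{minimize} $l_f(\T_V)$, i.e.\ to choose a proper coloring maximizing the size of the largest color class, which is equivalent to finding a maximum independent set (color the independent set with one color and every other vertex with its own fresh color). If you want to salvage your plan, you would need an entirely different mechanism for bounding the number of usable states --- the construction as you describe it cannot provide one.
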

\begin{proof}
This time we reduce from \textsc{CUBIC MAXIMUM INDEPENDENT SET}. This is the
problem of computing a maximum-size independent set (i.e. set of mutually non-adjacent
vertices) of an undirected 3-regular graph $G=(V,E)$. The problem is NP-hard (and APX-hard), see again \cite{alimonti2000}. Again, the restriction to 3-regular graphs is not essential but simplifies the
proof a little. Let $MIS(G)$ be the size of a maximum-size independent set. It is easy to show
that $MIS(G) \geq |V|/4$ (by adding an arbitrary vertex to be in the independent set,
deleting this vertex and its at most three neighbors, and then iterating this process). Let $V = \{v_1, \ldots, v_n\}$ and $E = \{e_1, \ldots, e_m\}$.
For convenience we assume $|V| \geq 8$, so $MIS(G) \geq 2$, and that $G$ is not bipartite
(because computing $MIS(G)$ on bipartite $G$ can be done in polynomial time).

We construct the trees $\T_V$ and $\T_E$ as described in the previous reduction. We then
change the trees slightly; the changes are shown as dotted lines in Figures \ref{fig:tv} and \ref{fig:te}. To $\T_V$ we
add a new vertex $w$ and we add the edge $\{\rho, w\}$. Then we add $|E|$ taxa directly
beneath $w$. In $\T_E$, for each $e_i$, we subdivide the edge entering $e_i$, and
attach one of the new taxa to the newly created node; we will call this taxon $w_i$. Each tree will have
$3|V|M + |E| + M^{*}$ taxa. 

As in the previous reduction we choose $M^{*}$ sufficiently large such that, for all optimal characters $f$,
$l_f(\T_V) < l_f(\T_E)$.  Using the same refinement argument as before, this can be achieved simply by choosing $M^{*}$ to be any
even number such that:
\[
M^{*}/2 - 1 > 3|V|M + |E|.
\]
Consider an optimal character $f$ for $\T_V$ and $\T_E$. Due to our choice of $M^{*}$ we know $l_f(\T_V) < l_f(\T_E)$.
Hence we can apply Lemma \ref{lem:monochrome} to transform $f$, in polynomial time, such that for each vertex in $\{w,s,t,v_1,\ldots,v_n\}$, all the children of that vertex (in $\T_V$) have the same state. From now
on we will refer to these states as \emph{colors}. Indeed, $f$ naturally induces a coloring of the vertices
of $G$: vertex $v_i$ of $G$ is allocated the color of its children in $\T_V$. In fact, we argue that for appropriately large $M$, $f$ must induce a \emph{proper} coloring of $G$, i.e. a coloring such that every edge is bichromatic. To see why this is, suppose the induced coloring of $G$ has
at least one monochromatic edge. Then a (crude) upper bound on $l_{f}(\T_E)$, and thus also $l_{f}(\T_E) - l_{f}(\T_V)$, is
\[
(|E|-1)M + 3|E| + 2 + M^{*}/2.
\]
(The $3|E| + 2$ term is obtained by assuming that on all edges entering the $e_i$ and $w_i$ nodes,
on all edges leaving $\rho$, and on the two edges leaving $\rho^{*}$, incur mutations.)
 
On the other hand, if the induced coloring is proper, a \emph{lower} bound on $l_{f}(\T_E)$ is
\[
|E|M + M^{*}/2
\]
and an upper bound on $l_{f}(\T_V)$ is $|V|+3$. So if the coloring is proper,
\[
l_{f}(\T_E) - l_{f}(\T_V) \geq |E|M + M^{*}/2 - |V| - 3.
\]
So to enforce that the induced coloring is proper we simply choose $M$ such that
\[
|E|M + M^{*}/2 - |V| - 3 > (|E|-1)M + 3|E| + 2 + M^{*}/2.
\]
A choice of $M > 3|E| + |V| + 5$ is adequate.

Henceforth we can focus our attention on optimal characters $f$ that have the Lemma \ref{lem:monochrome} property
and that induce proper colorings in $G$. Note that, if we apply Fitch to $f$ on $\T_V$, each vertex in $\{w,s,t,v_1,\ldots,v_n\}$ will
be allocated the same color as its children. The colors assigned to $\rho, \rho^{*}$ and $u$ depend on the exact colors
used for $\{w,s,t,v_1,\ldots,v_n\}$. Fitch will assign $\rho$ a color that occurs most frequently
on $\{w,v_1, \ldots, v_n\}$ and (as discussed earlier) will break any ties in the top-down phase of the algorithm. If the most
frequent color on $\{w,v_1, \ldots, v_n\}$ occurs $z$ times, there will be exactly $(n+1)-z$ mutations incurred on the
edges leaving $\rho$.

Let $c$ be the color of $w$ and its children in $\T_V$. By a very careful analysis of Fitch's algorithm on $\T_E$, we can see that Fitch will definitely assign color $c$ to $\rho$ in $\T_E$. Specifically, note that in the
bottom-up phase of Fitch, $c$ will definitely be in the set of labels allocated to the parent of each $w_i$. Color $c$ is thus a most frequently occurring color amongst the children
of $\rho$. There can be no other most frequently occurring color $c'$, because $c'$ would then have to appear as an
endpoint of every edge in the proper coloring, implying that $G$ is bipartite, which we have apriori excluded as a possibility. Hence $c$ is already chosen by Fitch in its bottom-up phase to be the definite label of $\rho$.

Consequently, in its top-down phase the parents of the $w_i$ in $\T_E$ will also be labelled $c$. This means that, for an edge $e_i$ of $G$, there will be a mutation on the edge entering
$e_i$ (in $\T_E$) \emph{if and only if} neither endpoint of $e_i$ is colored $c$. Now, suppose that $c$ is one of the colors used to color $\{v_1,\ldots,v_n\}$ in $\T_V$. Then there exists an edge $e_i$ which has exactly one endpoint colored $c$, meaning that no mutation is incurred on the edge entering $e_i$ in $\T_E$. If we relabel the $|E|$ taxa beneath $w$ to a new color that does not appear elsewhere in $f$, then the parsimony score of $\T_V$ under
$f$ increases by at most one, while the parsimony score of $\T_E$ under $f$ definitely increases by (at least) one. The
latter increase occurs because Fitch will now ascribe mutations to \emph{all} the edges entering the $e_i$ vertices, without causing a reduction in the number of mutations on the edges $\{ \rho^{*}, \rho \}$, $\{\rho^{*}, u\}$.

 Hence, the new $f$ is still optimal and still has the earlier derived properties, plus the new property that $c$, the color of $w$, can be assumed to be distinct from the colors used on $\{v_1, \ldots, v_n\}$.

Now, knowing that Fitch will definitely assign $c$ to $\rho$ in $\T_E$, we can argue without loss of generality that the colors
assigned to $s$ and $t$ in $\T_V$ can also be distinct from $c$. To see this, suppose one of $s$ and $t$ is colored $c$, without loss of generality let
this be $s$. (We note that $s$ and $t$ must have different colors, otherwise $f$ cannot possibly be optimal). Then
Fitch will definitely color $\rho^{*}$ and $u$ with color $c$ in $\T_E$ and there will be no mutation on edge $\{\rho^{*}, \rho\}$ or $\{\rho^{*},u\}$ in
$\T_E$. Recoloring the clade $s$ with a  new color $d$ will cost at most one new mutation in $\T_V$, and at least one
new mutation in $\T_E$ (on the edge $\{\rho^{*}, \rho\}$ or $\{\rho^{*},u\}$), so the new $f$ will still be optimal.

Hence we can assume that in $f$ neither $s$ nor $t$ has color $c$ (where $c$ is the color of $w$), and that $c$ is not a color used on $\{v_1, \ldots, v_n\}$.

Now, consider the most frequently occuring colors on $\{ w, v_1, \ldots, $\\$v_n \}$. Given that $c$ is only used
to color $w$, there must exist at least one most frequently occurring color distinct from $c$. Furthermore, one of these
colors must be used to color $s$ or $t$. To see why this is, let $c' \neq c$ be one of the most frequently occuring
colors; if we relabel (say) clade $s$ with $c'$ we lower the parsimony score of $\T_V$, without lowering the parsimony score
of $\T_E$, contradicting the optimality of the character. 

We have thus reached the point that we can assume that in $f$ neither $s$ nor $t$ has color $c$ (where $c$ is the color of $w$), that $c$ is not a color on $\{v_1, \ldots, v_n\}$, and that either $s$ or $t$ has the same color as some
most frequently occurring color on $\{v_1, \ldots, v_n\}$. All such $f$ have the same parsimony score on $\T_E$. The
only degree of freedom left is to minimize the parsimony score of $\T_V$. This can be achieved by choosing a proper
coloring of $G$ such that the frequency of the most frequently occurring color is maximized. This is essentially
equivalent to constructing a maximum-size independent set of $G$: give all the vertices in the independent set the
same color, and all the other vertices distinct colors. When this is done, $|V| - MIS(G)$ mutations will
be incurred in $\T_V$ on the edges feeding into the $v_i$ nodes, 1 mutation will be incurred on the edge entering
$w$, and in total 1 mutation will be incurred on the two edges leaving $u$. Hence we finally arrive at the following conclusion:
\[
d_{MP}(\T_V,\T_E) = M^{*}/2 + 1 + |E| + |E|M - ((|V|+1) - MIS(G) + 1).
\]

From this the value $MIS(G)$ can easily be computed, and the independent set itself can be obtained by applying Fitch to $\T_V$ (after having applied all the necessary transformations to the character) and returning all $v_i$ that are labelled with the same color as $\rho$. This completes the reduction.
\end{proof}

The above theorem assists us in extending the result given in Lemma \ref{lem:boundedhard}.

\begin{corollary}
\label{cor:fiveishard}
Computing $d_{MP}^{i}(\T_1,\T_2)$ is NP-hard for every fixed integer $i \geq 2$.
\end{corollary}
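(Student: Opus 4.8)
The plan is to reduce everything to Lemma~\ref{lem:boundedhard} and Theorem~\ref{thm:mainhard}. The case $i=2$ is exactly Lemma~\ref{lem:boundedhard}, so fix $i\ge 3$. The key point I would establish is that the pair of trees $\T_V,\T_E$ produced by the reduction in the proof of Theorem~\ref{thm:mainhard} has the property that $d_{MP}(\T_V,\T_E)$ is already realised by a character using at most a constant number $k_0$ of states, with $k_0$ independent of the input graph $G$. Granting this, for every $i\ge k_0$ we get $d_{MP}^{i}(\T_V,\T_E)=d_{MP}(\T_V,\T_E)$: the inequality ``$\le$'' is immediate from the definition, and ``$\ge$'' holds because a $k_0$-state optimal character is a legal competitor in the $d_{MP}^{i}$ model. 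Since the proof of Theorem~\ref{thm:mainhard} shows how to recover $MIS(G)$ (and an actual maximum independent set, by applying Fitch to $\T_V$) from $d_{MP}(\T_V,\T_E)$ and from any optimal character, this immediately yields NP-hardness of computing $d_{MP}^{i}$ for all $i\ge k_0$.

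To pin down $k_0$ I would re-audit the chain of ``without loss of generality'' normalisations in the proof of Theorem~\ref{thm:mainhard}. The normalisation coming from Lemma~\ref{lem:monochrome} does not raise the state count (Observation~\ref{obs:boundedmonochrome}). The only remaining freedom is the colouring that an optimal $f$ induces on $\{v_1,\dots,v_n\}$: the proof merely requires it to be a \emph{proper} colouring of $G$ whose largest colour class has size $MIS(G)$. Instead of giving every vertex outside a fixed maximum independent set its own colour (as is done for expository convenience in the proof), I would colour a maximum independent set with one colour and properly colour the rest of $G$ using a bounded palette. After a routine preprocessing of $G$ (deleting $K_4$ components, which each contribute $1$ to $MIS$, and treating bipartite components separately) we may assume $G$ is connected and is neither $K_4$ nor bipartite, so Brooks' theorem gives $\chi(G)=3$; hence three colours suffice on the vertex clades, and since $G$ is non-bipartite these three colours are genuinely present and distinct, so the clades of $s$ and $t$ can both be coloured from among them (the proof needs one of $s,t$ to carry a most frequently occurring vertex colour and the other merely to differ from it and from the colour of $w$). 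Adding one further colour for $w$ and its $|E|$ taxa gives a total of four states, so $k_0=4$. One has to verify that these colour re-uses leave intact every inequality calibrating $M$ and $M^{*}$, as well as the behaviour of Fitch at $\rho,\rho^{*},u$ in $\T_E$ — in particular that the ``recolour the clade of $s$'' step still works with an already used colour in place of a brand new one, which it does precisely because $\chi(G)=3$ leaves a spare vertex colour distinct from those of $w$ and $t$.

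This handles $i=2$ and every $i\ge 4$, leaving only $i=3$. For that value I would instead mimic the reduction of Lemma~\ref{lem:boundedhard} but reduce from \textsc{Max $3$-Cut} on general graphs, which is NP-hard (a graph is $3$-colourable iff its maximum $3$-cut has size $|E|$). Build $\T_V$ with $\deg(v_i)\cdot M$ taxa beneath each $v_i$, keep the $M$-cherry gadget beneath each edge vertex of $\T_E$, and keep the $M^{*}$ gadget that forces $l_f(\T_V)<l_f(\T_E)$ for optimal $f$. By Observation~\ref{obs:boundedmonochrome} we may again assume each $v_i$-clade is monochromatic, so an optimal $3$-state character induces a $3$-partition of $V$, and a short computation shows that an edge $e_i=v_av_b$ contributes $\approx M$ mutations in $\T_E$ precisely when $v_a$ and $v_b$ lie in different parts; hence $d_{MP}^{3}(\T_V,\T_E)$ is maximised by a maximum $3$-cut, and the size of that cut is read off by the same rounding identity as in Lemma~\ref{lem:boundedhard}.

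I expect the second paragraph to be the main obstacle: the Fitch analysis in the proof of Theorem~\ref{thm:mainhard} is already intricate, and one must be careful that none of its tie-breaking steps or ``relabel to a new colour'' steps secretly exploited the availability of unboundedly many fresh states — so the real work is checking that the optimal character can honestly be squeezed into four states without violating any of the numerical constraints. By contrast the $i=3$ case is essentially a bookkeeping exercise, once one is willing to re-run the Lemma~\ref{lem:boundedhard} construction for non-regular graphs and to invoke the standard NP-hardness of \textsc{Max $k$-Cut}.
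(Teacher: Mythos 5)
Your overall architecture matches the paper's exactly: $i=2$ from Lemma~\ref{lem:boundedhard}, an ad-hoc argument for $i=3$, and for $i\geq 4$ the observation that the optimum in the Theorem~\ref{thm:mainhard} construction is attained by a character with at most four states. Your $i=3$ reduction via \textsc{Max $3$-Cut} is in substance the same as the paper's appendix argument (Lemma~\ref{lem:bounded34hard}), which works with the quantity $BICHROM(G)$ --- the maximum number of bichromatic edges over $3$-colorings, i.e.\ exactly a maximum $3$-cut --- on $4$-regular graphs; that part is fine.

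The gap is in your $i\geq 4$ argument, and you half-sense it yourself. What the reduction needs is not merely $\chi(G)=3$ but a proper $3$-coloring of $G$ \emph{whose largest color class is a maximum independent set}: the optimal character must simultaneously keep every edge of $G$ bichromatic (to collect the $|E|M$ mutations in $\T_E$) and maximize the frequency $z$ of the most frequent vertex color (since $l_f(\T_V)$ contains the term $(n+1)-z$). Brooks' theorem gives you some $3$-coloring, but its largest class need not have size $MIS(G)$, in which case your $4$-state character is not optimal and the identity $d_{MP}^{4}=d_{MP}$ fails. Your alternative route --- fix a maximum independent set $S$ and properly color $G\setminus S$ --- does not rescue this at four states: by maximality every vertex of $G\setminus S$ has a neighbor in $S$, so $G\setminus S$ has maximum degree $2$ and may contain odd cycles, forcing $3$ further colors, hence $4$ on the vertex clades and $5$ in total, which leaves $i=4$ uncovered by either of your cases. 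The paper closes exactly this hole by invoking Catlin's strengthening of Brooks' theorem \citep{catlin1979brooks}, which guarantees that a connected $3$-regular graph that is neither $K_4$ nor bipartite admits a $3$-coloring in which one color class is a maximum independent set; with that citation (or a proof of that fact) your argument goes through verbatim, but as written the step ``three colours suffice on the vertex clades'' is unjustified.
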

\begin{proof}
The case $i=2$ is proven in Lemma \ref{lem:boundedhard}. The case $i=3$ requires an ad-hoc extension of that lemma and we defer this to the appendix. Assume then that $i \geq 4$. In the above proof we argued that in $f$ neither $s$ nor $t$ has color $c$ (where $c$ is the color of $w$), that $c$ is not a color on $\{v_1, \ldots, v_n\}$, and that either $s$ or $t$ has the same color as some
most frequently occurring color on $\{v_1, \ldots, v_n\}$. To achieve this, we are free to use some colors
more than once. Now, observe that $G$ has chromatic number 3. This follows from Brook's Theorem \citep{brooks1941colouring}, because $G$ has maximum degree 3, is not equal to $K_4$, and is not bipartite. Moreover, the results in \citep{catlin1979brooks} state that such a graph can be colored with 3 colors such that the most frequently occuring color induces a maximum independent set. Let \emph{blue} be the color corresponding to the vertices of the maximum independent 
set, and let \emph{red} and \emph{green} be the remaining two colors. We use without loss of generality \emph{blue} and \emph{red} to color $s$ and $t$, and introduce a new fourth color \emph{yellow} to color $w$. Hence at most 4 colors are needed to optimize $d_{MP}(\T_V,\T_E)$, even when more are available, from which the result follows.
\end{proof}

\section{Discussion}
\label{sec:discussion}
In this article we have explored several properties of the MP distance measure. We have also proven that this new metric is NP-hard. A hardness result for the case of \emph{binary} trees remains elusive, although we strongly believe that this is also NP-hard and shall elaborate upon this in a forthcoming publication. In any case, the NP-hardness of the metric is not, in itself, a reason to cease investigating it. Recent years have seen an explosion of academic interest in overcoming in practice the theoretical intractability of attractive measures such as SPR distance and hybridization number (see e.g., \citep{whidden2013fixed,van2012practical}). This article should therefore be viewed as the start signal for deeper research into MP distance. What exactly is its relation to SPR and other phylogenetic measures? How far can the NP-hardness be tamed in practice? In future research we will tackle these and other questions.

\section*{Acknowledgement}
We wish to thank David Bryant for bringing the topic to our attention. Also, MF wishes to thank Bhalchandra Thatte and Mike Steel for helpful discussions on the topic, and SK thanks Nela Lekic for useful discussions on the relationship between chromatic number and independence number.

\section{Appendix: Generalizing Lemma \ref{lem:boundedhard} to 3 states}

\begin{lemma}
\label{lem:bounded34hard}
Computing $d_{MP}^{3}(\T_1,\T_2)$ is NP-hard.
\end{lemma}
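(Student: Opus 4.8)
The plan is to adapt the two-state reduction of Lemma \ref{lem:boundedhard} rather than the more elaborate construction of Theorem \ref{thm:mainhard}, since the former already produces an optimal character that uses essentially two states ($0$ and $1$) to encode a cut of a cubic graph. The obstacle is that in the three-state model the optimizer might try to ``spend'' the extra third state somewhere to artificially inflate $l_f(\T_E) - l_f(\T_V)$, so the main work is to redesign the gadgets so that a third state is provably never helpful. I would keep the overall skeleton: two trees $\T_V$ and $\T_E$, the large even buffer $M^{*}$ chosen so that every optimal character $f$ satisfies $l_f(\T_V) < l_f(\T_E)$ (via the same refinement argument, now with $M^{*}/2 - 1 > $ a suitable $O(|V|M)$ bound that also absorbs the extra slack a third state can create), and the large multiplier $M$ amplifying each cut edge into $\Theta(M)$ mutations in $\T_E$.

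First I would invoke Observation \ref{obs:boundedmonochrome} (Lemma \ref{lem:monochrome} for the bounded-state model) to assume that for each vertex in $\{s,t,v_1,\ldots,v_n\}$ all taxa beneath it get a single state; since the transformation there does not increase the number of states, this is legitimate in the $3$-state model. Then $f$ restricted to $\{v_1,\ldots,v_n\}$ is a vertex coloring of $G$ with at most $3$ colors, and by the same counting as before the contribution to $l_f(\T_E) - l_f(\T_V)$ from the edge-clade subtrees is $M\cdot k + O(|E|)$, where $k$ is the number of \emph{bichromatic} edges under this coloring. The crucial new point is: with three colors available, $k$ can be as large as $|E| = 3|V|/2$ (every edge of a cubic non-bipartite graph can be made bichromatic — indeed properly $3$-colorable by Brooks' theorem, cf. the argument in Corollary \ref{cor:fiveishard}). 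So the three-state reduction cannot target \textsc{Max Cut}; instead I would have it target the problem of maximizing the number of bichromatic edges over $3$-colorings, which for cubic non-bipartite $G$ is trivially $|E|$ — useless. Therefore the edge gadget has to be rebuilt so that only a \emph{two}-coloring of the $v_i$ is rewarded: e.g. fix a third ``reference'' state and attach, beneath each $e_i$, besides the $2M$ cherries mixing the $v_a$- and $v_b$-clades, also $\Theta(M)$ taxa of a dedicated third color — arranged so that $e_i$ is forced to that third color in $\T_E$ unless both endpoints agree, making a third color on any $v_i$ strictly costly. Alternatively, and cleaner, I would add a ``palette-limiting'' gadget: a clade in $\T_V$ (a cherry or small subtree) and a matching clade in $\T_E$ such that using a third color anywhere among the $v_i$ forces an extra mutation in $\T_V$ without any compensating gain in $\T_E$; then every optimal $f$ uses at most two colors on $\{v_1,\ldots,v_n\}$ and we are back to \textsc{Cubic Max Cut} as in Lemma \ref{lem:boundedhard}.

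Concretely, the key steps in order: (1) build $\T_V, \T_E$ as in Lemma \ref{lem:boundedhard}, add the palette-limiting gadget, and fix $M^{*}$ (even) and $M$ with $M^{*} \gg M \gg \mathrm{poly}(|V|)$ so that (a) all optimal $f$ have $l_f(\T_V)<l_f(\T_E)$ and (b) $M$ dominates all $O(|E|)$ additive error terms; (2) apply Observation \ref{obs:boundedmonochrome} to clean up $f$; (3) argue via the gadget that an optimal $f$ uses at most two states on $\{v_1,\ldots,v_n\}$ (a third state costs $\geq 1$ in $\T_V$ and gains $0$ in $\T_E$); (4) with only two colors on the $v_i$, recover exactly the analysis of Lemma \ref{lem:boundedhard}: $l_f(\T_E)-l_f(\T_V) = M\cdot k + M^{*}/2 + O(|E|)$ with $k \leq MAXCUT(G)$, and the error terms rounded away by the choice of $M$, yielding $MAXCUT(G) = round\big( (d_{MP}^{3}(\T_V,\T_E) - M^{*}/2)/M \big)$ and a polynomial-time extraction of the cut from any optimal $f$ via Fitch on $\T_V$. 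The main obstacle is step (3): designing a gadget that penalizes a third color on the $v_i$-clades while genuinely not perturbing the edge-counting mechanism, and verifying through a careful Fitch analysis (bottom-up plus tie-breaking, exactly as in Theorem \ref{thm:mainhard}) that no clever placement of the third state — on $s$, $t$, the $\rho$/$\rho^{*}$/$u$ region, or inside an edge clade — beats the two-color optimum; the buffers $M, M^{*}$ must be re-tuned to swallow whatever small slack this analysis exposes.
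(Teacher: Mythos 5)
You correctly diagnose the central obstacle — on a cubic non-bipartite graph every edge can be made bichromatic with three colors (Brooks), so the two-state \textsc{Max Cut} reduction does not transfer verbatim — but your proposed fix is where the gap lies. The ``palette-limiting gadget'' that is supposed to force an optimal character to use only two colors on $\{v_1,\ldots,v_n\}$ is never actually constructed, and the version you sketch cannot work quantitatively: you propose that a third color on the $v_i$-clades should ``cost $\geq 1$ in $\T_V$ and gain $0$ in $\T_E$,'' but the whole point of the edge gadgets is that each additional bichromatic edge contributes roughly $M$ extra mutations in $\T_E$. Since a third color can raise the number of bichromatic edges from $MAXCUT(G)$ all the way to $|E|$, it buys up to $(|E|-MAXCUT(G))\cdot M$ in $l_f(\T_E)-l_f(\T_V)$, which swamps any $O(1)$ (or even $O(|V|)$) penalty a small clade in $\T_V$ can impose. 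A gadget strong enough to deter this would have to carry weight $\Omega(M\cdot|E|)$ and would need its own careful Fitch analysis; you flag this yourself as ``the main obstacle,'' which is an acknowledgement that the key step is missing rather than proved.

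The paper sidesteps the issue entirely by changing the source problem instead of the palette: it reduces from $3$-colorability of $4$-regular graphs (NP-hard), encoded as the question whether $BICHROM(G)$ — the maximum number of bichromatic edges over all $\le 3$-colorings — equals $|E|$. With that target, maximizing bichromatic edges with three colors is exactly the hard question, so the construction of Lemma \ref{lem:boundedhard} goes through essentially unchanged (with $4M$ taxa per $v_i$-clade and the constants re-derived), and no new gadget is needed. If you want to salvage your route, you would either have to build and verify the heavy palette-limiting gadget, or — more simply — adopt the paper's move of switching to a source problem that is already hard under an unrestricted $3$-coloring.
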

\begin{proof}
We note that determining whether a $4$-regular
graph $G=(V,E)$ is $3$-colorable, is NP-hard \citep{dailey1980uniqueness}. (The regularity restriction is not strictly essential but again makes the analysis cleaner). Let $G$ be a 4-regular graph. Let $BICHROM(G)$ be the maximum number of
bichromatic edges possible, ranging over all (proper or non-proper) colorings of $V$ with
at most $3$ colors. Observe that $G$ is $3$-colorable if and only if $BICHROM(G)=|E|$.
The following reduction shows that $BICHROM(G)$ can be computed using an oracle
for $d_{MP}^{3}(\T_1,\T_2)$, from which hardness follows. The reduction differs only from that used in
Lemma \ref{lem:boundedhard} in the choice of  $M$ and $M^{*}$.

Clearly, $|E| = 2|V|$. Let $\{0,1, 2\}$ be the set of $3$ character states. During this proof we will write ``any character'' as shorthand for ``any character with at most $3$ states''.

The high-level idea is to construct two trees $\T_1$ and $\T_2$, henceforth
referred to as $\T_V$ and $\T_E$, where $\T_V$ encodes the vertices and $\T_E$ the
edges of $G$. In $\T_V$ the character states $\{0,1,2\}$ will be used to indicate the color assigned to each vertex. The mutations induced in $\T_E$ will
be used to count the number of bichromatic edges induced by the coloring. 

Let $V = \{v_1, \ldots, v_n\}$ and $E = \{e_1, \ldots, e_m\}$. Throughout the reduction we will utilize two large numbers,
$M$ and an even number $M^{*}$, such that $n << M << M^{*}$ but such that
both are still at most $poly(n)$. In due course we will explain
how these numbers are calculated.  Both trees will have
$4|V|M + M^{*}$ taxa.

To construct $\T_V$ we first introduce vertices $\rho^{*}, \rho, s, t, u$ and $\{v_1, \ldots, v_n\}$. We add edges $\{\rho^{*},\rho\}$, $\{\rho^{*},u\}$, $\{u,s\}$, $\{u,t\}$ and
$\{\rho, v_i\}$ for $1 \leq i \leq n$. We connect each $v_i$ to $4M$ taxa. We connect
$s$ to $M^{*}/2$ taxa and $t$ to $M^{*}/2$ taxa.
%See Figure \ref{fig:tv}.

To construct $\T_E$ we introduce vertices $\rho^{*}, \rho, u$ and $\{e_1, \ldots, e_m\}$.
We add edges $\{\rho^{*},\rho\}$, $\{\rho^{*},u\}$ and
$\{\rho, e_i\}$ for $1 \leq i \leq m$. To $u$ we connect the $M^{*}$ taxa that were
connected to $s$ and $t$ in $\T_V$. Next we introduce $mM$ vertices $e_{i,j}$ for
$i \in \{1,\ldots,m\}$ and $j \in \{1,\ldots,M\}$. We connect each vertex $e_i$ to
all the $e_{i,j}$, for $j \in \{1,\ldots,M\}$. To each $e_{i,j}$ we connect two taxa, representing the endpoints (of the $j$th copy) of edge $e_i$. To determine
which two taxa these are, suppose in $G$ the edge $e_i$ is connected to vertices $v_a$ and
$v_b$. Then one of the two taxa is taken from the clade of taxa we connected to $v_a$ in $\T_V$, and the other from the clade of taxa beneath $v_b$ in $\T_V$. The exact mapping
chosen does not matter.
%See Figure \ref{fig:te}.
This completes the construction.

For $x \in \mathbb{R}$ let $round(x)$ be $x$ rounded to the nearest integer. (By
construction we will actually only use values of $x$ that are at most 1/3 above or below the nearest integer; this will become clearer later).We shall prove the following:
\begin{equation}
\label{eq:round2}
BICHROM(G) = round \bigg ( \frac{ d_{MP}^{3}(\T_V,\T_E) - M^{*}/2 }{M} \bigg )
\end{equation}
Moreover, we will show how any optimal character $f$ can
be transformed in polynomial-time into a coloring that has $BICHROM(G)$ edges. 

Note that in $\T_V$ the subtree rooted at $u$ is a refinement of the subtree rooted at $u$ in $\T_E$. This means that, for any character $f$, a (crude) upper bound on $l_f(\T_V) - l_f(\T_E)$
is $4|V|M$. To see this, observe that for any character the number of mutations incurred in $\T_V$ in the
subtree rooted at $u$, is less than or equal to the number of mutations incurred in the corresponding subtree of $\T_E$.
Therefore mutations in this subtree can never contribute to an increase in $l_f(\T_V) - l_f(\T_E)$. 
Hence, an upper bound on $l_f(\T_V) - l_f(\T_E)$ can be
achieved by maximizing the number of mutations that occur in $\T_V$ on edges that are not in this subtree, and minimizing the number of mutations that occur in $\T_E$ on edges that are not in this subtree. A (trivial) lower bound on the latter is 0, while a (trivial) upper bound on the former is $4|V|M$ (e.g. by assigning the same state to all internal nodes of $\T_V$).

On the other hand, consider a character $f$ such that the $M^{*}/2$ taxa
underneath $s$ in $\T_V$ are allocated (wlog) state 0, and all other taxa are allocated (wlog) state 1. Then
$l_f(\T_V)=1$ and $l_f(\T_E) \geq M^{*}/2$.  For this reason we choose $M^{*}$ such that
\[
M^{*}/2 - 1 > 4|V|M.
\]
After choosing $M^{*}$ this way we know that $d_{MP}^{3}(\T_V,\T_E) \geq  (M^{*}/2)- 1$ and, more importantly, that for every optimal character $f$, $l_f(\T_V) < l_f(\T_E)$.

From Observation \ref{obs:boundedmonochrome} we can therefore assume that, for each vertex of $\T_V$ in
the set $\{s,t,v_1, \ldots, v_n\}$, all the taxa beneath the vertex are allocated the same state by $f$. (The same observation tells us that it is safe to assume that
all the taxa in the $s$ clade have the same state, and all the taxa in the
$t$ clade have the same state, and that these two states are distinct.)

Such a character naturally induces a coloring of the vertices. The core observation is that if an edge $e_i$ is bichromatic (i.e. the colors at its endpoints are different) then $f$ will induce at least $M$ mutations in the subtree of
$\T_E$ rooted at $e_i$. On the other hand, an edge $e_i$ is monochromatic this will induce 0 mutations in the subtree rooted at $e_i$.

More formally, suppose $f$ induces $k$ bichromatic edges. The parameter $k$ does not, in itself,
give us enough information to \emph{exactly} determine $l_f(\T_V)$ and $l_f(\T_E)$, but we can
get close enough. Counting crudely,
\[
0 \leq l_f(\T_V) \leq |V|,
\]
where the upper bound of $|V|$ can be obtained by applying Fitch (and observing that, by the earlier assumption, the taxa in the $s$ clade have a different state to the taxa in the $t$ clade). Also,
\[
M \cdot k + M^{*}/2 \leq l_f(\T_E) \leq M \cdot k + |E| + 2 + M^{*}/2.
\]
(The $|E|+2$ on the right-hand side of the above expression is an upper bound on the number of mutations
incurred on the $|E|$ edges leaving $\rho$ and the 2 edges leaving $\rho^{*}$.) From this it follows that
\[
M \cdot k + M^{*}/2 - |V| \leq l_f(\T_E) - l_f(\T_V) \leq M \cdot k + |E| + 2 + M^{*}/2.
\]
Assuming $M$ has been chosen such that $M > |V| + |E| + 2$, we observe
that $l_f(\T_E) - l_f(\T_V)$ -- and thus also $d_{MP}^{3}(\T_V,\T_E)$ --
will be maximized by selecting $k$ as large as possible i.e. by making as many edges
as possible bichromatic. So,
\[
BICHROM(G) - \frac{|V|}{M} \leq \frac{ d_{MP}^{3}(\T_V,\T_E) - M^{*}/2 }{M} \leq  BICHROM(G) + \frac{|E|+2}{M}
\]
Taking $M > 3(|V|+|E|+2)$ is therefore sufficient. This completes the reduction.
\end{proof}

\bibliographystyle{plainnat}     
\bibliography{bibliographyMPdistance}

\end{document}